\newcommand{\N}{{\mathbb{N}}}
\newcommand{\R}{{\mathbb{R}}}
\begin{document}

\title{Emergent Properties in a V1-Inspired Network of Hodgkin-Huxley Neurons
}


\author{M. Maama         \and
        B. Ambrosio \and M.A. Aziz-Alaoui \and S.~M. Mintchev
}


\institute{M. Maama, B. Ambrosio, M.A. Aziz-Alaoui \at
                  Normandie Univ, UNIHAVRE, LMAH, FR-CNRS-3335, ISCN, 76600 Le Havre, France\\
                  S.~M. Mintchev \at
                  Department of Mathematics, The Cooper Union,
 New York, NY 10003, USA\\
\email{maama.mohamed@gmail.com, benjamin.ambrosio@univ-lehavre.fr,\\ aziz.alaoui@univ-lehavre.fr, mintchev@cooper.edu}\\                         
}

\date{Received: date / Accepted: date}

\maketitle

\begin{abstract}
This article is devoted to the theoretical and numerical analysis of a network of excitatory and inhibitory neurons of Hodgkin-Huxley (HH) type, for which the topology is inspired by that of a single local layer of visual cortex V1. Our model is related to the recent CSY model and therefore differs from other classical models in the field. It combines a driven stochastic drive -- which may be interpreted as an ambient drive for each neuron -- with recurrent inputs resulting from the network activity. After a review of the dynamics of a single HH equation for both the deterministic and the stochastically driven case, we proceed to an analysis of the network. This analysis reveals emergent properties of the system such as partial synchronization and synchronization (defined here as a state of the network for which all the neurons spike within a short interval of time), correlation between excitatory and inhibitory conductances, and oscillations in the gamma-band frequency. The collective behavior enumerated herein is observed when the input-amplitude parameter $S^{EE}$ measuring excitatory-to-excitatory coupling (recurrent excitation) increases to within a certain range. Of note, our work indicates a distinct mechanism for obtaining the emergent properties, some of which have been classically observed. As a consequence our article contributes to the understanding of how assemblies of inhibitory and excitatory cells interact together to produce rhythms in the network. It also aims to bring problems from neuroscience to the realm of mathematics, where they can be analyzed rigorously.  
\keywords{Networks \and Synchronization \and Excitability \and Hodgkin-Huxley \and Emergent Properties \and Visual Cortex \and Dynamical Systems \and Bifurcation \and Stochastics}
\end{abstract}

\section{Introduction}

\subsection{Background and motivation}
This paper deals with an analysis of the dynamics of a network of Hodgkin-Huxley (HH) ordinary differential equations (ODEs). 
We briefly recall that the HH equations were introduced in 1952, see \cite{HH}, in order to describe the formation and propagation of action potentials along the giant squid nerve axon.
They have served as a basis for numerous other mathematical models in neuroscience, e.g., the FitzgHugh-Nagumo (FHN) \cite{Fit-1961,Nagumo}, the Moris-Lecar \cite{ML-1981}, and the Hindmarsh-Rose model \cite{HR}, to cite only a few. For a general textbook on mathematical models related to the HH model, we refer to \cite{BookBor2017,BookCronin,BookErm2010,BookIzh2006}. All of the above-mentioned models (used to describe the electrical activity of a single neuron) have subsequently been implemented in various studies of neural ensembles, by way of networks of ODE's; related meanfield models have also been introduced~\cite{BookBor2017,Bru-2000,Cha-2010,BookDay2001,BookErm2010,BookGer2014,BookKan2013,Wil-1972}.

From a physiological point of view, the modeling of an assembly of neurons in the brain with networks of ODEs may address issues such as how neurons in primary sensory areas produce their responses, how cortical maps process information, etc... 
Network models of ODEs can be used to address one of the typical questions prevalent in the physiological literature: To what extent are the responses of neuronal ensembles (and the subsequent perception of information) shaped by feed-forward vs. recurrent circuitry? This question is a current challenge and a subject of debate in neuroscience; a typical example is provided by the neurons in layer IV of the primary visual cortex, which receive sparse afferent inputs from the lateral geniculate nucleus (LGN) concurrent with abundant inputs from other cortical neurons, see \cite{BookKan2013}.

Networks of ODEs have attracted a lot of attention from the applied mathematics and physics communities. We refer the reader to \cite{BookBarabasi,BookBarrat2008,BookNewman2010networks} for various standard concepts developed generally for the theory of networks with real-world topologies. Topics of interest regarding dynamical networks classically include synchrnonization, the existence of an attractor, pattern formation, bifurcation phenomena, wave propagations, etc... Our own contributions to this subject reside both in the context of networks of ODEs as well as in Reaction-Diffusion PDEs inspired by Neuroscience \cite{Ambrosio2,Ambrosio3,Ambrosio8,Ambrosio7}. The present paper represents a special thread in this line of work, as it is devoted specifically to studying emergent phenomena such as synchronization and rhythms in a random network with V1-inspired topology.

Admittedly, synchronization has been studied extensively in various works dedicated to the theoretical analysis of network dynamics. Alternative definitions of this emergent phenomenon indicate various approaches to studying it: The most mathematically tractable version of this dynamical mode is {\it complete synchronization}, wherein all nodes of the network evolve identically. A particularly striking physical example of complete synchronization arises in networks of metronomes: random initial conditions notwithstanding, when a few of these devices are placed on a light wooden board that is set atop a freely moving chassis, their oscillations will synchronize asymptotically. Biology abounds with further examples of synchronization. For instance, we mention the complete synchronization of the flashes of fireflies. One can also think about synchronization of heart pacemaker cells, or even the synchronization of applause in an assembly. Further  examples are given in \cite{Ala-2006,OKe2017,Pan2002,Str1993} as well as the references cited therein.

Other types of synchronization have been introduced to better describe different natural phenomena. One such of specific interest in Neuroscience is the so-called {\it cluster synchronization}, observed when different subgroups of the total neural population synchronize their activity differently, see for example \cite{Belykh3,be-mw_09,Pec2014}. 

In the present paper we focus further on identifying numerically a region of parameters for which the network of ODEs exhibits dynamics relevant to mathematical neuroscience. Accordingly, our setting features an ensemble of HH systems for which the coupling relies on the following assumptions:
\begin{itemize}
\item the feed-forward inputs are represented by a stochastic drive;
\item the recurrent inputs are incorporated by way of excitatory and inhibitory synaptic coupling terms. 
\end{itemize}
As discussed above, the basic idea of leveraging the interplay between feed-forward and recurrent connection structures is critical to the relevance of such models to neuroscience applications; among other information coming from real physical data, this has been used successfully in a series of recent papers aiming to describe the electrical activity of the visual cortex V1, see \cite{Char-2014,Char-2016,Char-2018}. The latter three articles cited here have influenced our approach significantly and deserve a brief overview, since various modeling aspects and assumptions therein play an important role in our analysis of the network.

In \cite{Char-2014}, see also \cite{Ran2013} (and further \cite{Sun-2009,Zho-2013} and references therein cited for related complementary modeling of V1), the authors considered a network featuring a mixture of a few hundred Integrate-and-Fire neurons of excitatory (E) and inhibitory (I) type, connected via a sparse network topology inspired by real data from V1. This numerical study documents and analyzes emergent spiking behavior in local neuronal populations. Emphasis is given to the cluster synchronization phenomenon, also referred as {\it partial synchronization}, which gives a name to the tendency (of both large and small) random groups of neurons to coordinate spontaneously their spiking activity. The occurrence of a notable amount of spikes is referred as an {\it event}. One of the conclusions of this work is that driving the system with a relatively high feed-forward input imposes a certain regularity on its inter-event times, producing a rhythm consistent with broad-band gamma oscillations.

In \cite{Char-2016}, the authors developed a sophisticated sequel (termed CSY) to the mathematical model of \cite{Char-2014}, in order to build a more biologically realistic representation of a layer $4C\alpha$ of the macaque monkey's visual cortex. In particular, several new features should be noted: data-driven inputs from LGN, feedback inputs from layer 6, and specific equations for these neurons. Their model is presented as the first realistic model that successfully derives orientation selectivity within the macaque primary visual cortex despite incorporating the sparseness of magnocellular LGN inputs. To highlight further a major contribution of this work, we note that with the parameters of the model first tuned to the background regime, orientation selectivity arises by way of only a light variation of the inputs: the total energy contained in the inputs is very low in comparison to the energy coming from recurrent connections, yet these slight changes in the inputs have the outsize effect of inducing orientation selectivity.

Finally in \cite{Char-2018}, the authors further analyzed the CSY model    in order to study mechanisms for cortical gamma-band activity in the cerebral cortex and thereby identify neurobiological factors that influence such activity. At the end of this article the authors emphasized the role played by the so called REI (recurrent excitation inhibition) mechanism in the clustering phenomenon. The authors argue that this mechanism is at play in any local population of E and I neurons that have many recurrent connections. The explanation is the following:  occurrence of grouped spikes in a population of spike trains is typically precipitated by the crossing of threshold by one or more E neurons. This may or may not lead to more substantial excitatory firing through recurrent excitation, but the excitation that is thereby produced raises the membrane potentials of E and I neurons alike. In turn, this may lead possibly to the firing of more spikes and the generation of a partially synchronous event. Since I-cells are densely connected to both E and I-cells, firing in I-cells eventually brings the excitation down and furthermore leads to hyper-polarization of membrane potentials for a large fraction of the local population; this induces a period of quiescence. From there, the decay of inhibition in conjunction with the continued excitatory drive return the network to a state where, through stimulation or by chance, another clustering of spikes is initiated.

Of course, the numerical analysis of networks of E and I neurons is far from new. The above-mentioned three recent contributions notwithstanding, we note also the major  classical references from the field so as to place in context the contribution of the present paper. A well known and often cited framework that plays a role in synchronization and the emergence of $\gamma$-rhythms in such networks is the so-called PING mechanism (Pyramidal-Interneuronal Network Gamma). \cite{Bor-2005,Erm-1998,Whi-2000}; see also~\cite{Tra-1999-2}, as well as \cite{BookBor2017} for a broad overview. Of note, models such as the one described in \cite{Tra-1994} are interesting but not directly relevant to our work, since they also include modeling of axons (spatial dimension) while our simplified model does not focus on this aspect. 

At first glance the PING and REI mechanisms may seem to describe the same phenomenon (significant time-correlated rise and decrease of neuronal spiking activity -- by way of initial excitation followed by concomitant inhibition, in cyclical fashion).
However, in~\cite{Char-2014,Char-2016,Char-2018} the authors already present a compelling contrast between the two models. Namely, commensurate with the goal of reproducing realistic outputs in V1, the REI model targets milder phenomenology (a small part of neurons included in the events, and the subpopulations are random...) in favor of strong synchronization. In particular, the events discerned in REI do not result from alternation between excitatory and inhibitory phases, the latter dynamic being a significant theme in the PING model.

In certain parameter regimes, the REI mechanism is supported by the CSY mathematical model. The modeling choices associated with the latter that are relevant for our own model include the strategic choice of network topology -- featuring neither sparse, nor complete graph connection structure -- as well as specifics in the equations governing the synaptic connections themselves. At the same time, the model considered herein is separate and distinct from CSY, most notably because of the different neuronal dynamics governing individual cells -- the CSY model employs LIF for each cell, while ours features HH neurons. The HH model has a richer structure than IF models, but it is more difficult to track theoretically and numerically. For IF models the reset and resting delay are set manually whereas in HH they are intrinsic properties of the dynamics. There are also further distinctions to our work: we employ Dirac impulses to model both the drive to the network and the signals between interacting units -- a feature incorporated in~\cite{Char-2014} but subsequently modified with the introduction of CSY; moreover, our analysis leverages the network outputs in order to extract a single parameter ($S^{EE}$) that plays a crucial role in the onset of emergent behaviors. The ensuing parameter study indicates a path from the homogeneous stochastic state to a partial synchronization regime, global synchronization, the emergence of a gamma rhythm, and $g_E$ -- $g_I$ correlation. N.B.: in the global synchronized regime, we observe that I-population events completely encapsulate those by the E-population, i.e., the I-population spiking starts before and ends at the same time or shortly after the end of the $E$-spiking. This is a novel phenomenology that has to date not been observed in PING or REI. The work herein carries the potential to serve as a bridge to further rigorous mathematical analysis. Most results in this setting, including most of what we discuss in the present paper are of a computational nature. Nevertheless, to move further in the direction of rigor, our final discussion proposes three 2-dimensional models deduced from our work that can be studied mathematically.




With this background in mind, our goal is to proceed to the analysis of a model of a few hundred HH neurons driven individually by a stochastic input and connected with a topology inspired by V1 according to~\cite{Char-2014,Char-2016,Char-2018} but with the aforementioned distinctions.
Proceeding hierarchically, we review and revisit some known facts on the dynamical characteristics of a single HH ODE model in Sec.~\ref{Sec:single_HH}. We focus specifically on the analysis for a range of the parameter $I$ in which HH transitions from steady state to oscillatory behavior. This detail is critical to us going forward: since the current balance in individual cells resulting from inhibitory and excitatory inputs is achieved in this parameter regime, this balance replaces the parameter $I$ in the ensuing treatment.
In Sec.~\ref{Sec:single_HH-sto} we replace the parameter $I$ by a stochastic Poisson input corresponding to a feed-forward drive, and we analyze the switch between quiescent and excitatory activity as the drive intensity is increased.
Finally, we investigate the dynamics of the entire network in Sec.~\ref{Sec:network_analysis}; we illustrate therein some numerical simulations of the network and discuss the emergence of organized behavior subject to the variation of appropriate parameters.
We offer some concluding remarks and perspective regarding future work in Sec.~\ref{Sec:concl-persp}.
To set up an appropriate emphasis for our treatment, we first complete our introduction in Sec.~\ref{Subsec:general_network} by way of describing in greater detail the network of HH equations to be considered in Sec.~\ref{Sec:network_analysis}.

\subsection{The HH ODE network model}\label{Subsec:general_network}   
  The single unit of our network is the classical following HH  ODE system:
\begin{equation}\label{eq:HH-single}
\left\{
\begin{array}{rcl}
CV_t  &=&I+\overline{g}_{Na} m^3 h(E_{Na}-V)+\overline{g}_{K} n^4(E_{K}-V)+\overline{g}_{L}(E_{L}-V)\\

n_t  &=& \alpha_{n}(V)(1-n)-\beta_{n}(V)n\\
m_t  &=& \alpha_{m}(V)(1-m)-\beta_{m}(V)m\\
h_t  &=& \alpha_{h}(V)(1-h)-\beta_{h}(V)h\\
\end{array}
\right.
\end{equation}
with
\begin{equation}
\label{eq:param}
\begin{array}{rcl}
 \alpha_{n}(V)=0.01 \frac{-V-55}{\exp{(-5.5-0.1V)}-1}& & \beta_{n}(V) =0.125\exp(-(V+65)/80)\\
 \alpha_{m}(V)=0.1 \frac{-V-40}{\exp{(-4-0.1V)}-1}& & \beta_{m}(V) =4 \exp(-(V+65)/18)\\
\alpha_{h}(V)= 0.07\exp(-(V+65)/20) & & \beta_{h}(V) = \frac{1}{ 1+\exp( -0.1V-3.5) )}
\end{array}
\end{equation}
%
%
and
\[C=1 \; E_{K}=-77 \;~~E_{Na}=50, \ ~~E_{L}=-54.387 \]
\[\overline{g}_{K}=36 \;  \overline{g}_{Na}=120 \;\overline{g}_{L}=0.3 \]
In equation \eqref{eq:HH-single}, $V$ represents the membrane potential and $n,m,h$ are gating variables modeling the opening and the closing of ionic channels; $n$ is related to potassium fluxes, $m$ and $h$ with sodium fluxes. Hodgkin and Huxley (\cite{HH}) established these equations by a  series of voltage measurements thanks to the  voltage clamp technique, which allowed them to maintain constant the membrane potential. 
 Thanks to this approach, they were able to fit the functional parameters of their model of four ODEs with their experiments. Basically, the model is obtained by considering the cell as an 
 electrical circuit, and writing the Kirchoff law between internal and external currents. Then, the model assumes that the membrane acts as a capacitor. Ionic currents result from ionic channels acting
 as variable voltage dependent resistances. The model takes into account three ionic currents: potassium ($K^+$), sodium ($Na^+$) and leakage (mainly chlorure, $Cl^-$). 
In equation \ref{eq:HH-single}, subscript $t$ stands for the derivative $\frac{d}{dt}$, $I$ is the external membrane current, $C$ is the membrane capacitance, $ \overline{g}_{i},\  E_i, i\in\{K,Na,L\}$
are respectively the maximal conductances and the
 Nernst equilibrium potentials.  We refer to  \cite{BookBor2017,BookCronin,BookErm2010,BookIzh2006,HH} for textbooks presenting this very classical model.  The value of parameters is taken from \cite{BookDay2001,Sun-2009}. The final goal of this article is to study networks of HH equations, 
in which each neuron receives excitatory and inhibitory inputs from its presynaptic neurons. Hence, adapting \cite{Char-2014}, we consider a network of $N=500$ HH-neurons with additional entries for excitatory and inhibitory presynaptic inputs which leads to the following equation for the network:

\begin{equation}\label{eq:HH-Network}
\left\{
\begin{array}{rcl}
CV_{it}  &=&\overline{g}_{Na} m^3 h(E_{Na}-V_i)+\overline{g}_{K} n^4(E_{K}-V_i)+\overline{g}_{L}(E_{L}-V_i)\\
& &+g_{E}(E_{E}-V_i)+g_{I}(E_{I}-V_i), i \in\{1,...,N\}\\
n_{it}  &=& \alpha_{n}(V_i)(1-n_i)-\beta_{n}(V_i)n_i\\
m_{it} &=& \alpha_{m}(V_i)(1-m_i)-\beta_{m}(V_i)m_i\\
h_{it} &=& \alpha_{h}(V_i)(1-h_i)-\beta_{h}(V_i)h_i\\
\tau_E g_{Eit}  &=& -g_{Ei}+S^{dr}\sum_{s\in \mathcal{D}(i)}\delta(t-s)+S^{QE}\sum_{j\in \Gamma_E(i),s \in \mathcal{N}(j)}\delta(t-s)\\
\tau_I g_{Iit}  &=& -g_{Ii}+S^{QI}\sum_{j\in \Gamma_I(i),s \in \mathcal{N}(j)}\delta(t-s)\\
\end{array}
\right.
\end{equation}
where $Q$ is equal to $E$ for $E$-neurons and $I$ for $I$-neurons.\\
This means that for each node in the network, we add two equations to the original HH ODE system. Those are the equations which contain coupling terms inputs coming from:
\begin{enumerate}
\item the network (presynaptic E and I-neurons)
\item a stochastic input drive,  only for $g_E$.
\end{enumerate}
These two variables stand here for gating variables and are added as such to the first equation.  
Let us explicit some notations and characteristics of the network and also set up the values of fixed parameters.\\

\textbf{Network inputs}\\
 When the variable $V_j$ of a given neuron $j$ crosses a threshold $Tr$ upward,  a kick is generated in all its postsynaptic neurons. 
In equation \eqref{eq:HH-Network}, the kicks coming from $E$ neurons are represented mathematically by the Dirac term  $\delta(t-s)$ in the $g_{Ei}$  equation. Accordingly, in equation \eqref{eq:HH-Network}, $\Gamma_E(i)$ denotes the set of the presynaptic  E-neurons of the neuron $i$. The notation $\mathcal{N}(j)$ refers to the set of times at which the neuron  $j$  crosses the threshold $Tr$ upwards. Similar notations hold for the  $g_{Ii}$ equation.  Each $E$ neuron receives kicks with coupling strength  $S^{EE}$ from presynaptic $E$-neurons, and coupling strength $S^{EI}$ from presynaptic $I$-neurons. 
Each $I$ neuron receives kicks with coupling strength $S^{IE}$ from $E$-neurons,   and coupling strength  $S^{II}$ from $I$-neurons, see figure 1. We set also
\[E_E=0 \mbox{ and }E_I=-80\]
which makes kicks coming from presynaptic E-neurons to have a depolarizing effect on the membrane potential $V_i$ and kicks coming from presynaptic I-neurons to have hyperpolarizing effect on $V_i$.\\
The value of the threshold $Tr$ is set to $-10$. \\

\textbf{Poissonian Input} \\
Driving conductances with Poisonian inputs is found to be common in the literature. We refer for example to \cite{Char-2014,Char-2016,BookGer2014,Lin-2012}. Accordingly, a drive is added to the $g_{Ei}$ evolution equation as Poisson inputs of parameter $\rho_E=0.9$ for E-neurons and parameter $\rho_I=2.7$ for $I$ neurons.  Assuming that the time unit is ms, this implies that the mean value between two stochastic driven inputs is about $\frac{1}{0.9}$ ms for E-neurons and $\frac{1}{2.7}$ ms for I-neurons. Analog notations, as those used for the kicks coming from the network  are used for the stochastic input drive: $\mathcal{D}(i)$ refers to the set of times at which the neuron $i$ receives kicks corresponding to the input drive. Finally, note that the kicks have an amplitude given by the parameter $\frac{S^{dr}}{\tau_E}=0.04/2=0.02$. \\

\textbf{Network Topology}\\
We consider a network of $N=500$ neurons with $Ne=375$  E-neurons and $Ni=125$ I-neurons. The network is constructed as follows: for each neuron,
we pick-up randomly a fixed number of $E$ and $I$ presynaptic neurons. 
These fixed number only depend on the nature of the neuron. They are important fixed parameters of our model. More explicitly: for each $E$ neuron, we pickup randomly $N_{ee}=50$ presynaptic $E$ neurons, for each $E$ neuron, we pickup randomly $N_{ei}=25$ presynaptic $I$ neurons, for each $I$ neuron, we pickup randomly $N_{ie}=190$ presynaptic $E$ neurons, for each $I$ neuron, we pickup randomly $N_{ii}=25$ presynaptic $I$ neurons. So, the network is an inhomegeneous random oriented graph, where the probability of connexion depends on the type of neurons. In our network, $I$ neurons are highly connected with presynaptic $E$ neurons.  The network topology is inspired by the ratio found in \cite{Char-2016}.\\

\textbf{Value of parameters}\\
The dynamical effects of parameters $S^{II},S^{EE},S^{IE},S^{EI}$ and $S^{dr}$ will be discussed in the following sections. Table 1 summarizes the values of fixed parameters.
\begin{table}

\begin{tabular}{|c|c|c|c|c|c|c|c|}
\hline
$N=500$&$N_e=375$&$N_i=125$&$V_E=0$&$V_I=-80$&$\tau_E=2$ & $\tau_I=3$\\
\hline
$N_{ee}=50$&$N_{ei}=25$&$N_{ie}=190$&$N_{ii}=25$ &$S^{dr}=0.04$&$\rho_E=0.9$&$\rho_I=2.7$ \\
\hline
\end{tabular}
\label{table:1}
\caption{This table summarizes the value of fixed parameters used in numerical simulations of the network equation \eqref{eq:HH-Network}.}
\end{table}
\begin{figure}
\label{fig:1}
\begin{center}
\includegraphics[scale=0.2]{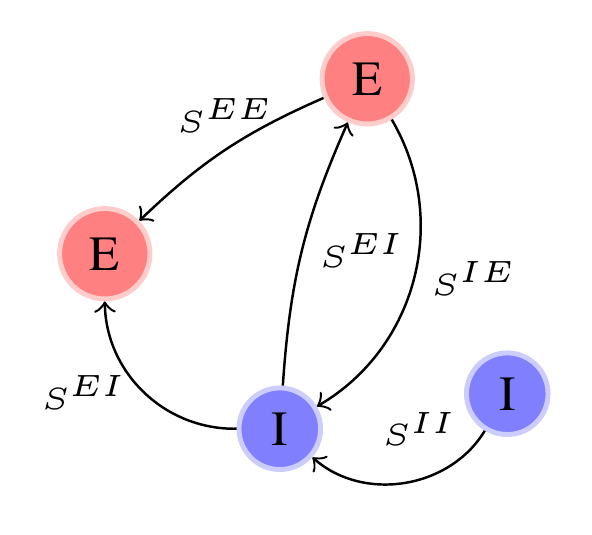}
\begin{quotation}
   \caption{Schematic representation of the coupling in the network.
   Each $E$ neuron receives kicks with coupling strength  $S^{EE}$ from $E$ neurons,  coupling strength   $S^{EI}$ from $I$ neurons.  
   Each $I$ neuron receives kicks with coupling strength  $S^{IE}$ from $E$ neurons,   coupling strength  $S^{II}$ from $I$ neurons.}
\end{quotation}
\end{center}
\end{figure}
\begin{figure}
\begin{center}
\includegraphics[height=3.5cm,width=7cm]{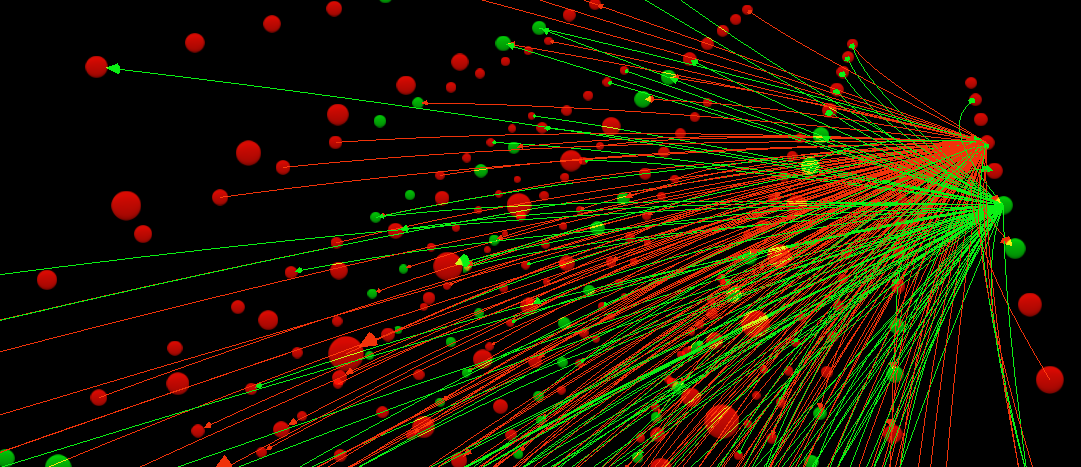}
   \caption{A three-dimensional representation shot of our network.    We  have represented the pre and postsynaptic connections of the E-neuron \#100 as well as the postsynaptic connections   of the I neuron \#400. Postsynaptic connections of E-neurons are represented in red while postsynaptic connections of I neurons are represented in green.}

\end{center}
\end{figure}

\section{Analysis of the HH equation}\label{Sec:single_HH}

\label{sec:1}
In this section, we recall and revisit some properties of solutions of equation \eqref{eq:HH-single} :
\begin{equation*}
\left\{
\begin{array}{rcl}
C\dfrac{dV}{dt}  &=&I+\overline{g}_{Na} m^3 h(E_{Na}-V)+\overline{g}_{K} n^4(E_{K}-V)+\overline{g}_{L}(E_{L}-V),\\

\dfrac{dn}{dt}  &=& \alpha_{n}(V)(1-n)-\beta_{n}(V)n,\\
\dfrac{dm}{dt}  &=& \alpha_{m}(V)(1-m)-\beta_{m}(V)m\\
\dfrac{dh}{dt}  &=& \alpha_{h}(V)(1-h)-\beta_{h}(V)h,\\
\end{array}
\right.
\end{equation*}
with the above values of parameters. A simple analysis shows that the following theorem holds.
\begin{theorem}
\label{th:posinv}
There exists $V_m,V_M \in \R$ such that the compact set:
\[\mathcal{K}=[V_m,V_M]\times [0,1]^3\]

is positively invariant for system \eqref{eq:HH-single}.
\end{theorem}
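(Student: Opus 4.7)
The plan is to establish invariance separately for the gating variables and for $V$, using the classical characterization: a closed set is positively invariant if the vector field on its boundary points inward (a consequence of Nagumo's theorem). For a product box $[V_m,V_M]\times[0,1]^3$ it suffices to check the sign of each component of the vector field on each of the eight faces.

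First, I would treat the three gating variables $n,m,h$ independently of $V$. Each satisfies an equation of the form $\dot x=\alpha_x(V)(1-x)-\beta_x(V)x$. The crucial preliminary observation is that $\alpha_n,\beta_n,\alpha_m,\beta_m,\alpha_h,\beta_h$ are nonnegative on all of $\R$: the factors $\beta_n,\beta_m,\alpha_h$ are exponentials and $\beta_h$ is a positive sigmoid, while $\alpha_n$ and $\alpha_m$ have the indeterminate form $0/0$ at $V=-55$ and $V=-40$ respectively, but on each side of these points numerator and denominator share the same sign (both vanish linearly there), so the functions extend continuously to strictly positive values. Once this is in hand, on the face $\{x=0\}$ one has $\dot x=\alpha_x(V)\geq 0$ and on the face $\{x=1\}$ one has $\dot x=-\beta_x(V)\leq 0$, so the unit cube $[0,1]^3$ is positively invariant regardless of $V$.

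Next I would show the existence of $V_m,V_M$ for the $V$-component, using the ionic-current decomposition. Rewrite
\begin{equation*}
C\dot V \;=\; I + g_{\mathrm{tot}}(m,n,h)\,\bigl(V_{\mathrm{rev}}(m,n,h)-V\bigr),
\end{equation*}
where
\begin{equation*}
g_{\mathrm{tot}} \;=\; \overline{g}_{Na}m^3h+\overline{g}_{K}n^4+\overline{g}_{L},\qquad
V_{\mathrm{rev}} \;=\; \frac{\overline{g}_{Na}m^3h\,E_{Na}+\overline{g}_{K}n^4\,E_{K}+\overline{g}_{L}E_{L}}{g_{\mathrm{tot}}}.
\end{equation*}
On $[0,1]^3$ the denominator satisfies $g_{\mathrm{tot}}\in[\overline{g}_L,\,\overline{g}_L+\overline{g}_K+\overline{g}_{Na}]$, hence is uniformly bounded away from $0$ and from $+\infty$; the reversal potential $V_{\mathrm{rev}}$ is a convex combination of $E_{Na},E_K,E_L$ and thus lies in the compact interval $[E_K,E_{Na}]=[-77,50]$. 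It follows that the affine-in-$V$ quantity $I+g_{\mathrm{tot}}(V_{\mathrm{rev}}-V)$ is strictly negative for $V$ larger than $V_M:=\max\{0,(I+(\overline{g}_L+\overline{g}_K+\overline{g}_{Na})\,E_{Na})/\overline{g}_L\}+1$ and strictly positive for $V$ smaller than $V_m:=\min\{0,(I+(\overline{g}_L+\overline{g}_K+\overline{g}_{Na})\,E_K)/\overline{g}_L\}-1$. So on the faces $\{V=V_M\}$ and $\{V=V_m\}$ of the box $[V_m,V_M]\times[0,1]^3$ the vector field points inward in the $V$-direction.

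Combining the two steps, the full vector field points inward (weakly) on every face of the compact box $\mathcal{K}=[V_m,V_M]\times[0,1]^3$, so by Nagumo's tangential criterion $\mathcal{K}$ is positively invariant. The main obstacle in this plan is bookkeeping rather than conceptual: one must verify positivity of the six voltage-dependent rate functions uniformly in $V$ (in particular the removable singularities of $\alpha_n$ and $\alpha_m$), and then derive explicit, valid bounds $V_m,V_M$ from crude sup/inf estimates on $g_{\mathrm{tot}}$ and $V_{\mathrm{rev}}$ — the numerical values are large and pessimistic, but existence is all the theorem asks for.
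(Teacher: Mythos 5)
Your proposal is correct and follows essentially the same route as the paper's own (much terser) proof: positivity of the rate functions $\alpha_\kappa,\beta_\kappa$ gives invariance of $[0,1]^3$, and the sign of $V_t$ for $V$ large positive or large negative yields the bounds $V_m,V_M$. Your additional details (handling the removable singularities of $\alpha_n,\alpha_m$, the rewriting via $g_{\mathrm{tot}}$ and $V_{\mathrm{rev}}$, and the Nagumo criterion) simply make explicit what the paper leaves as a sketch.
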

\begin{proof}
First note that $\alpha$'s and $\beta$'s functions are positive. This implies that $[0,1]^3$ is positively invariant for $n,m$ and $h$.  
Note then, that from the first equation, $V_t$ becomes negative for $V$ large enough. It becomes positive for $-V$ large enough. This implies the result. 
\end{proof}
The next proposition follows from simple computations.
\begin{proposition}
The stationary solutions of \eqref{eq:HH-single} satisfy:
\begin{equation}\label{eq:HH-single-sta}
\left\{
\begin{array}{rcl}
f(V) &=&0\\
n&=& \frac{\alpha_{n}(V)}{\alpha_n(V)+\beta_{n}(V)},\\
m&=& \frac{\alpha_{m}(V)}{\alpha_m(V)+\beta_{m}(V)},\\
n&=& \frac{\alpha_{m}(V)}{\alpha_m(V)+\beta_{m}(V)},\\
\end{array}
\right.
\end{equation}
with 
\[f(V)=I+\overline{g}_{Na} \big(g_m(V)\big)^3g_h(V)(E_{Na}-V)+\overline{g}_{K} \big((g_n(V)\big)^4(E_{K}-V)+\overline{g}_{L}(E_{L}-V)\]
and 
\[g_\kappa(V)=\frac{\alpha_{\kappa}(V)}{\alpha_\kappa(V)+\beta_{\kappa}(V)},\,\kappa \in\{n,m,h\}\]
Furthermore,
\[\lim_{V\rightarrow -\infty}f(V)=+\infty\,\,\lim_{V\rightarrow +\infty}f(V)=-\infty\]
\end{proposition}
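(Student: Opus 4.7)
The plan is to establish both claims by direct substitution and asymptotic bounding. For the first part, I would set every time derivative to zero in the single-cell HH system. The equations for $n$, $m$, $h$ then reduce to algebraic relations: for $\kappa \in \{n,m,h\}$ the identity $\alpha_\kappa(V)(1-\kappa)=\beta_\kappa(V)\kappa$ is solved by $\kappa=\alpha_\kappa(V)/(\alpha_\kappa(V)+\beta_\kappa(V))=g_\kappa(V)$, which is well defined on all of $\R$. Indeed, the apparent singularities of $\alpha_n$ at $V=-55$ and of $\alpha_m$ at $V=-40$ are removable (numerator and denominator both vanish to first order), and $\beta_\kappa(V)>0$ everywhere, so the denominator $\alpha_\kappa+\beta_\kappa$ stays strictly positive. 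Plugging these expressions into the voltage equation with $V_t=0$ reproduces $f(V)=0$ verbatim.

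For the limits, the essential observation is that $g_\kappa(V)\in[0,1]$ for every $V\in\R$ since $\alpha_\kappa(V),\beta_\kappa(V)\ge 0$; the gating-dependent prefactors $\overline{g}_{Na}(g_m(V))^3g_h(V)$ and $\overline{g}_K(g_n(V))^4$ are therefore nonnegative and uniformly bounded by $\overline{g}_{Na}$ and $\overline{g}_K$ respectively. I would then split into two cases according to the sign of the reversal-potential gaps. For $V<\min(E_K,E_{Na},E_L)=E_K=-77$, the three differences $E_{Na}-V$, $E_K-V$, $E_L-V$ are strictly positive, so the two ionic terms in $f(V)$ are nonnegative and
\[ f(V)\ \ge\ I+\overline{g}_L(E_L-V), \]
whose right-hand side tends to $+\infty$ as $V\to-\infty$, forcing $f(V)\to+\infty$. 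Symmetrically, for $V>\max(E_K,E_{Na},E_L)=E_{Na}=50$ all three gaps are strictly negative, the ionic terms are nonpositive, and
\[ f(V)\ \le\ I+\overline{g}_L(E_L-V), \]
whose right-hand side tends to $-\infty$ as $V\to+\infty$, giving $f(V)\to-\infty$.

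The argument is essentially mechanical, and I do not anticipate a genuine obstacle. The only points requiring any bookkeeping are the removable singularities of $\alpha_n,\alpha_m$ and the confirmation that each $g_\kappa$ actually takes values in $[0,1]$, both of which are already implicit in the invariance established by Theorem~\ref{th:posinv}. Observe also that the proof gives more than stated: combined with the continuity of $f$, the intermediate value theorem immediately yields the existence of at least one stationary solution for every $I\in\R$, which is what the subsequent bifurcation discussion in $I$ will rely on.
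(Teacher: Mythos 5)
Your argument is correct and is exactly the ``simple computations'' the paper alludes to without writing out: setting the gating equations to zero gives $\kappa=g_\kappa(V)$ (with the singularities of $\alpha_n,\alpha_m$ removable and $\alpha_\kappa+\beta_\kappa>0$), substitution yields $f(V)=0$, and the limits follow from $g_\kappa(V)\in[0,1]$ together with the sign of the reversal-potential gaps, bounding $f$ below (resp.\ above) by $I+\overline{g}_L(E_L-V)$ as $V\to-\infty$ (resp.\ $V\to+\infty$). Your closing remark that continuity plus the intermediate value theorem gives existence of a stationary solution for every $I$ is a correct bonus consistent with the paper's subsequent discussion.
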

Numerical simulations provide evidence that $f$ is decreasing. Figure 3 illustrates the case $I=0$, which gives a unique stationary solution with $V\simeq -65$. 

It is known, from numerical simulations \cite{Bal-2018} and references therein, that as $I$ increases trough an interval $[I_0-I_1]$ within in the region of interest here,  system \eqref{eq:HH-single} exhibits a cascade of bifurcations giving rise to unstable and stable limit cycles, with persistence of the stable stationary point.  At some  value $I_2$ the stationary point becomes finally unstable trough a subcritical Hopf bifurcation. Figure 4, taken from \cite{Bal-2018}  summarizes these facts.  

For the readers convenience, we have illustrated here some part of the phenomenon, with our set of parameters. In figure 5, we have drawn the bifurcation diagram for two initial conditions. Figure 5 shows the coexistence of stable limit-cycle and stable stationary solution for an interval containing $[6.4,7.5]$. 

Figure 6 illustrates this phenomenon with trajectories for $I=7$. 

Figures 7 and 8 illustrate the temporal evolution of the variables, and highlight a clear slow fast dynamic. 

Figure 9 illustrates phase space dynamics along  with nullcines.  These figures allow to provide both dynamical and biological interpretation. Over a period of time, we can observe distinct dynamic phases.  First, one can remark the proximity between $V$ and $m$.  We now go trough a more detailed description. We start at a time where $V$ is at its maximum, for example $t\simeq 18$. It corresponds roughly at a time at which $m$ (in green in figure 8) reaches also its maximum. A more attentive look at figure 9, bottom-left (BL), however, shows that $V$ reaches its maximum a little before $m$. Indeed, when $V$ is at its maximum, $m$ increases, $n$ increases, $h$ decreases (see figure 9). Those dynamics are fast. The variable $V$ decreases very fastly. During this fast decreasing phase of $V$, the dynamics of $n$ and $h$ change. At $t\simeq 20$, $V$ is at its minimum. So is $m$ (see figures 8 and 9 BL). Note the remarkable change in the dynamics for $m$ and $V$ at their minimum in figure $9$ BL. At this time, one can distinguish the beginning of a slow phase. The variables $V,m,h$ ($h $ appears in blue in figure 8) increase (so that the sodium conductance increase), while $n$ (in red in figure 8)  decreases (the  potassium conductance decrease). Dynamics are still slow. Around $t=27$, dynamics of $n$ and $h$ change again:  $n$ starts to decrease while $h$ starts to increase. Dynamics are still slow until $t\simeq 34$. At this time, we observe that $V$ enters a fast dynamical phase, and so $n,m,h$. We observe a drastic increase in $V$ ( which means that the sodium flux dominates, recall that the $E_{Na}=50$ ).Variables $m$ and $n$ increase while $h$ decrease. During this fast phase, $V$ reaches its maximum again, and a full loop is completed. It is worth noting that $V$ have a fast increase followed by a fast decrease, which means that $V_t$ crosses abruptly $0$ downwards.  Denoting $F(V,n,m,h)$ the first component of the right-hand side of \eqref{eq:HH-single}, this corresponds to  
\[DF(V_M,n_M,m_M,h_M). \begin{pmatrix}
0\\
\delta n\\
0\\
\delta h\\
\end{pmatrix}<<0\]
where $ (V_M,n_M,m_M,h_M)$ is the point for which $V$ is at its maximum and with $\delta n >0, \delta h<0$. The  potassium channel plays an important in repolarization. Also, 
\[DF(V_m,n_m,m_m,h_m   ). \begin{pmatrix}
0\\
\delta n\\
0\\
\delta h\\
\end{pmatrix}>>0\]
where $ (V_m,n_m,m_m,h_m)$ is the point  for which $V$ is at its mimimum, and with $\delta n <0, \delta h>0$. 
It is worth noting how the trajectory in the $(V,m) $ follows the manifold $n=\frac{\alpha(V)}{\alpha(V)+\beta(V)}$ (figure 9 BL).

\textbf{Biological interpretation} According to the model, variable $n$  regulate the flux of potassium, while variables $m$ and $h$ regulate the sodium flux. The spike in $V$ occurs when it is pushed trough the sodium gradient $(E_{Na}=50)$, and corresponds to the depolarization of the membrane. Repolarization and hyperpolarization, result from the potassium gradient ($E_K=-77$). According to the HH paper, high values for $m$ and $n$ correspond to the activation of respectively potassium and sodium gates. Low value of $h$ correspond to the inactivation of the sodium gate.

\begin{figure}
 \label{fig:f-sta}
\includegraphics[height=5cm,width=5cm]{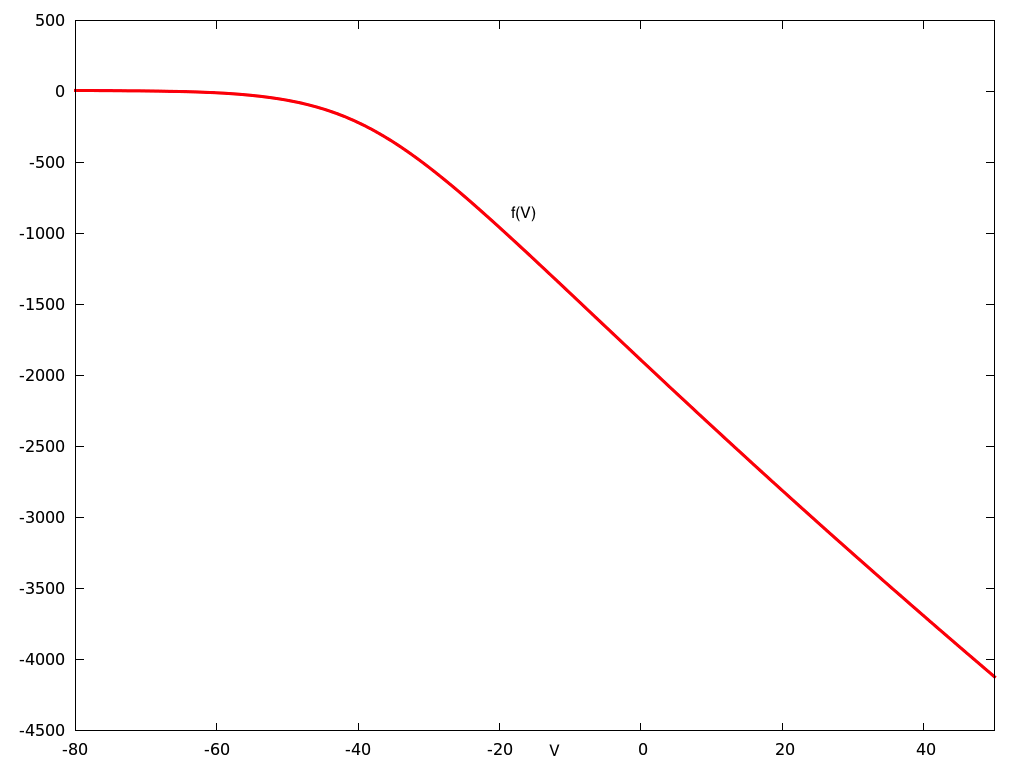}
\includegraphics[height=5cm,width=5cm]{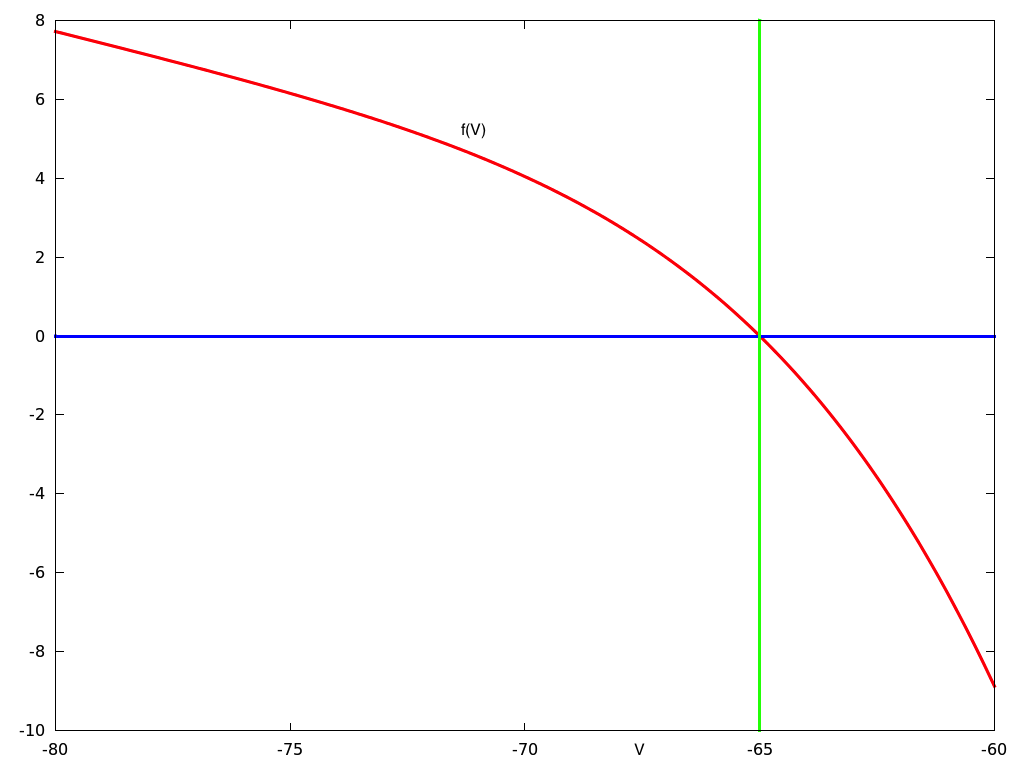}
\caption{Illustrations of $f(V)$ for $I=0$. Left: $V\in [-80,50]$, Right:$V\in [-80,-60]$. Equation $f(V)=0$ is satisfied for $V\simeq -65$ which corresponds to the stationary solution.}
\end{figure}
\begin{figure}
 \label{fig:AB2018}
 \includegraphics[height=5cm,width=5cm]{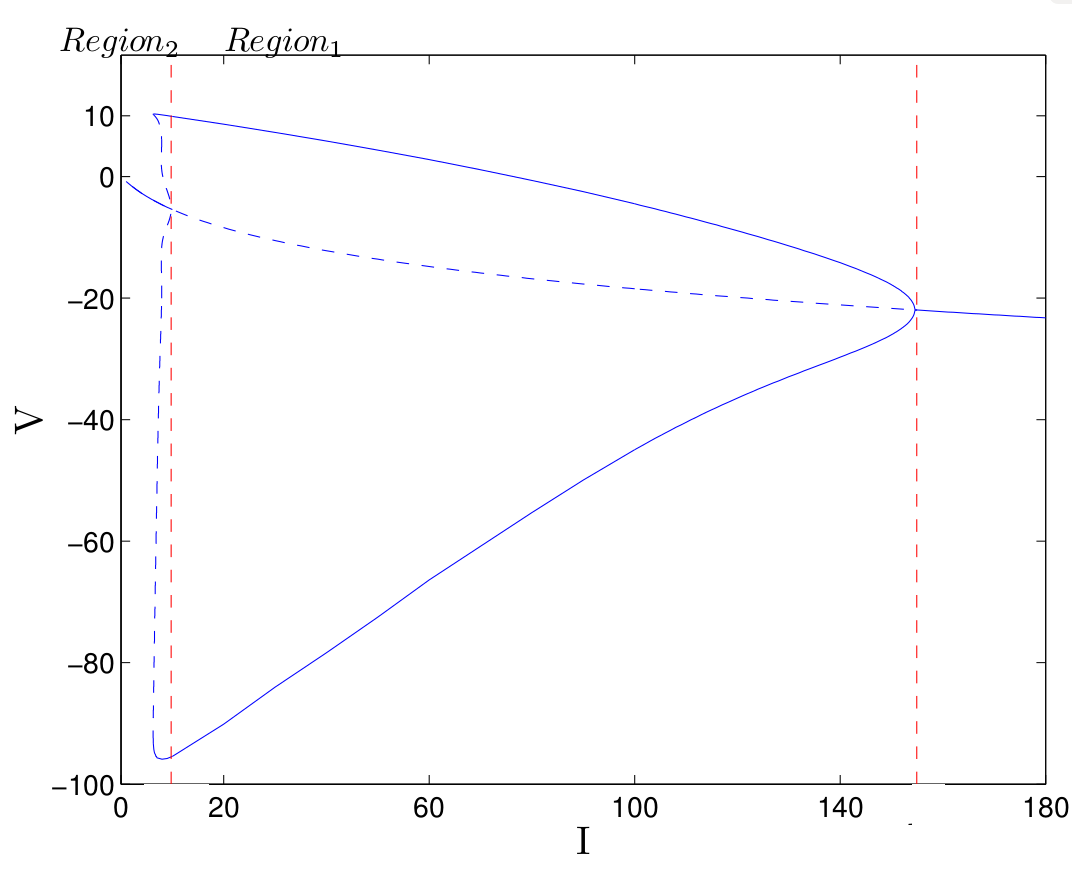}
\includegraphics[height=5cm,width=5cm]{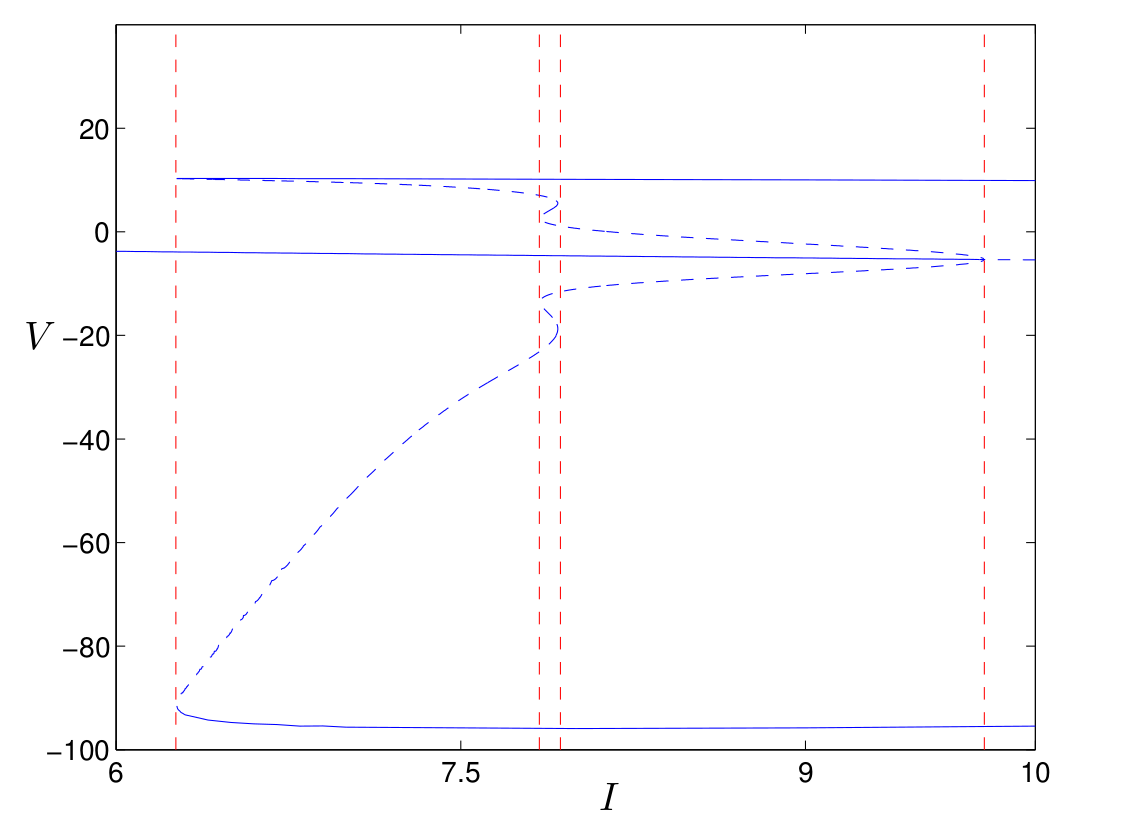}
\caption{Bifurcation Diagrams as the parameter I is varied, reproduced as in \cite{Bal-2018}, with permission of the authors. The figure at left indicates two bifurcating regions. One around I=10 and another around I=150.  The latter one corresponds to a super-critcial Hopf Bifurcation. The bifurcation around 10 is more complicated. A zoom is illustrated in the right figure. One can observe a sub-critical Hopf bifurcation around I=9.5, and bifurcation of cycles downwards. In particular, one can observe coexistence of a stationary stable point and a stable limit cycle for I approximately in the interval [6.3,9.5]  }
\end{figure}
\begin{figure}
 \label{fig:BifHH}
 \includegraphics[height=5cm,width=5cm]{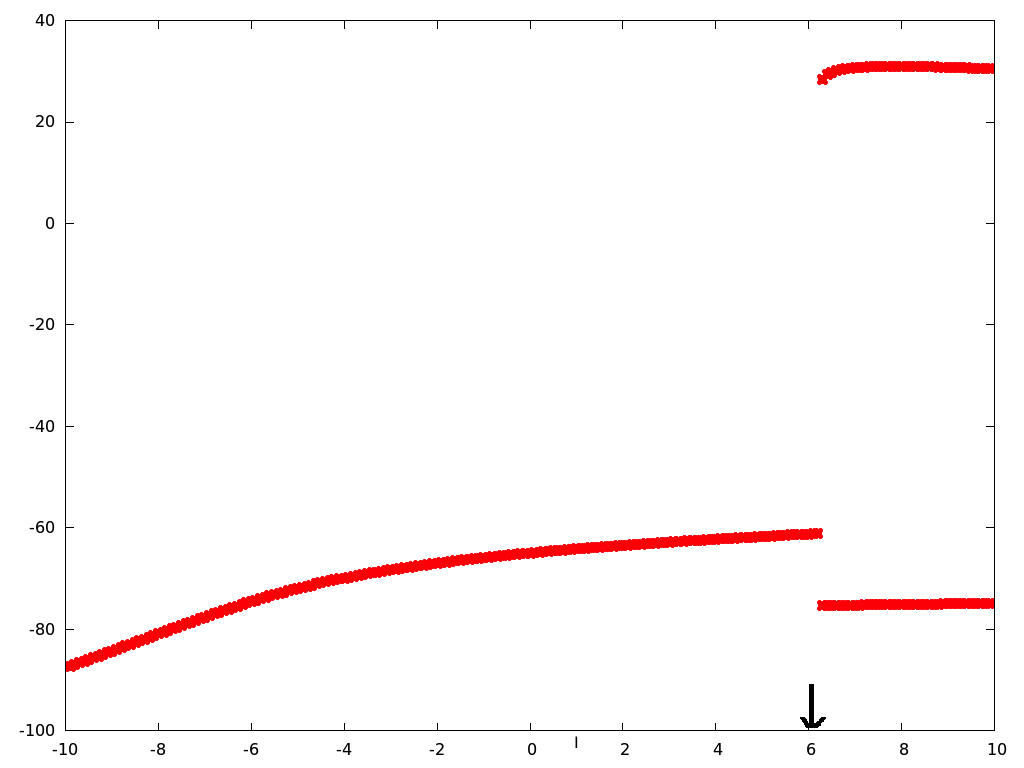}
\includegraphics[height=5cm,width=5cm]{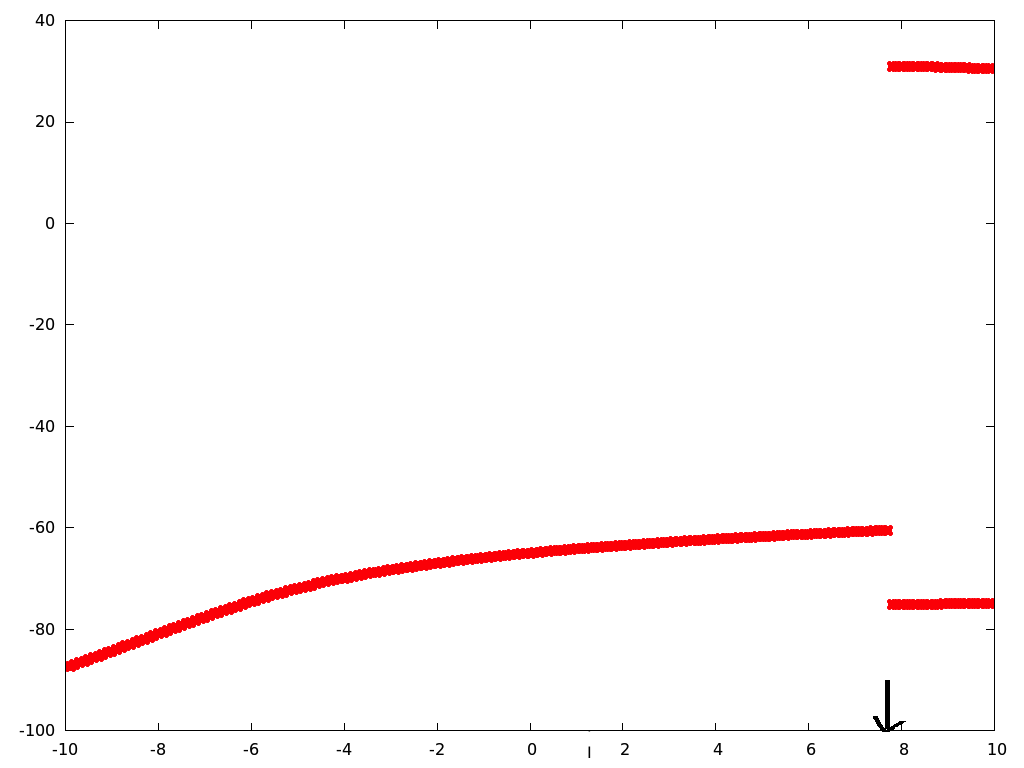}
\caption{Bifurcation Diagrams. Both figures plot $\limsup\limits_{t\rightarrow +\infty} V(t)$ and $\liminf\limits_{t\rightarrow +\infty}V(t)$ as a function of $I$ for fixed initial conditions. Left:$(V,n,m,h)(0)=(-50,0.5,0.5,0.5)$. Right: $(V,n,m,h)(0)=(-65,0.1,0.1,0.1)$. There is numerical evidence of coexistence of attractive limit cycle and stationary stable point for a region of $I$.}
\end{figure}
\begin{figure}
 \label{fig:Lc-Sta}
\includegraphics[height=5cm,width=5cm]{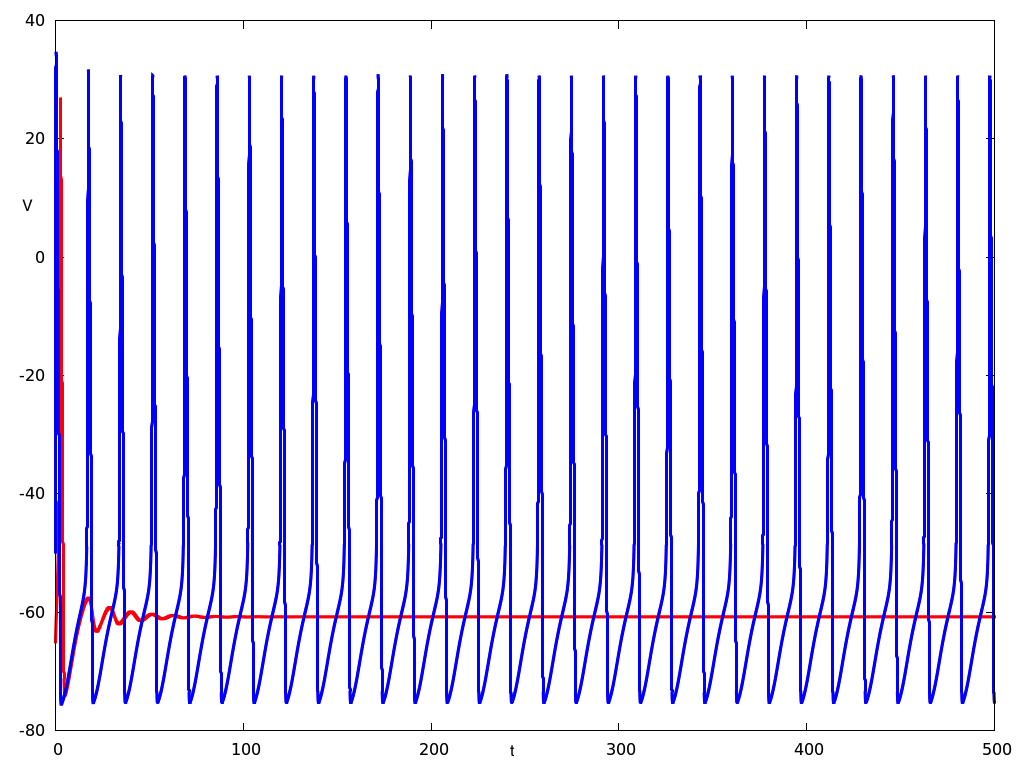}
\includegraphics[height=5cm,width=5cm]{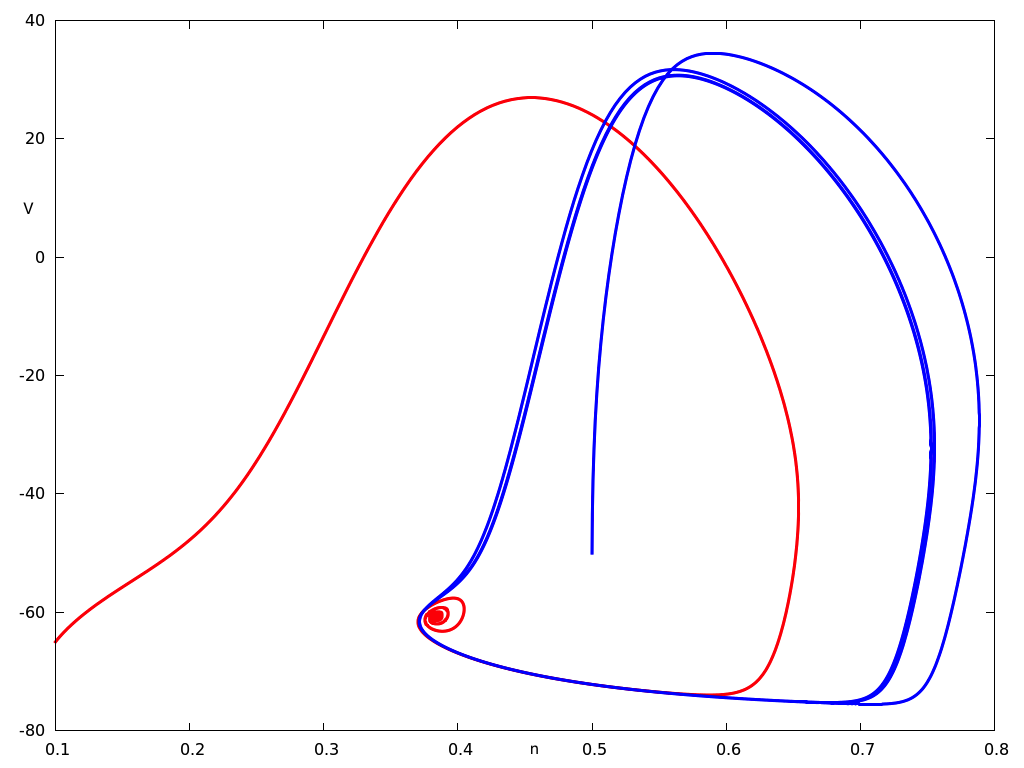}
\caption{Coexistence of attractive limit cycles and stationary point for $I=7$. Left: $V(t)$ for two initial conditions, $(V,n,m,h)(0)=(-65,0.1,0.1,0.1)$ in red, $(V,n,m,h)(0)=(-50,0.5,0.5,0.5)$ in blue. Right: analog representations in the $(n,V)$ projection plane. }
\end{figure}

\begin{figure}
 \label{fig:nmhvI=7}
\includegraphics[height=5cm,width=5cm]{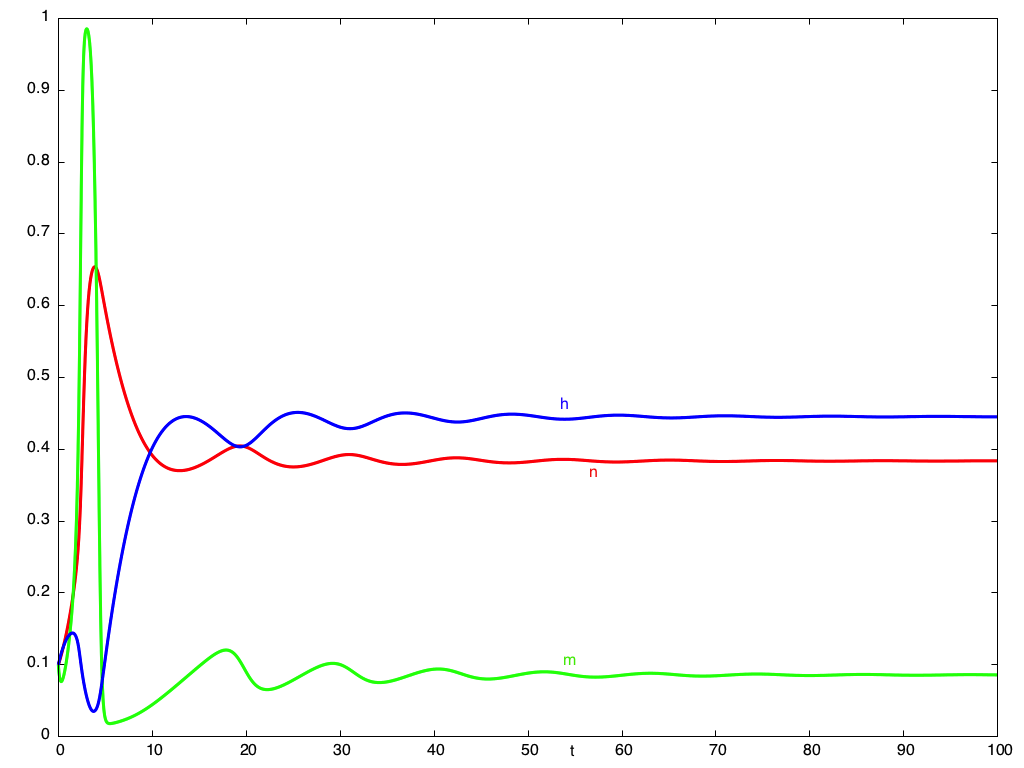}
\includegraphics[height=5cm,width=5cm]{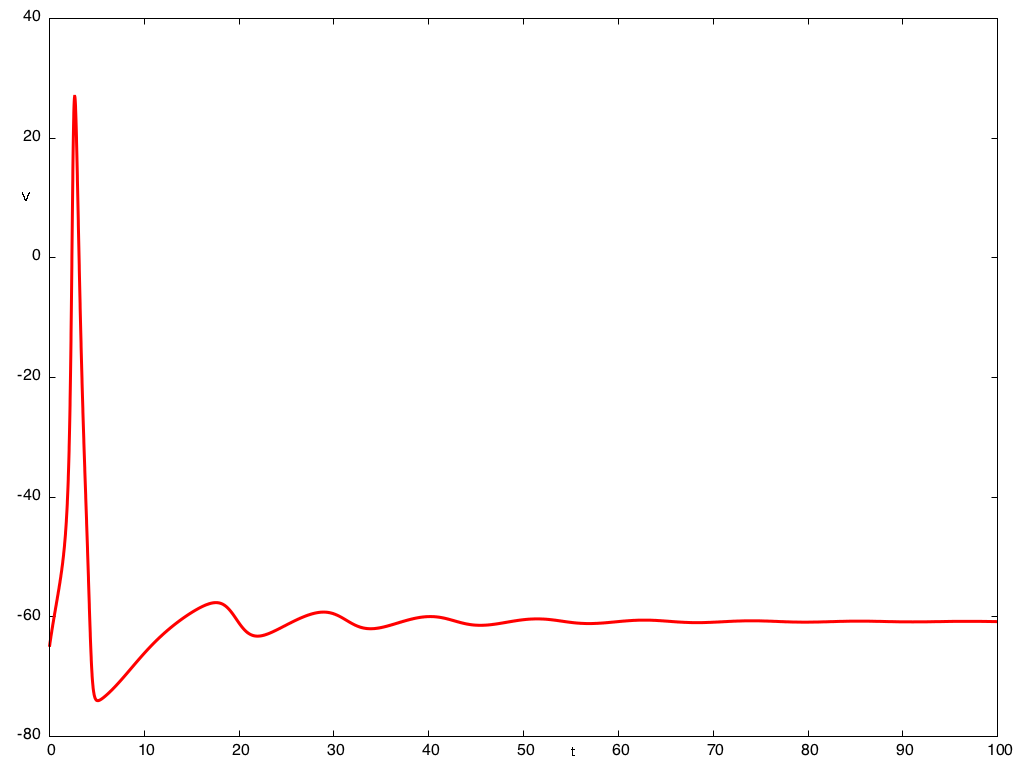}
\caption{Temporal evolution for $I=7$ and IC $(V,n,m,h)(0)=(-65,0.1,0.1,0.1)$ . Left: $n,m,h$ as functions of time, respectively in red, green and blue.   Right: $V$ as a function of time. These figures represent an action potential or spike. It is followed by a return to the stationary state. }
\end{figure}

\begin{figure}
 \label{fig:nmhvI=7p}
\includegraphics[height=5cm,width=5cm]{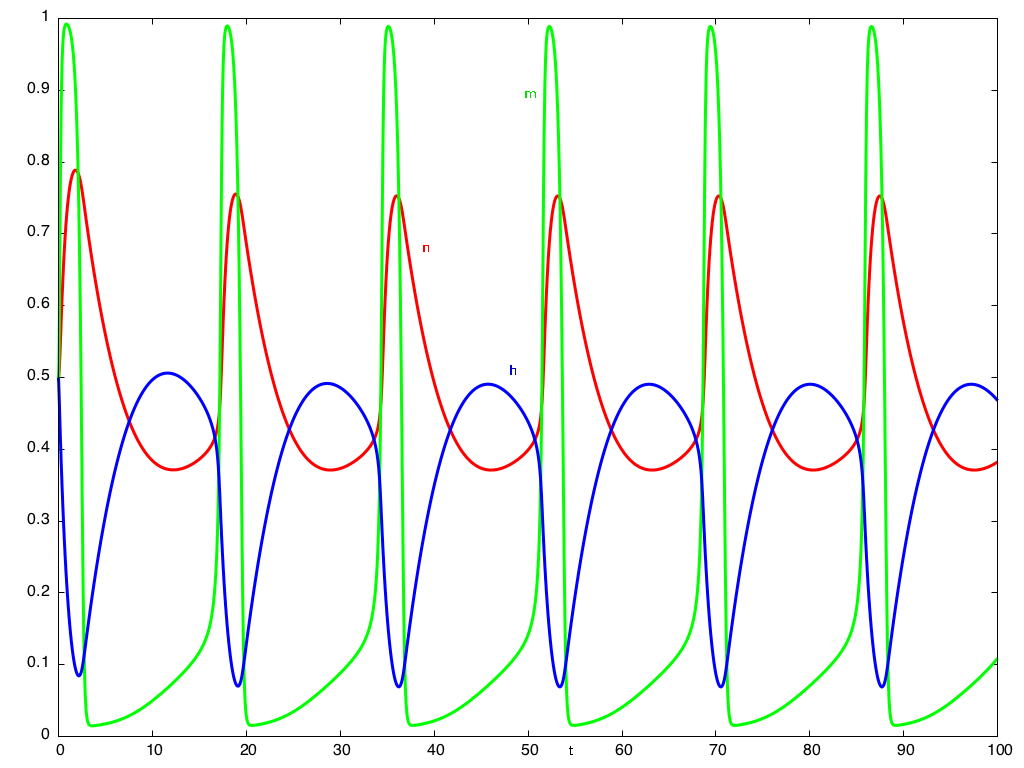}
\includegraphics[height=5cm,width=5cm]{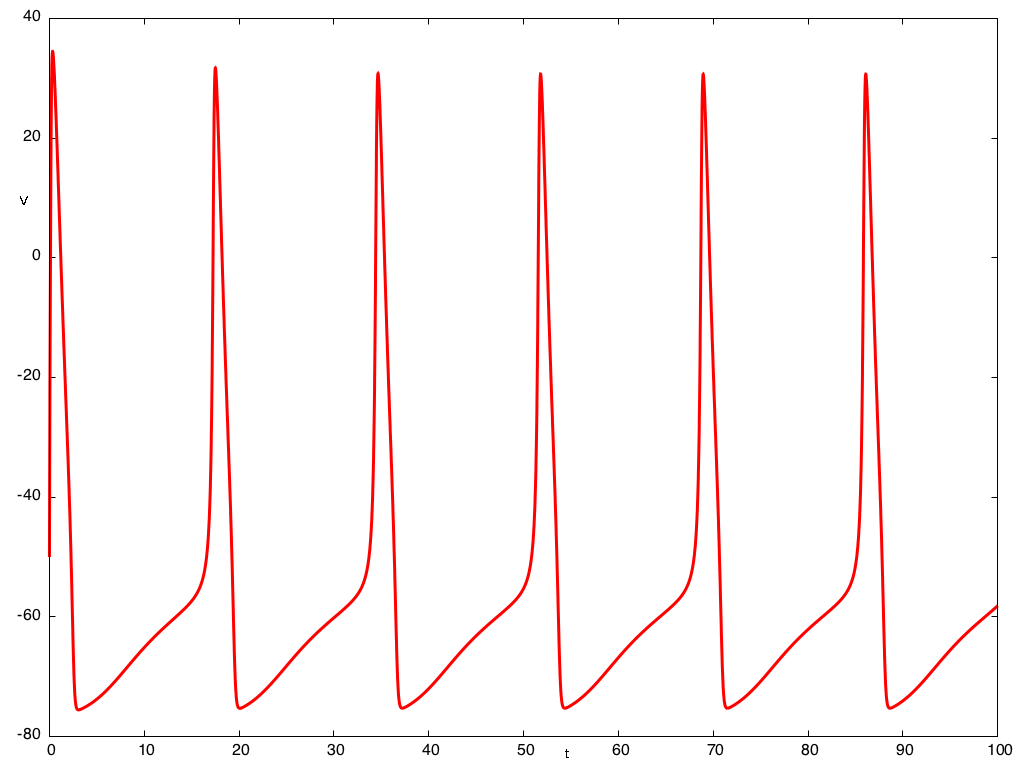}
\caption{Temporal evolution for $I=7$ and IC $(V,n,m,h)(0)=(-50,0.5,0.5,0.5)$ . Left: $n,m,h$ as functions of time.  Right: $V$ as a function of time. We observe a train of spikes. }
\end{figure}

\begin{center}
\begin{figure}
 \label{fig:nmhv-stationary}
\includegraphics[height=5cm,width=5cm]{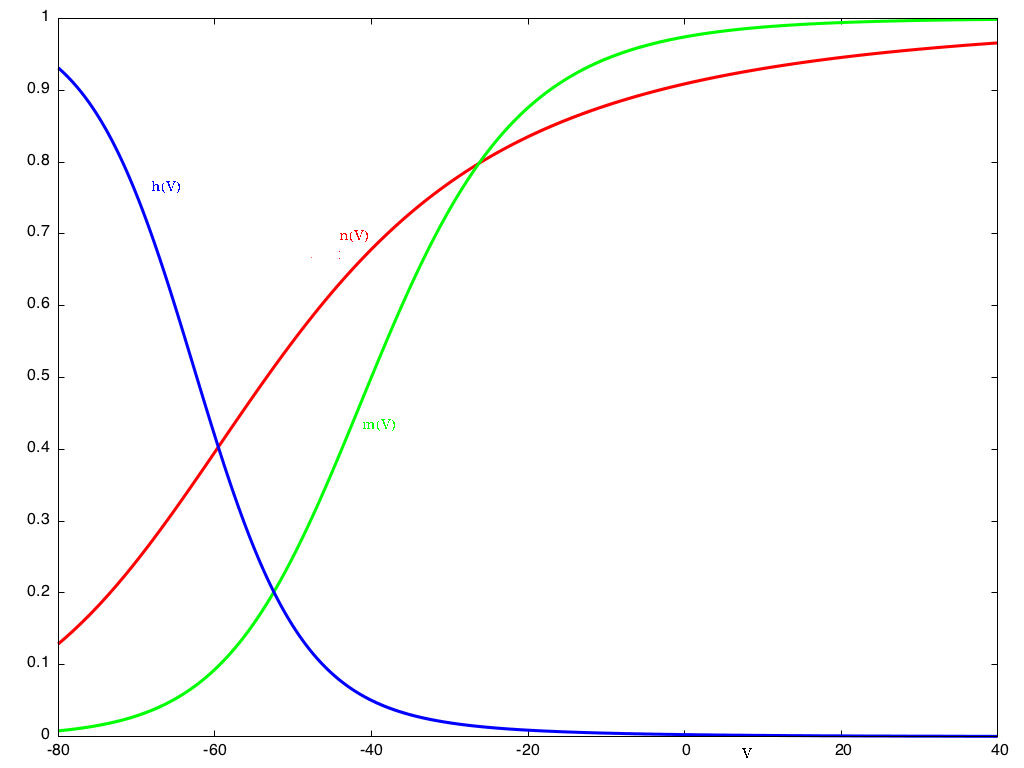}
\includegraphics[height=5cm,width=5cm]{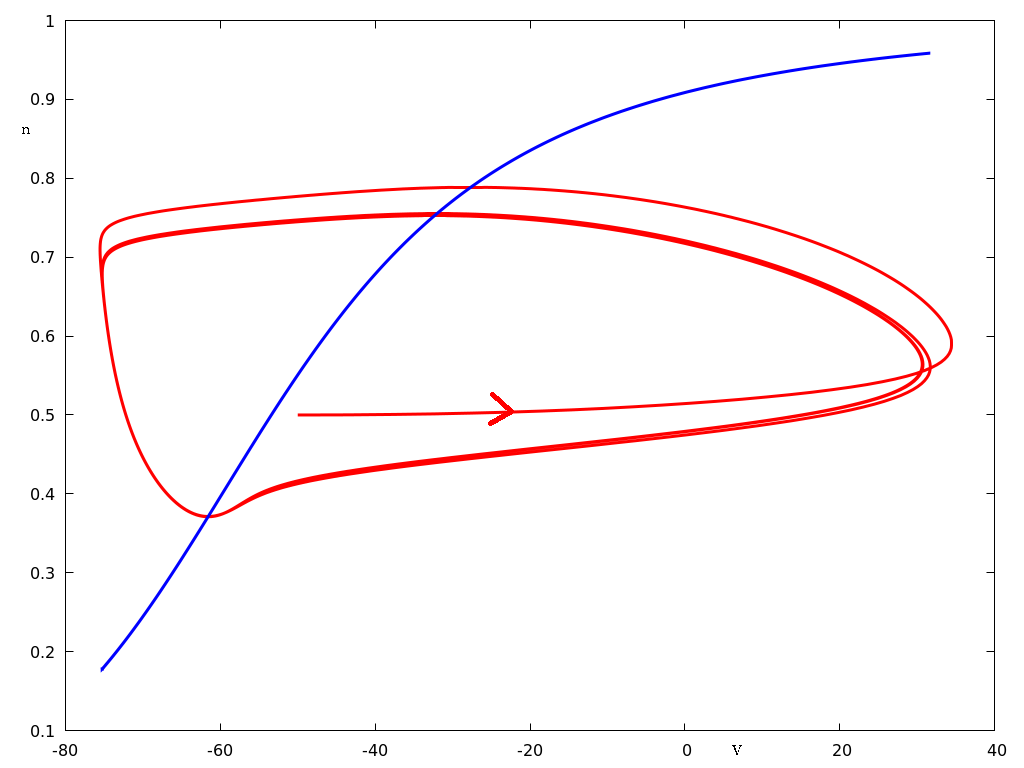}\\
\includegraphics[height=5cm,width=5cm]{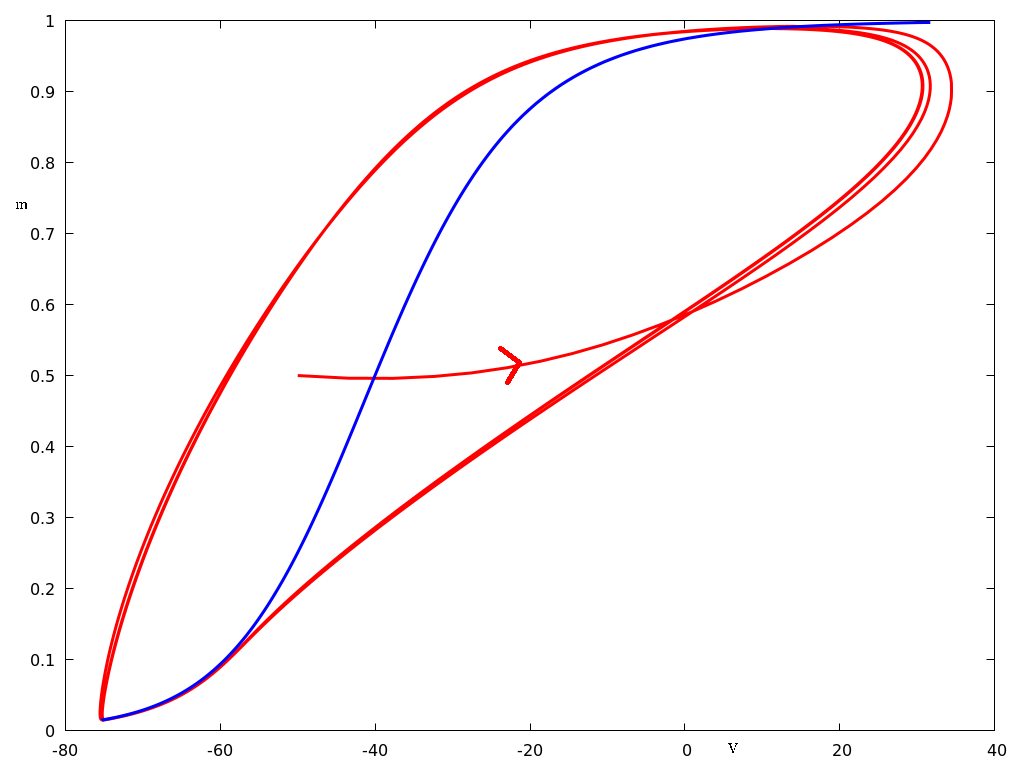}
\includegraphics[height=5cm,width=5cm]{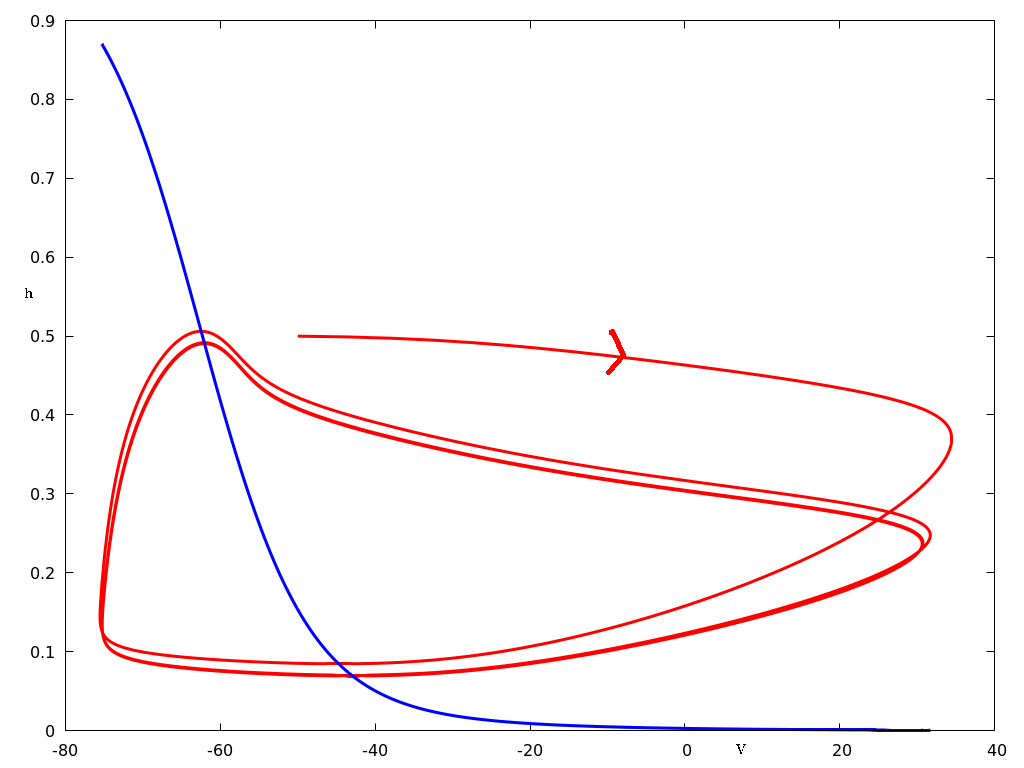}
\caption{Dynamics for $I=7$ and IC $(V,n,m,h)(0)=(-50,0.5,0.5,0.5)$. Top left panel: figure provides the nullclines of $n$ (red), $m$ (green) and $h$ (blue) as a function of $V$. One may rely on them to explain  the dynamics. The three others pictures, illustrate respectively the dynamics of $(V,n)$, $(V,m)$ and $(V,h)$, with their nullclines. These pictures highlight the dynamics of the HH model.   }
\end{figure}
\end{center}

\begin{figure}
 \label{fig:vnh3d}
\includegraphics[height=6cm,width=6cm]{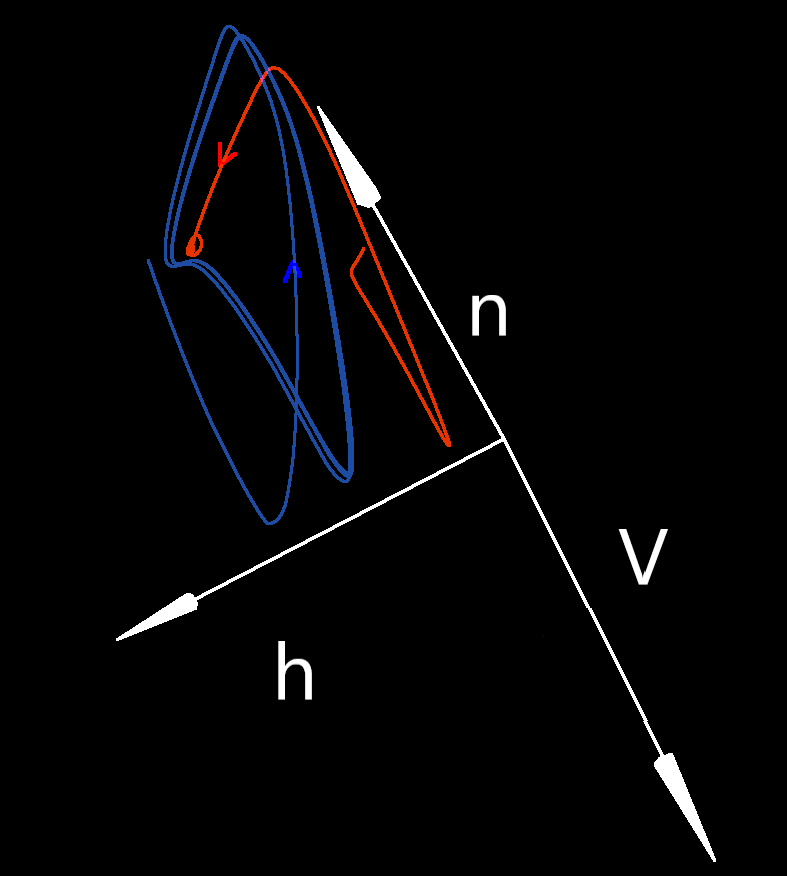}
\includegraphics[height=6cm,width=6cm]{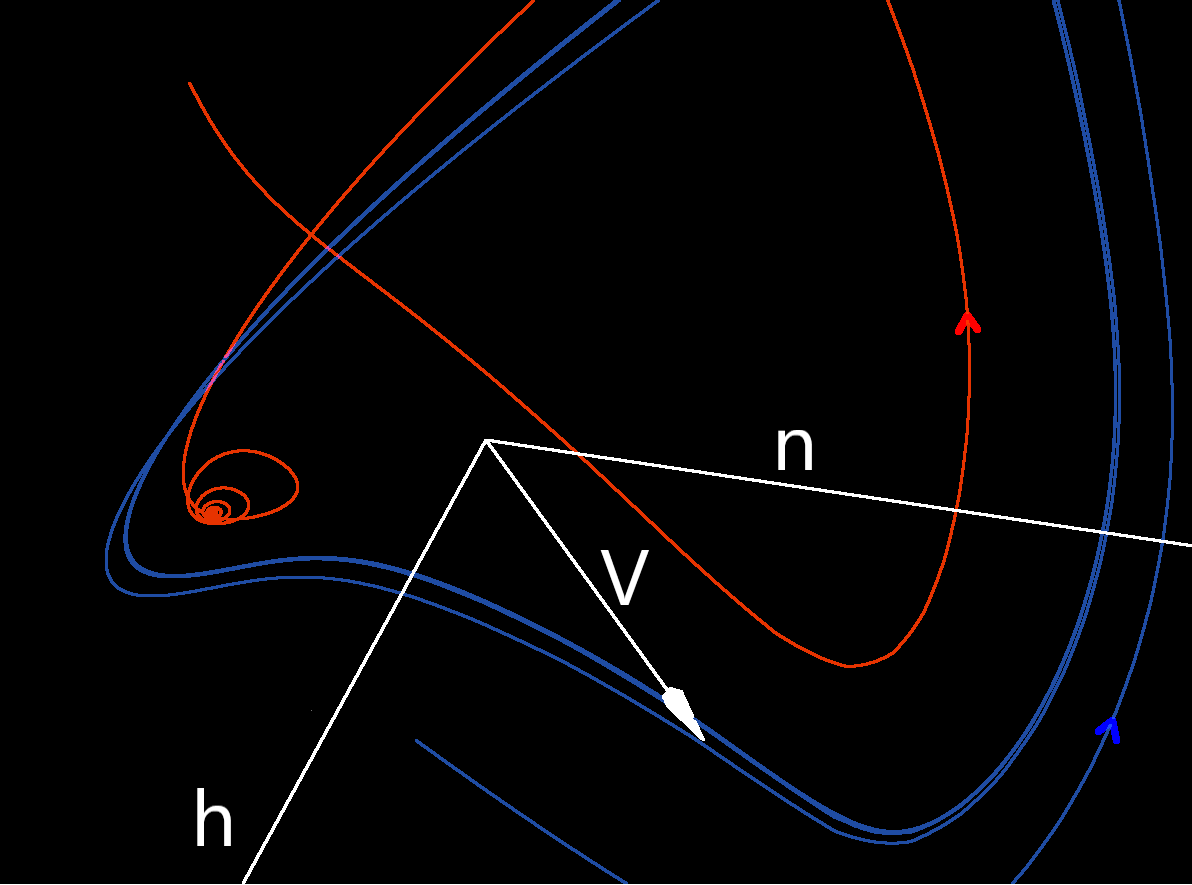}
\caption{Dynamics in the 3-dimensional phase space $(V,n,h)$ for $I=7$, for different IC. Blue curves correspond to IC $(V,n,m,h)(0)=(-50,0.5,0.5,0.5)$. Red curves correspond to $(V,n,m,h)(0)=(-65,0.1,0.1,0.1)$. There is numerical evidence of coexistence of attractive limit cycle and stationary stable point for a region of $I$.}
\end{figure}

\begin{figure}
 \label{fig:Batt}
\includegraphics[height=5cm,width=5cm]{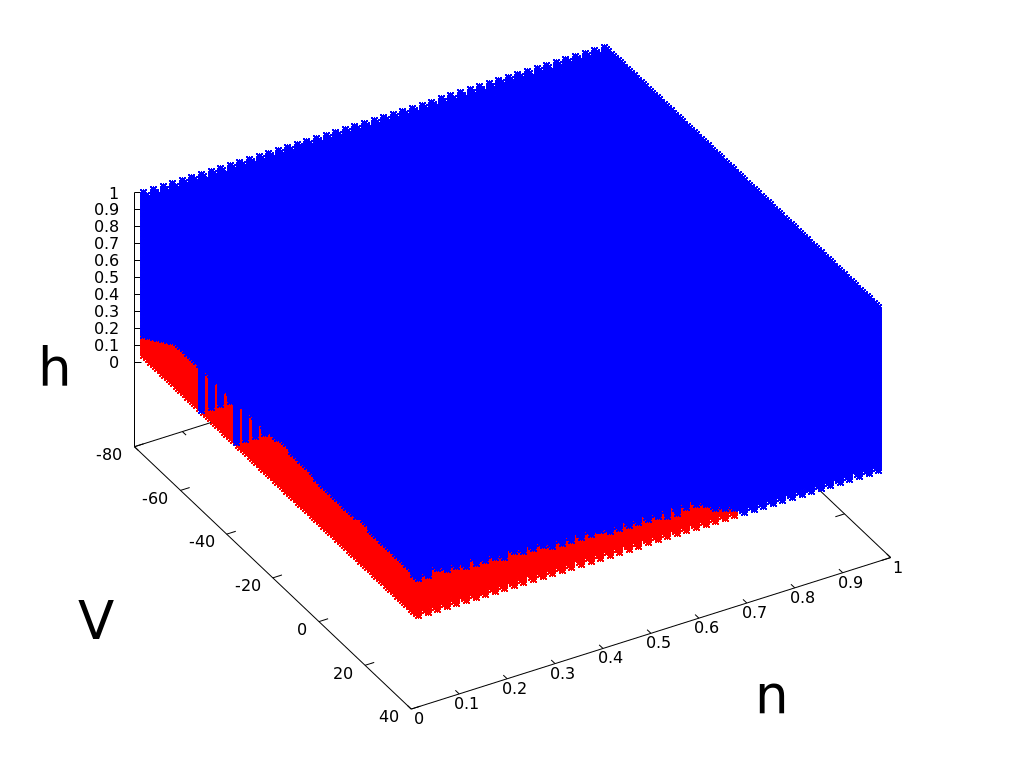}
\includegraphics[height=5cm,width=5cm]{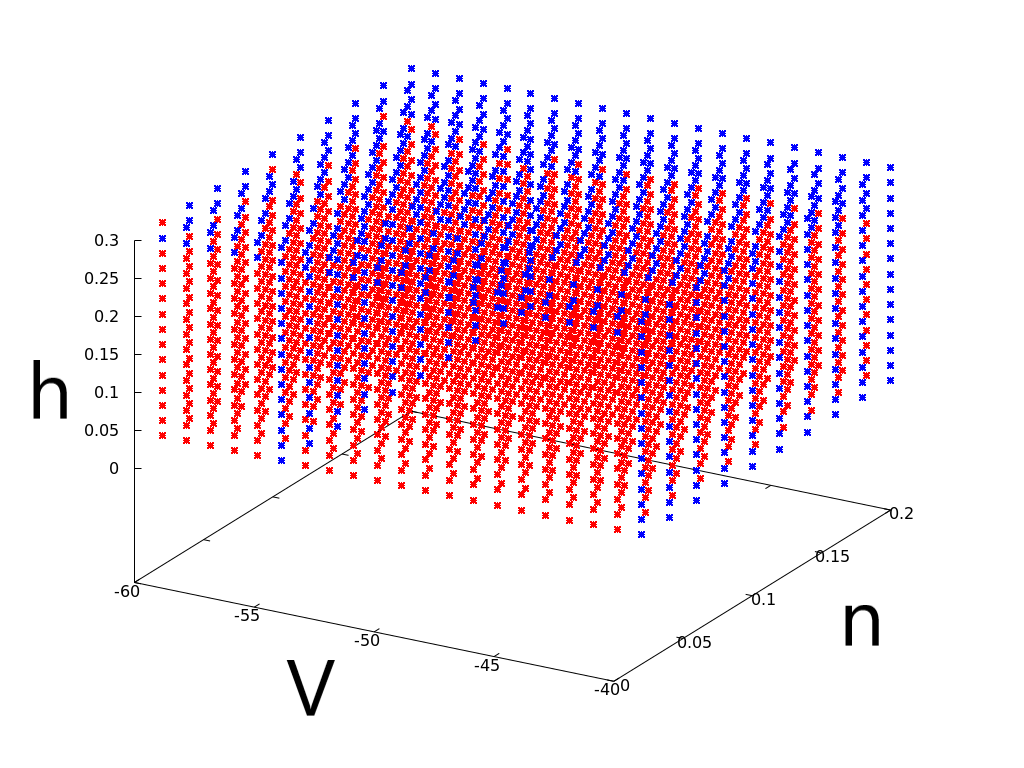}
\caption{This figure illustrates the basin of attraction for $I=7$, of the stationary solution and the limit-cycle. It is a projection in the $V,n,h$ plan.  In red, we plot IC which  evolve to the stationary point. In blue, we plot IC which evolve to the limit cycle.}
\end{figure}

\textbf{Action Potential and Excitability}
One of the reason of the success of the original HH paper, is the physiological based mechanism proposed to induce action potentials or spikes. From a dynamical modeling point of view, it corresponds to a large and fast excursion in the phase space, especially in the $V$ variable and away from the stationary stable point. This is also known as the excitability and illustrated in figures 7 and 8.  This feature of \eqref{eq:HH-single} is of fundamental importance for the next sections, for perturbations above a threshold will induce a spike.\\

In this section, we have revisited the dynamics of the HH equation. The key point for the next sections, is that varying the parameter $I$, or moving IC, the HH system is able to produces spikes. This is the key point for the network analysis because the spikes will determine the behavior of the network.

\section{One driven neuron}\label{Sec:single_HH-sto}
In this section, we focus on the dynamics of one driven neuron. We assume that $I=0$. This implies that if there is no drive inputs, as sketched in paragraph $1$ the system evolves toward the steady state. The equation writes in this case
\begin{equation}\label{eq:HH-onedriven}
\left\{
\begin{array}{rcl}
V_t  &=&\overline{g}_{Na} m^3 h(E_{Na}-V)+\overline{g}_{K} n^4(E_{K}-V)+\overline{g}_{L}(E_{L}-V)\\
& &+g_{E}(E_{E}-V)\\
n_t  &=& \alpha_{n}(V)(1-n)-\beta_{n}(V)n\\
m_t  &=& \alpha_{m}(V)(1-m)-\beta_{m}(V)m\\
h_{t}  &=& \alpha_{h}(V)(1-h)-\beta_{h}(V)h\\
\tau_E g_{Et}  &=& -g_E+S^{dr}\sum_{s\in \mathcal{D}}\delta(t-s)
\end{array}
\right.
\end{equation}
where  $\mathcal{D}$ refers to the set of times at which the neuron  receives kicks from the stochastic drive.
In comparison with system \eqref{eq:HH-single}, we have added one equation. This last equation accounts for external drive. Mathematically, we assume that jumps of amplitude $\frac{S^{dr}}{\tau_E}$ occur in the time course of $g_E$ as a realization of a Poisson process of rate $\lambda$. 
%
 Ordering the elements of the set $ \mathcal{D}$ by increasing order, we denote
\[\mathcal{D}=(s_i)_{i\in \N}\] 
and the time intervals $(s_{i+1}-s_i)$, are fixed at the beginning as realization of exponential laws of parameter $\lambda$ in order to generate a realization of a Poisson process.

\textbf{Biological interpretation} \\
The introduction of these spikes stand in the model as the external drive, which is of fundamental role in applications, since electrical activity of neurons results from the interaction of the external drive and recurrent inputs. Mathematically, in this  model, the external drive is Poissonian and generates Diracs. The recurrent inputs correspond to the coupling terms coming from the network. In this paragraph, we focus on the external drive.

This last formulation, dividing the interval $[0,T]$ into subdivision  $\{t_0,t_1,...\}$ is well adapted to our framework. Note that the derivative $g_{Et}$ has to be taken in the  sense of distributions. A classical computation leads to:
 \[g_E(t)=g_E(t_i)e^{-\frac{1}{\tau_E}(t-t_i)}\]
  on the time interval $[t_i,t_{i+1})$, then at time $t_{i+1}$ a kick arrives and, 
 \[g_E(t_{i+1})=g_E(t_{i})e^{-\frac{1}{\tau_E}(t_{i+1}-t_i)}+\frac{S^{dr}}{\tau_E}\]
  Therefore $g_E(t)$ is a discontinuous function with jumps,  and is locally bounded.
The aim of  this section is, given the input drive, to look at the behavior resulting from an increase in the parameter $S_{dr}$.  
\subsection{Some analytical results}
Before going into that, we state some propositions which clarify the mathematical framework. 
\begin{proposition}
\label{prop:HH-onedriven}
The trajectories of the stochastic process generated by equation \eqref{eq:HH-onedriven} are defined on $(0,+\infty)$  and piecewise-$C^1$; Furthermore, $V$ is continuous, $n,m,h$ are $C^1$, $g_{E}$ has jumps. Furthermore, the set 
\[(E_K,E_{Na})\times (0,1)^3 \mbox{ is  positively invariant for } (V,n,m,h) \]
\end{proposition}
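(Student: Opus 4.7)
The plan is to exploit the piecewise-deterministic nature of the dynamics: between two consecutive Poisson arrival times $s_i<s_{i+1}$ of $\mathcal{D}$, the system \eqref{eq:HH-onedriven} is a smooth autonomous ODE, and at each $s_{i+1}$ only $g_E$ is updated through the additive jump $S^{dr}/\tau_E$ prescribed by the Dirac source. I would therefore fix a realization of the Poisson process and argue inductively on the intervals $[s_i,s_{i+1})$. Given initial data at $s_i^+$, the $g_E$ component is available in closed form, $g_E(t) = g_E(s_i^+)\,e^{-(t-s_i)/\tau_E}$, which is $C^1$ and bounded on $[s_i,s_{i+1})$. Plugging this coefficient into the first four equations yields an ODE with smooth right-hand side in $(V,n,m,h)$ and smooth (indeed analytic) time dependence, so the Cauchy--Lipschitz theorem furnishes a unique local $C^1$ solution.

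The second step is to establish the positive invariance of $K_0 := (E_K,E_{Na}) \times (0,1)^3$ and use it to preclude finite-time blow-up. For the gating variables, on the face $\{n=0\}$ one has $n_t = \alpha_n(V) > 0$ and on $\{n=1\}$ one has $n_t = -\beta_n(V) < 0$, using the strict positivity of the $\alpha$'s and $\beta$'s already noted in the proof of Theorem~\ref{th:posinv}; the same argument applies verbatim to $m$ and $h$. For the voltage, on $\{V=E_{Na}\}$ the sodium term vanishes and each remaining contribution is nonpositive, namely $\overline{g}_K n^4(E_K-E_{Na}) \le 0$, $\overline{g}_L(E_L-E_{Na}) < 0$, and $g_E(E_E - E_{Na}) \le 0$, the last one because $g_E \ge 0$ (starting nonnegative, it only decays exponentially or jumps upward by $S^{dr}/\tau_E>0$) and $E_E=0 < E_{Na}=50$; hence $V_t < 0$ strictly on that face. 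Symmetrically, on $\{V = E_K\}$ each remaining term is nonnegative and $\overline{g}_L(E_L - E_K) > 0$ is strict, giving $V_t > 0$. Thus the vector field points strictly inward on every face of $K_0$.

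With invariance in hand, the right-hand side of the $(V,n,m,h)$-equation is bounded on $\overline{K_0}$ uniformly in $t\in[s_i,s_{i+1})$, which rules out blow-up and extends the $C^1$ solution up to $t = s_{i+1}^{-}$. At $t = s_{i+1}$ the Dirac source produces $g_E(s_{i+1}^+) = g_E(s_{i+1}^-) + S^{dr}/\tau_E$ while $(V,n,m,h)$ do not see the Dirac directly and remain continuous through $s_{i+1}$; the induction then restarts on $[s_{i+1},s_{i+2})$ with initial data still in $K_0$. Concatenating over $i\in\mathbb{N}$ yields a trajectory defined on $(0,+\infty)$, piecewise-$C^1$, with $V,n,m,h$ continuous and $g_E$ having jumps at the Poisson times, as claimed. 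The only delicate point is the inward direction of the vector field at $V = E_{Na}$ in the presence of a large $g_E$; this is exactly where the sign $E_E - E_{Na} < 0$ is used, making the excitatory conductance hyperpolarizing at the top boundary and thereby preserving invariance regardless of how many kicks have accumulated in $g_E$.
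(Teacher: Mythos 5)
Your proposal is correct and follows essentially the same route as the paper's proof: piecewise construction via the Cauchy theorem on each inter-kick interval with $g_E$ updated by the jump, combined with the sign analysis of $V_t$ at $V=E_{Na}$ and $V=E_K$ and positivity of the $\alpha$'s and $\beta$'s for invariance of $(E_K,E_{Na})\times(0,1)^3$. You simply carry out the boundary checks (including the role of $g_E\ge 0$ and $E_E<E_{Na}$) in more explicit detail than the paper does.
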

\begin{proof}
Existence and uniqueness on each interval $[t_i,t_{i+1})$ follows from the Cauchy theorem. At time $t_{i+1}$ a jump occurs, which determines the value of $g_E$ in the next time interval. It follows that the solution $g_E$ is  $C^{\infty}$ in each interval $(t_i,t_{i+1})$.  Next, we deal with the boundedness of trajectories.
The proof of theorem \ref{th:posinv} remains valid, with the specific assumption that $I=0$. There are jumps on $V_t$ but $V$ is continuous. The derivative $V_t$ is negative if $V$ is above $E_{Na}$ and positive if $V$ is below $E_{K}$. This implies the result. 
\end{proof}

\begin{remark}
Note that the value of $g_E$ after the jump is therefore given by the following recurrence equation.
\begin{equation*}
g_E(t_{i+1})=g_E(t_i)\exp(-\frac{t_{i+1}-t_i}{\tau_E})+\frac{S^{dr}}{\tau_E}. 
\end{equation*}

\end{remark}

The following proposition follows from the above recurrence equation.
\begin{proposition}
We assume $g_E(0)=0$. Then, for $t\in [t_i,t_{i+1})$,  $g_E$ is given by the following expression:
\[g_E(t)=\frac{S^{dr}}{\tau_E}\sum_{k=1}^i\exp(\frac{t-t_k}{\tau_E})\] 
\end{proposition}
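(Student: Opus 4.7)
The plan is to proceed by induction on the index $i$, using the two ingredients already set out in the excerpt: on each inter-jump interval $g_E$ satisfies the linear homogeneous ODE $\tau_E g_{Et} = -g_E$, and at each jump time $t_{i+1}$ the value of $g_E$ is updated by the recurrence
\[
g_E(t_{i+1}) = g_E(t_i)\exp\!\left(-\tfrac{t_{i+1}-t_i}{\tau_E}\right) + \frac{S^{dr}}{\tau_E}.
\]
Setting $t_0 = 0$, the base case $i=0$ is immediate: on $[0,t_1)$ no kick has yet arrived, so the initial condition $g_E(0)=0$ combined with the ODE gives $g_E(t) \equiv 0$, consistent with the empty sum in the claimed formula.

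For the inductive step, suppose the formula holds on $[t_i, t_{i+1})$. First I would pass to the left limit at $t_{i+1}$, obtaining
\[
\lim_{t \to t_{i+1}^-} g_E(t) = \frac{S^{dr}}{\tau_E}\sum_{k=1}^{i} \exp\!\left(-\tfrac{t_{i+1}-t_k}{\tau_E}\right),
\]
then apply the jump recurrence to conclude
\[
g_E(t_{i+1}) = \frac{S^{dr}}{\tau_E}\sum_{k=1}^{i+1} \exp\!\left(-\tfrac{t_{i+1}-t_k}{\tau_E}\right),
\]
the newly added term at $k = i+1$ contributing $e^0 = 1$. Finally, solving the linear ODE on $[t_{i+1}, t_{i+2})$ with this initial datum yields $g_E(t) = g_E(t_{i+1})\exp(-(t-t_{i+1})/\tau_E)$, and distributing this decaying factor through the sum collapses each summand via
\[
\exp\!\left(-\tfrac{t-t_{i+1}}{\tau_E}\right)\exp\!\left(-\tfrac{t_{i+1}-t_k}{\tau_E}\right) = \exp\!\left(-\tfrac{t-t_k}{\tau_E}\right),
\]
producing the formula at index $i+1$ and closing the induction.

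The only real obstacle is notational bookkeeping rather than mathematical content; I would keep careful track of whether $t_{i+1}$ denotes the value immediately before or immediately after the jump. I would also flag that, as stated, the claimed formula carries a positive sign in the exponent, whereas internal consistency with the recurrence relation used above and with the local boundedness of $g_E$ on $(0,+\infty)$ established in the preceding proposition forces the decaying kernel $\exp(-(t-t_k)/\tau_E)$. That corrected formula is precisely what the induction above produces.
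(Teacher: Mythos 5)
Your induction is exactly the argument the paper leaves implicit: the proposition is presented as an immediate consequence of the recurrence $g_E(t_{i+1}) = g_E(t_i)\exp\left(-\frac{t_{i+1}-t_i}{\tau_E}\right) + \frac{S^{dr}}{\tau_E}$ together with the exponential decay of $g_E$ between kicks, which is precisely what you spell out, so your approach matches the paper's. Your remark about the sign is also correct: the exponent in the stated formula should be $\exp\left(-\frac{t-t_k}{\tau_E}\right)$, as forced both by the decay between jumps and by the subsequent computation of $E[g_E(t_i)]$, so the displayed positive exponent is a typo rather than a flaw in your argument.
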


Since the $t_{k+1}-t_k$ are independents exponential laws, we can compute the value of the mean $E[g_E(t_i)]$ (just before kick). Computations lead to the following proposition.
\begin{proposition} 
Under the above assumptions, the following expression holds:
\[E(g_E(t_i))=\frac{S^{dr}}{\tau_E}\sum_{k=1}^{i-1}\frac{(\lambda \tau_E)^k}{(\lambda\tau_E+1)^k}=\frac{S^{dr}}{\tau_E}\frac{r-r^{i}}{1-r} \]
where \[r=\frac{\lambda \tau_E}{1+\lambda \tau_E}.\]
And,
\begin{equation}
\label{eq:drive}
\lim_{i\rightarrow+\infty}E[g_E(t_i)]=S^{dr}\lambda
\end{equation}
\end{proposition}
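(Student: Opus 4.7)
The plan is to take the closed-form expression for $E[g_E(t_i)]$ supplied by the preceding proposition and pass to the limit $i \to \infty$ using the geometric nature of the sum. Since the jumps of $g_E$ occur at times $t_k$ with inter-arrival times $T_k := t_{k+1} - t_k$ drawn as i.i.d.\ exponentials of parameter $\lambda$, the key quantity is the Laplace-type expectation
\[
E\bigl[\exp(-T_k/\tau_E)\bigr] = \int_0^{+\infty} e^{-s/\tau_E}\,\lambda\, e^{-\lambda s}\,ds = \frac{\lambda \tau_E}{1+\lambda\tau_E} = r,
\]
which is exactly the ratio $r$ appearing in the formula, and which satisfies $0 < r < 1$ because $\lambda, \tau_E > 0$.

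With the value of $r$ in hand, I would simply note that $r^{i} \to 0$ as $i \to +\infty$, so that
\[
\lim_{i \to +\infty} E[g_E(t_i)] = \frac{S^{dr}}{\tau_E}\cdot \frac{r}{1-r}.
\]
The last step is an elementary algebraic simplification: substituting $r = \lambda\tau_E/(1+\lambda\tau_E)$ gives $r/(1-r) = \lambda\tau_E$, hence
\[
\lim_{i \to +\infty} E[g_E(t_i)] = \frac{S^{dr}}{\tau_E}\cdot \lambda\tau_E = S^{dr}\lambda,
\]
which is the desired identity \eqref{eq:drive}.

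There is essentially no obstacle in this proof; the result is a direct corollary of the previous proposition together with the convergence of a geometric series. The only subtlety worth flagging, if rigor demands it, is the justification that the formula from the previous proposition really does represent the expectation of the running sum (i.e.\ interchange of $E$ with the finite sum, which is trivial by linearity). Everything else is a bounded-ratio computation, and the nontrivial probabilistic content is entirely encapsulated in the identity $E[\exp(-T_k/\tau_E)] = r$.
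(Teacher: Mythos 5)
Your treatment of the limit is correct, and you have isolated the right probabilistic ingredient: the identity $E\bigl[e^{-T_k/\tau_E}\bigr]=\int_0^{+\infty}e^{-s/\tau_E}\lambda e^{-\lambda s}\,ds=\frac{\lambda\tau_E}{1+\lambda\tau_E}=r$, together with $0<r<1$, $r^i\to 0$, and the simplification $r/(1-r)=\lambda\tau_E$, does give \eqref{eq:drive} once the finite geometric-sum formula for $E[g_E(t_i)]$ is in hand. This is exactly the computation the paper has in mind (the paper itself only remarks that the result follows by computation from the independence of the exponential inter-arrival times).

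The gap is at the start: you say the closed-form expression for $E[g_E(t_i)]$ is ``supplied by the preceding proposition,'' but that proposition only gives the pathwise (random) expression $g_E(t)=\frac{S^{dr}}{\tau_E}\sum_{k}\exp\bigl(-(t-t_k)/\tau_E\bigr)$ on $[t_i,t_{i+1})$; the first equality of the present proposition, $E(g_E(t_i))=\frac{S^{dr}}{\tau_E}\sum_{k=1}^{i-1}r^{k}$, is itself part of what must be proved. To obtain it you need to evaluate the pathwise sum just before the kick at $t_i$ (which is why the sum stops at $i-1$), take expectations term by term, and then --- this is the step your write-up omits --- decompose $t_i-t_k=T_k+T_{k+1}+\cdots+T_{i-1}$ and use the independence of the inter-arrival times to factor $E\bigl[\exp(-(t_i-t_k)/\tau_E)\bigr]=\prod_{j=k}^{i-1}E\bigl[e^{-T_j/\tau_E}\bigr]=r^{\,i-k}$; re-indexing and summing the finite geometric series then yields $\frac{S^{dr}}{\tau_E}\frac{r-r^{i}}{1-r}$. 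Your closing remark that the only subtlety is interchanging $E$ with a finite sum undersells this: linearity alone does not produce powers of $r$, the independence-based factorization is the genuine probabilistic content of the first identity, and without it the formula you pass to the limit has not been established. With that step inserted, your argument coincides with the paper's intended proof.
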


\textbf{Biological interpretation} \\
Note that equation \eqref{eq:drive} gives a quantitative simple information about the drive. Roughly speaking, it says  that the mean value of $g_E$ after kicks and exponential decay is the product of the amplitude $S^{dr}$ and the frequency of inputs  per ms $\lambda$.

\subsection{Varying $S^{dr}$}
Next, we set values of $\lambda=0.9$ and $\tau_E=2$, and vary $S^{dr}$. When increasing $S^{dr}$, we reach a threshold at which the driven neuron starts to spike. Increasing more $S^{dr}$ increases the spiking rate. This is illustrated in figure 12. This numerical analysis serves as a basis for the study of the network for which we set $\rho_E=0.9$, $\rho_I=2.7$, $\tau_E=2$ and $S^{dr}=0.04$. For theses values, one single driven neuron, exhibits multiple spikes.

\begin{figure}
 \label{fig:BifHHDriven}
 \includegraphics[height=2.3cm,width=2.3cm]{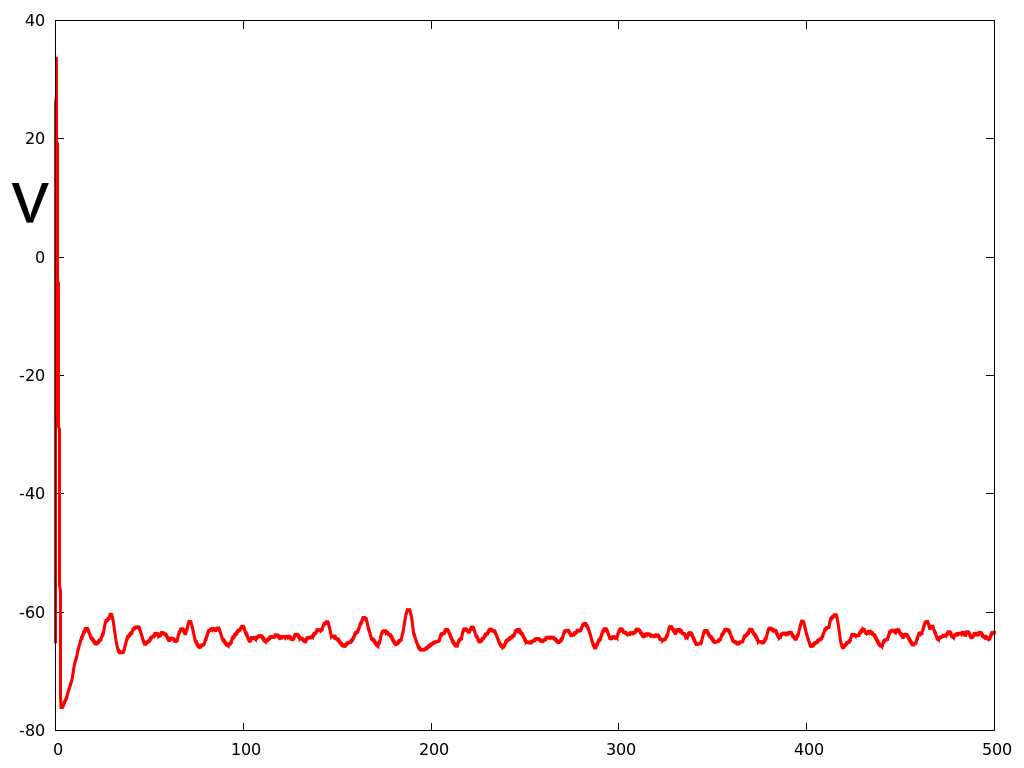}
\includegraphics[height=2.3cm,width=2.3cm]{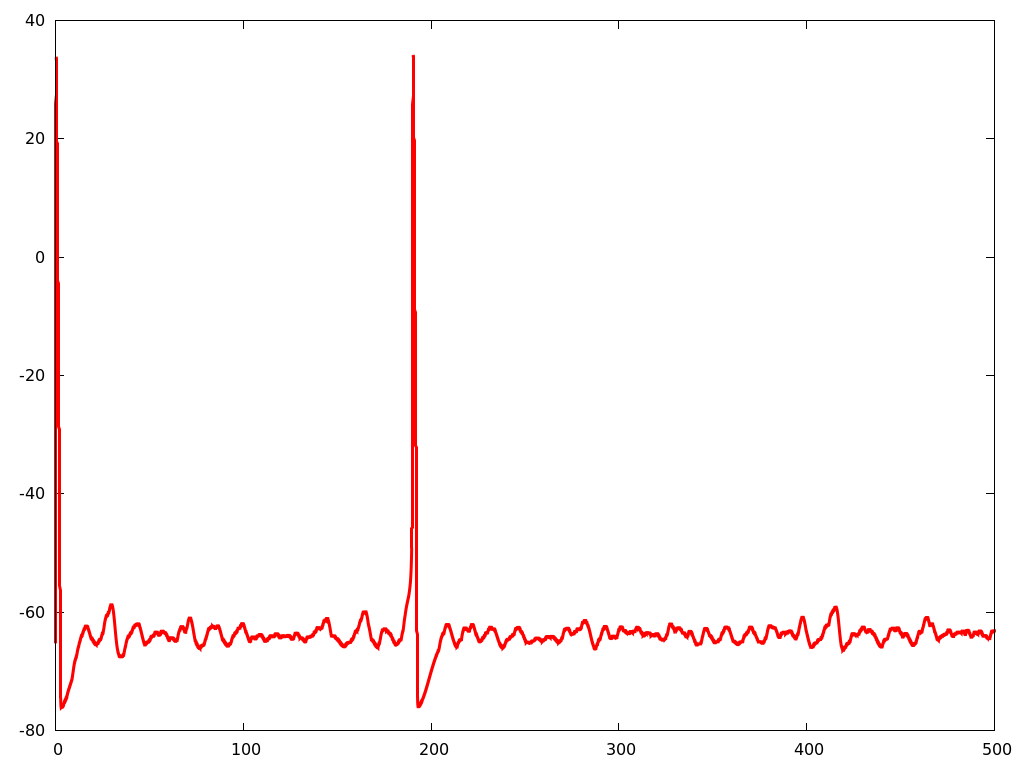}
\includegraphics[height=2.3cm,width=2.3cm]{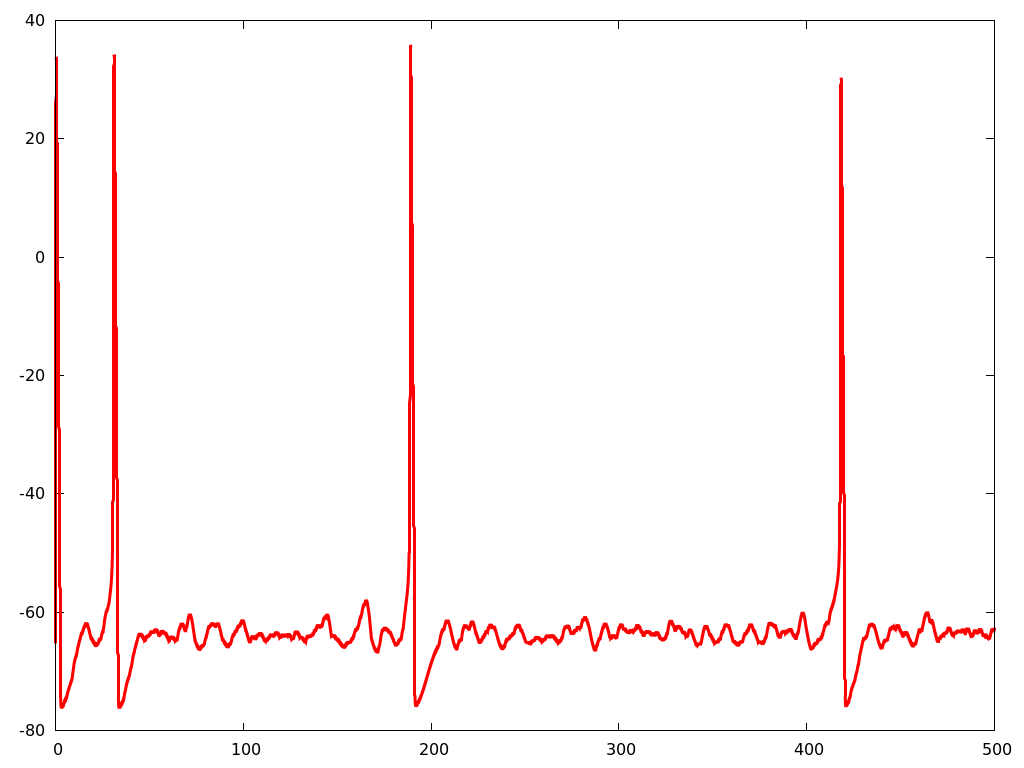}
\includegraphics[height=2.3cm,width=2.3cm]{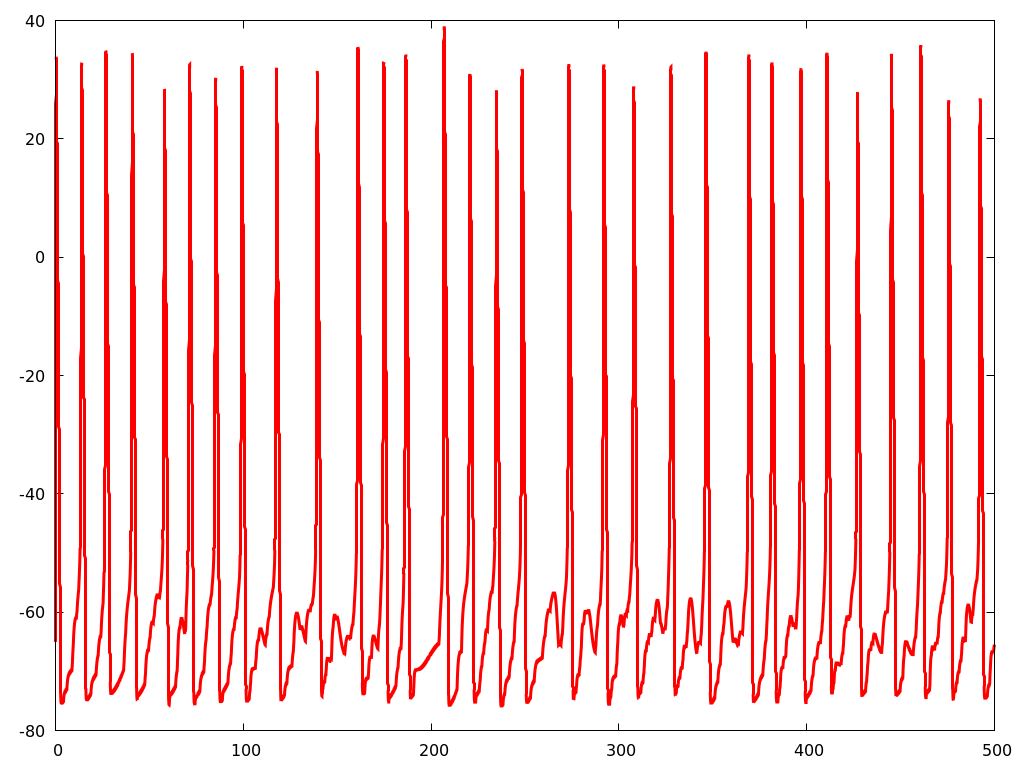}
\includegraphics[height=2.3cm,width=2.3cm]{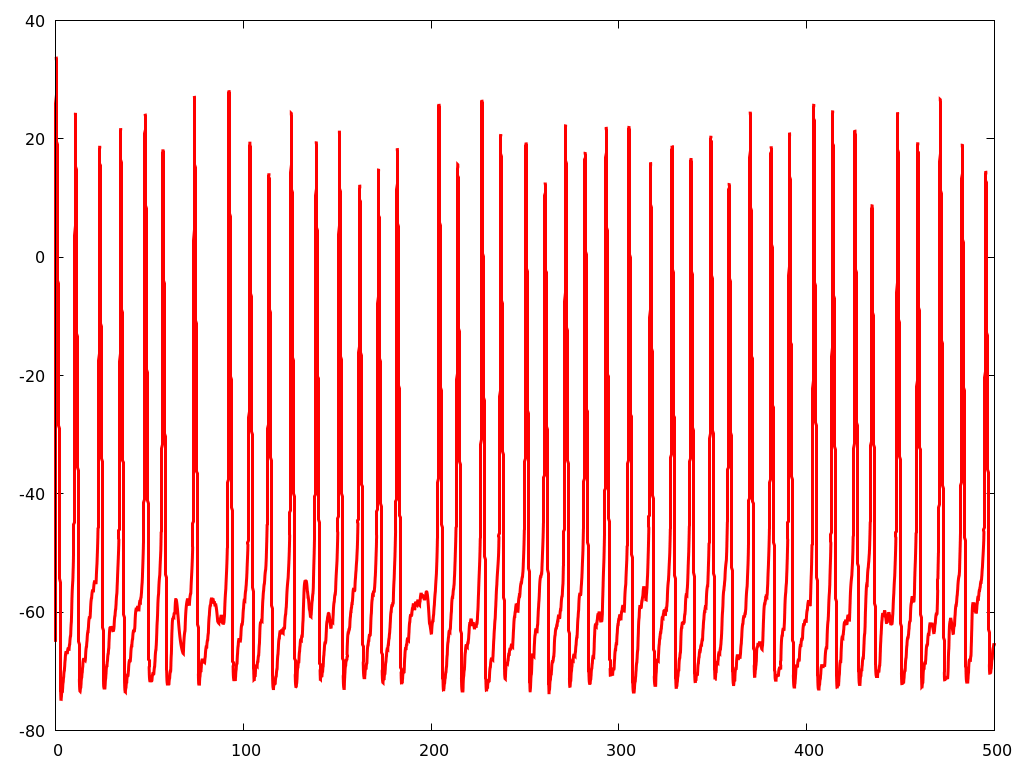}\\
\includegraphics[height=2.3cm,width=2.3cm]{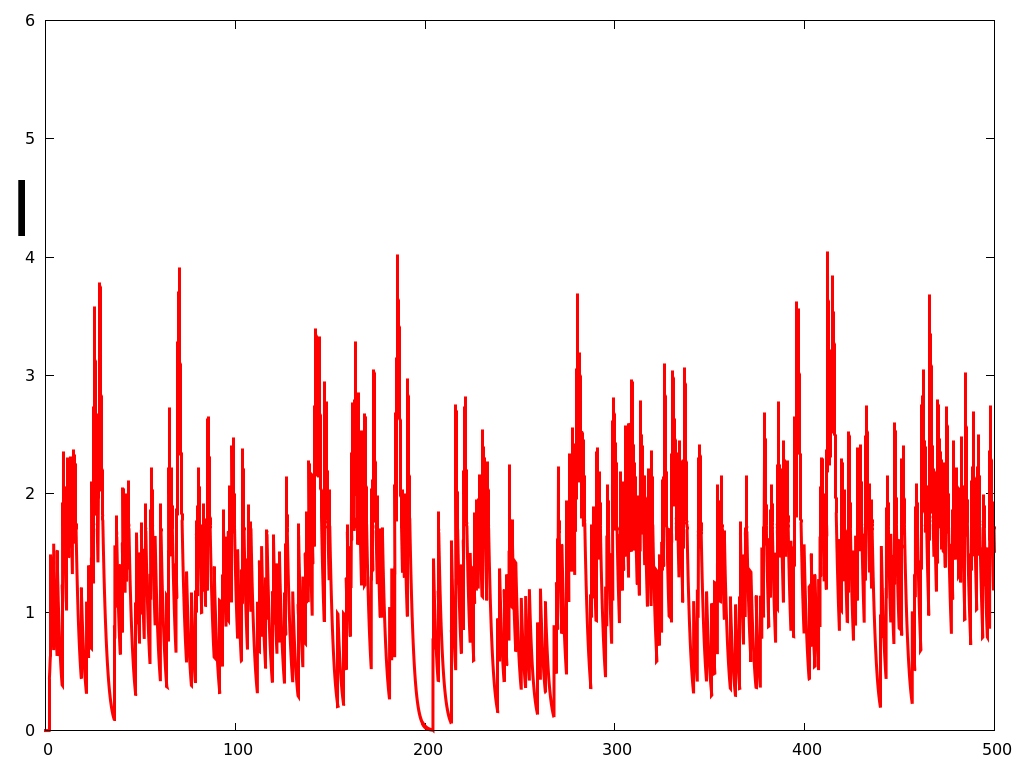}
\includegraphics[height=2.3cm,width=2.3cm]{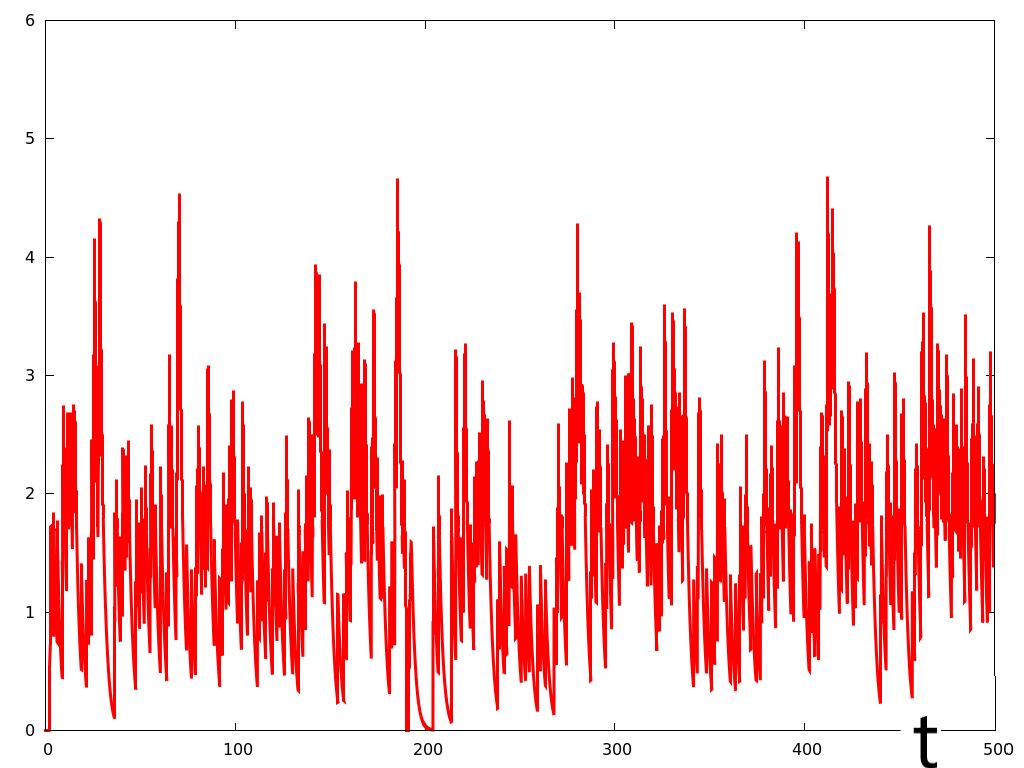}
\includegraphics[height=2.3cm,width=2.3cm]{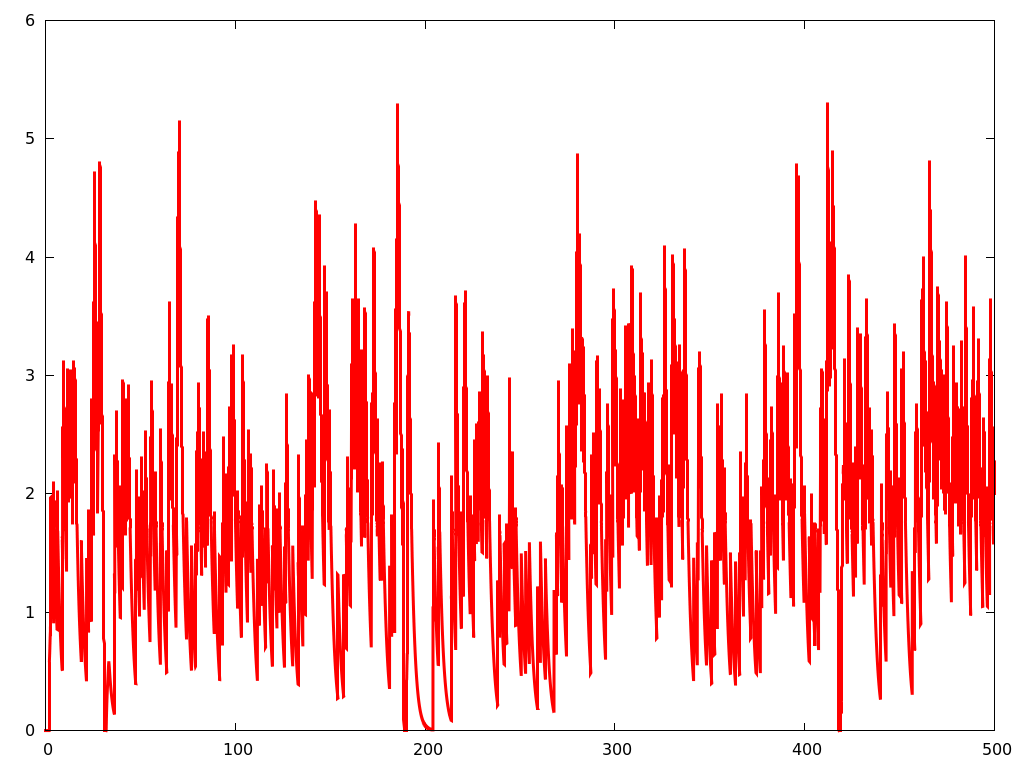}
\includegraphics[height=2.3cm,width=2.3cm]{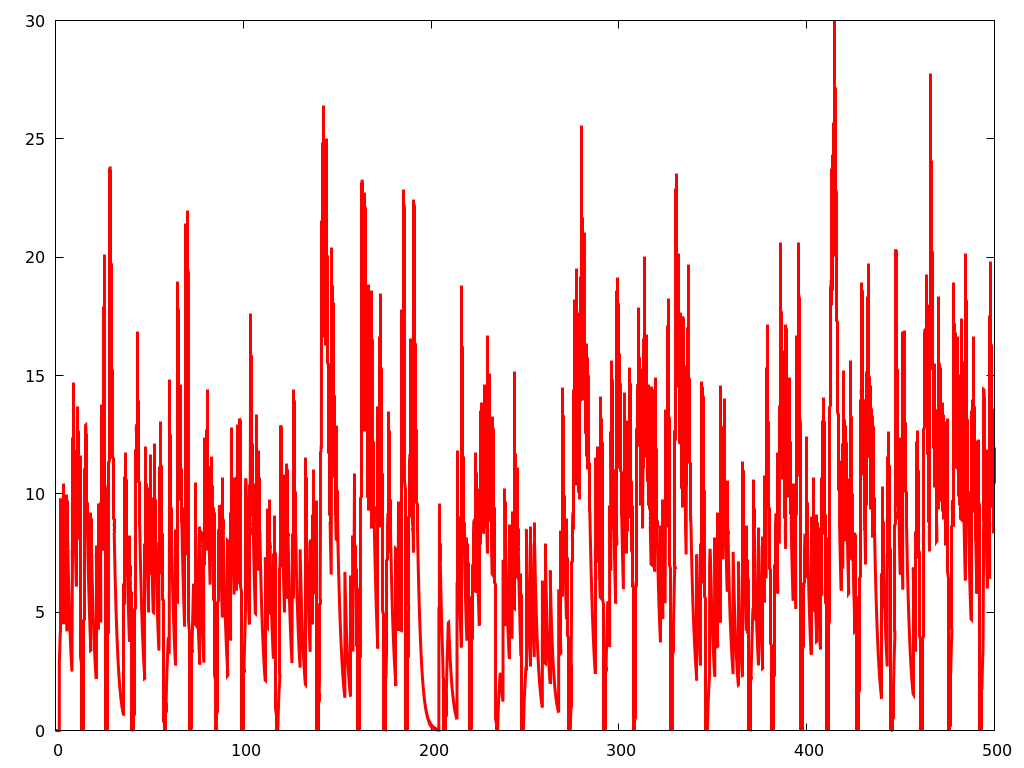}
\includegraphics[height=2.3cm,width=2.3cm]{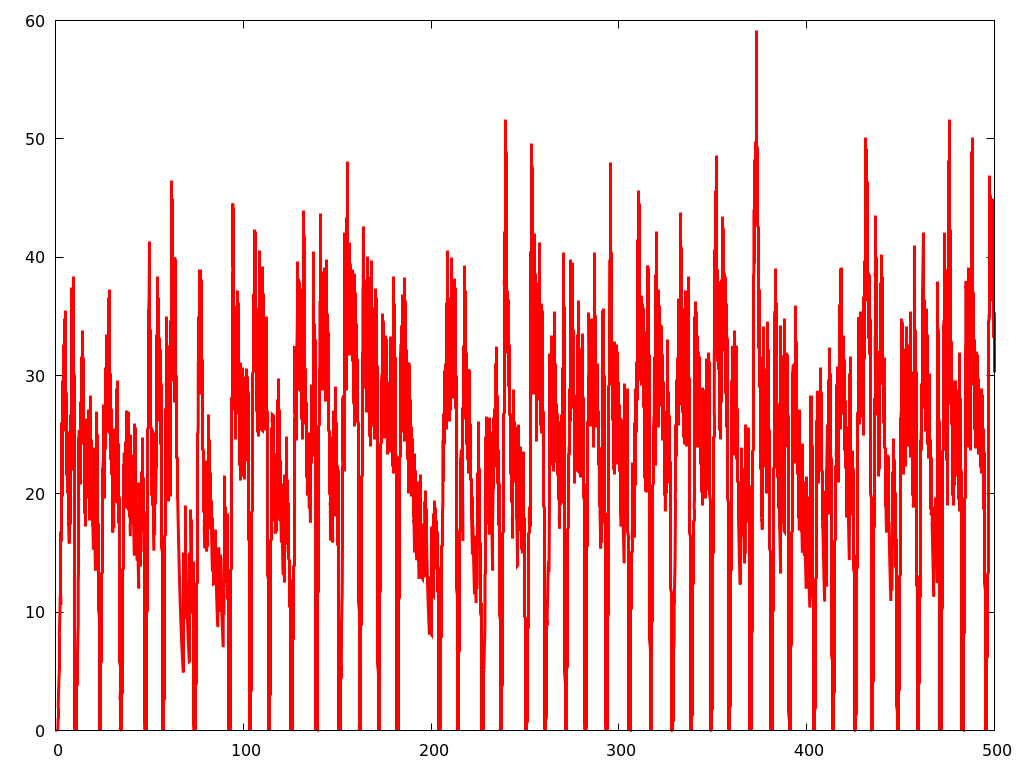}\\
\includegraphics[height=2.3cm,width=2.3cm]{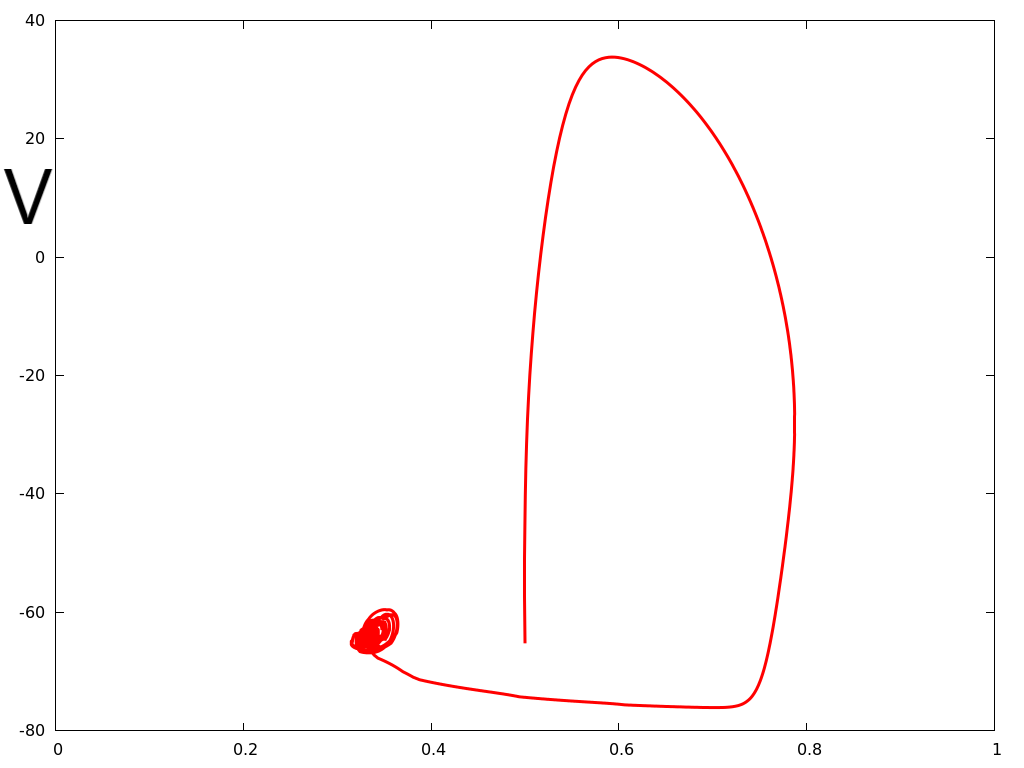}
\includegraphics[height=2.3cm,width=2.3cm]{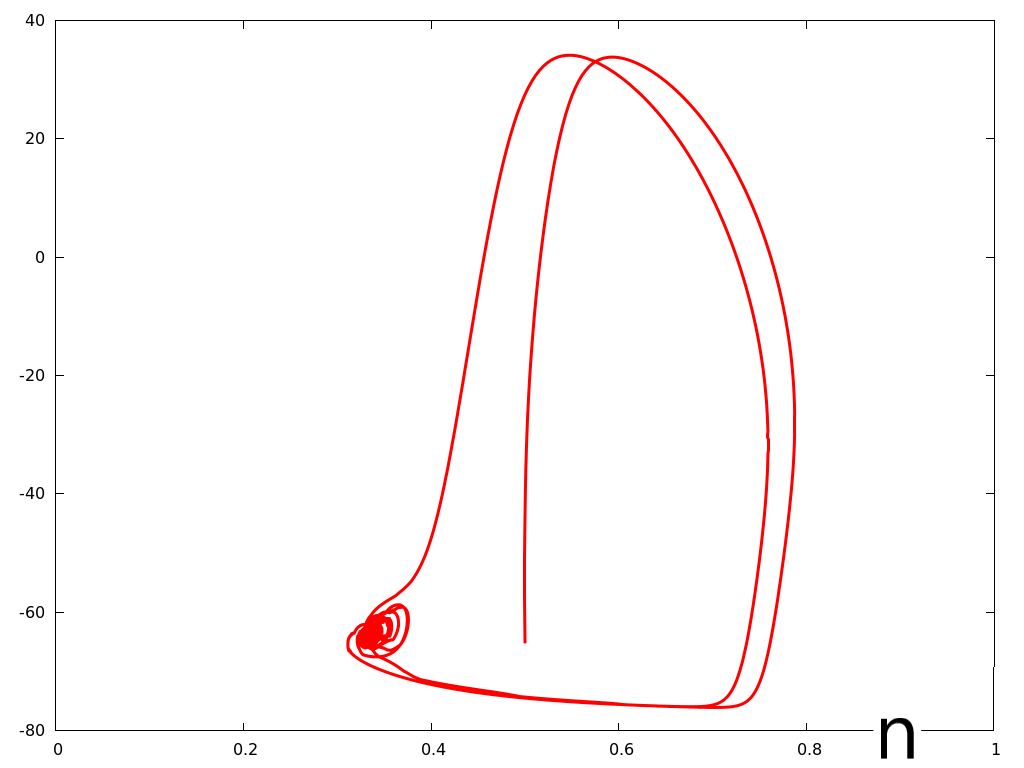}
\includegraphics[height=2.3cm,width=2.3cm]{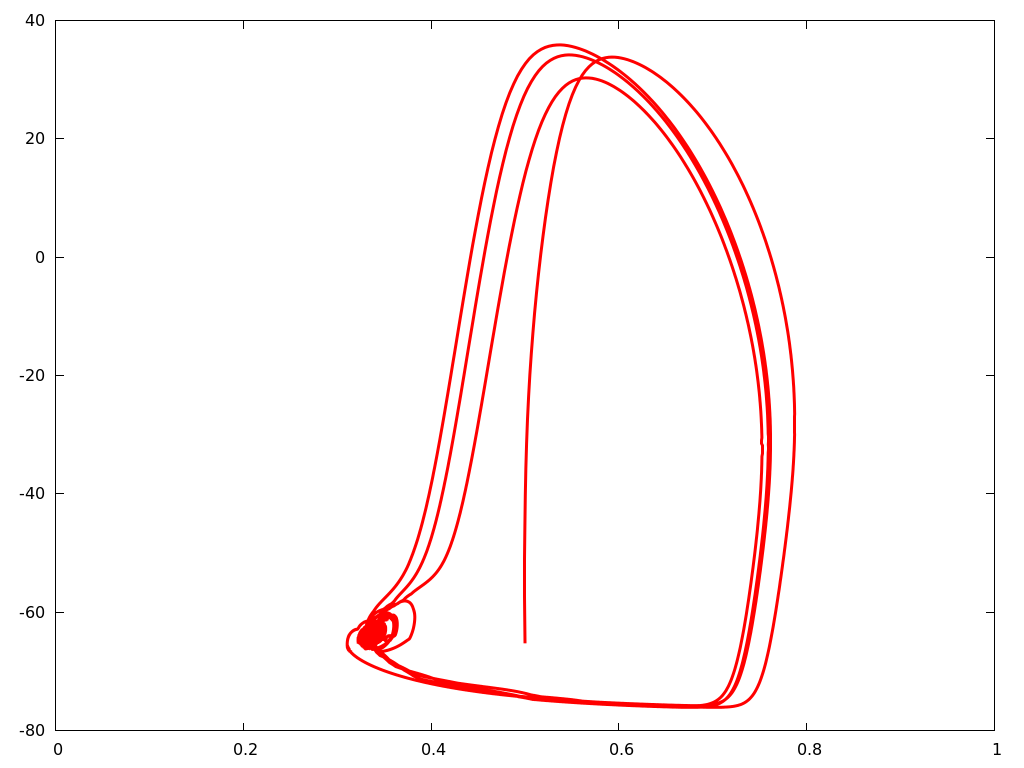}
\includegraphics[height=2.3cm,width=2.3cm]{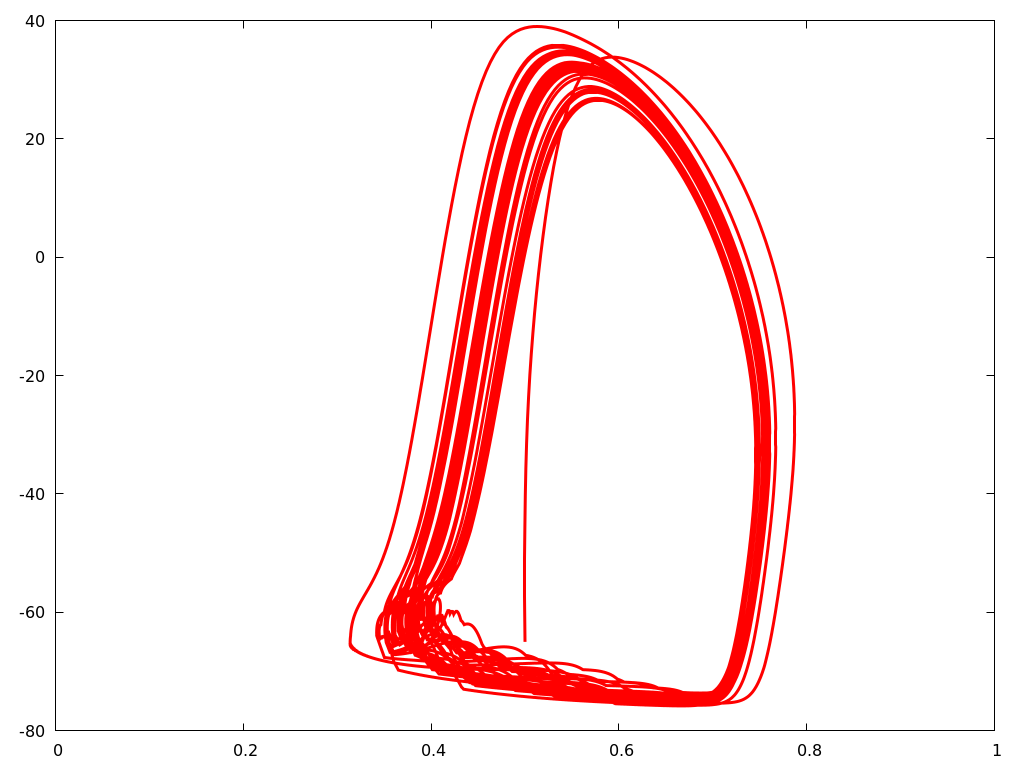}
\includegraphics[height=2.3cm,width=2.3cm]{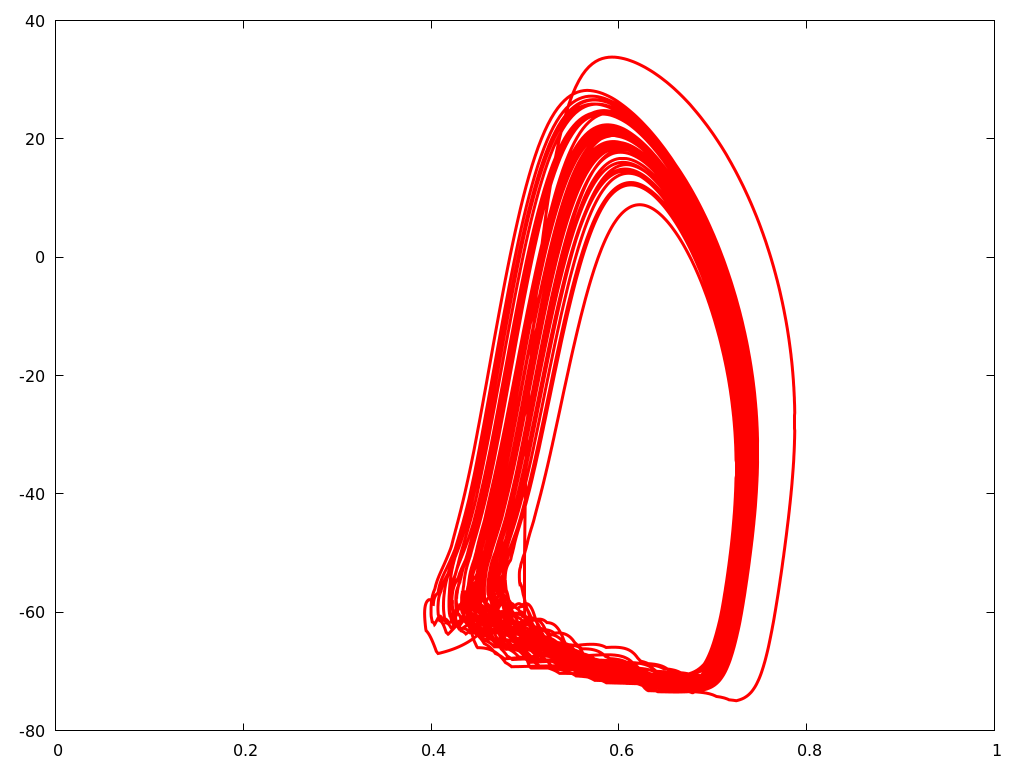}
\caption{Simulations of equation \eqref{eq:HH-onedriven}, for $S^{dr}\in\{0.006,0.007,0.008,0.04\}$ are illustrated. We observe    a bifurcation between no spike and spiking regime.  The first row represents $V(t)$, the second row $-g_EV(t)$ which corresponds to a $I(t)$ for \eqref{eq:HH-single}. The last row illustrates the projection in the $n-V$ plane. The  first spiking regime occurs for $S^{dr}=0.007$ (mean of $-Vg_E\simeq 1.5$). The value for $\lambda$ is $0.9$. For $S^{dr}=0.006$ (mean of$ -Vg_E\simeq 1.3$), there is no spikes, while for $S^{dr}=0.008$, we obtain $6$ spikes per second. Note that each spike for $V$  occurs after an increase signal in $-g_EV$.  For $S^{dr}=0.04$ (mean of$ -Vg_E\simeq 7.4$), there is $60$ spikes per second (for comparison, for $I=7$ in the deterministic case, $i.e.$ for equation \eqref{eq:HH-single}, we had $600$ spikes per second). The last column correspond to $S^{dr}=0.04$ and $\lambda=2.7$. We observe a frequency of $84$ spikes per second (mean of $ -Vg_E\simeq 22$). The values of parameters for the two last columns are those set to the input drive for $E$ and $I$-neurons.}
\end{figure}
\begin{remark}

 It is worth noting that for the last two columns, neurons exhibit a quite high frequency regime. Those values of parameters characterizing the stochastic drive are those which will be set for the $E$ and $I$-neurons. Next, we will focus on the network, which dynamics reflects a balance between the stochastic drive studied in this paragraph, and  the effects of recurring inputs coming from the network.
\end{remark}

\section{Emergent properties in a stochastically driven network}\label{Sec:network_analysis}
We first mention the following theorem which provides a theoretical framework before delving into the numerical analysis. The proof is analog to the one of proposition \ref{prop:HH-onedriven}
\begin{theorem}

The trajectories of the Stochastic Process generated by equation \eqref{eq:HH-Network} are defined on $(0,+\infty)$  and piecewise-$C^1$; For every $i\in \{1,...,N\}$, $V_i$ is continuous, $n_i,m_i,h_i$ are $C^1$, $g_{Ei},g_{Ii}$ have jumps. Furthermore, the set 
\[(E_K,E_{Na})\times (0,1)^3 \mbox{ is  positively invariant for } (V_i,n_i,m_i,h_i) \]

\end{theorem}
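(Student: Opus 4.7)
The plan is to mirror the argument for Proposition \ref{prop:HH-onedriven}, extending it from the single-neuron case to the $N$-node network. The randomness in \eqref{eq:HH-Network} enters only through the conductance equations: $g_{Ei}$ receives Dirac impulses from the Poisson drive $\mathcal{D}(i)$ and from upward threshold crossings of presynaptic $E$-neurons indexed by $\Gamma_E(i)$, while $g_{Ii}$ receives kicks only from presynaptic $I$-neurons via $\Gamma_I(i)$. The equations for $V_i,n_i,m_i,h_i$ contain no impulsive terms, so the only role of randomness in those variables is through the piecewise-continuous conductances, which is exactly the structure exploited in the one-neuron proof.

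First I would form a single ordered sequence $0=t_0<t_1<t_2<\cdots$ of all jump times in the network by merging the Poisson points $\bigcup_i\mathcal{D}(i)$ with the threshold crossings $\bigcup_j\mathcal{N}(j)$. On each open interval $(t_k,t_{k+1})$ the system reduces to a deterministic ODE in $\R^{6N}$ with smooth right-hand side -- the conductances decay exponentially in isolation, while the HH variables obey the usual coupled equations -- so the Cauchy theorem yields a unique $C^1$ solution there. At each $t_{k+1}$ I would update $g_{Ei}$ and/or $g_{Ii}$ by adding the appropriate jump magnitudes $S^{dr}/\tau_E$, $S^{EE}/\tau_E$, $S^{EI}/\tau_E$, $S^{IE}/\tau_I$ or $S^{II}/\tau_I$; the remaining coordinates remain continuous because their equations carry no Diracs. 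Iterating extends the solution to $(0,+\infty)$ provided jump times do not accumulate, which holds almost surely since the drive rates are finite and successive spikes of a given $V_j$ are separated by the HH refractory period. Positive invariance of $(E_K,E_{Na})\times (0,1)^3$ then follows node by node as in Theorem \ref{th:posinv}: nonnegativity of the $\alpha$'s and $\beta$'s confines each $n_i,m_i,h_i$ to $[0,1]$, and at $V_i=E_{Na}$ every summand on the right-hand side of the $V_i$-equation is nonpositive because each factor $(E_\cdot-E_{Na})$ is negative and the conductances are nonnegative, with a symmetric sign check at the lower endpoint.

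The main obstacle I foresee is essentially bookkeeping: a single upward crossing at neuron $j$ fires simultaneous Dirac impulses into every postsynaptic conductance, so each $t_{k+1}$ is a vectorized jump across many coordinates at once, and one must formalize this relative to the filtration generated by the combined Poisson--threshold process. A secondary subtlety concerns the lower endpoint of the invariance interval for $V_i$: at $V_i=E_K$ the inhibitory term $g_{Ii}(E_I-V_i)=-3g_{Ii}$ is strictly negative and could in principle dominate the positive contributions when $g_{Ii}$ is large, so one may need either to widen the interval by replacing $E_K$ with $E_I$ (where $g_{Ii}(E_I-E_I)=0$), or to invoke an a priori bound on $g_{Ii}$ from the exponentially-decaying jump recurrence analogous to the one established in Section~\ref{Sec:single_HH-sto}. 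Neither option disturbs the overall structure of the argument.
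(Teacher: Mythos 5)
Your proposal is correct and follows essentially the same route as the paper, which simply declares the proof analogous to that of Proposition~\ref{prop:HH-onedriven}: piecewise application of the Cauchy theorem between merged jump times, jump updates only in the conductance coordinates, and the sign checks of Theorem~\ref{th:posinv} for invariance. Your caveat about the lower endpoint is a genuine refinement the paper does not address: since $E_I=-80<E_K=-77$, the term $g_{Ii}(E_I-V_i)$ is negative at $V_i=E_K$, so the stated invariance of $(E_K,E_{Na})\times(0,1)^3$ really does require either an a priori bound on $g_{Ii}$ or replacing $E_K$ by $E_I$, exactly as you suggest.
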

In this section, our aim is to illustrate how variation of parameters leads to emergent properties in the network. This section results from a large set of numerical simulations. We have started our exploration from:
\[S^{EI}=S^{IE}=S^{II}=S^{EE}=0.01\]
and then moved each one of the parameters within the range $[0.002;0.03]$. Among the parameters under consideration,  tuning the value of the parameter $S^{EE}$ appears to be the most effective way to  synchronization in the network. Varying this parameter allows to identify a path from stochastic homogeneity to synchronization. According with our numerical simulations, we will further discuss and illustrate the apparition of the following phenomena in the network:
\begin{itemize}
\item path from homogeneity to partial  synchronization and  synchronization; 
\item correlation between $g_E$ and $g_I$;
\item emergence of the $\gamma$ rhythm. At some point the network has his own rhythm of oscillation consistent with the so called gamma frequency, and which may be different from individual neuronal rhythms;
\item we will also discuss the effect of the parameters variation on the mean spiking rate of E and I neurons.
 
\end{itemize}
This section relies strongly on figures 14 and 15.
\subsection{A path from random-homogeneity toward partial  synchronization and  synchronization}
In this part, we focus on the following set of  parameters:
\[S^{EI}=S^{IE}=S^{II}=0.01\]
and
\[S^{EE} \mbox{ is varied from }0.01 \mbox{ to } 0.03.\]
This variation of $S^{EE}$ provides a path along which the system goes from stochastic homogeneity toward partial  synchronization and  synchronization. We describe hereafter the dynamical behavior corresponding to three distinct values of $S^{EE}$ at which those typical states are observed. The main tool used to characterize these states is the rasterplot: for each time, we plot the neurons which are in a spiking state. The rasterplots relevant for this section are illustrated in the first row of the figure 14\\
\subsubsection{Random-homogeneity}
In this paragraph, we consider the parameter values
\[S^{EI}=S^{IE}=S^{II}=S^{EE}=0.01.\]
For these values, the network exhibits a behavior for which no specific pattern seems to emerge, see top left panel of figure  14. We call this state randomly homogeneous since it looks like the spikes of neurons might be chosen to spike randomly and independently. Note that the mean value of spikes per second is around $11.49$ for $E$-neurons and $48.48$ for $I$ neurons, see table 2. A simple and relevant indicator of the total excitability of the network is given by the mean  value of $V$ across the network over the time. In figure 13, we have therefore plotted the mean value of $V$ over all  neurons as a function of time. For illustrative comparison, and to characterize this random-homogeneous state, we have also plotted in figure 13, the following output. At each time step (here the time step is set $dt=0.01 ms$), a spike shaped by analogy with the typical  HH-V spike is generated with a probability $p$. We choose the probability $p$ to solve the following equation:
\[\frac{1}{dt}\times 100\times p \simeq 11.48\times 375+48.48\times 125.\]
 Which means that the mean number of spikes generated by this simple stochastic iterative equation is equal to the men number of spikes in the network. For illustrative purpose, the signal is then divided by $N=500$ to obtain the mean value per neuron. We observe that the two plots are similar, which suggests that the appellation 'randomly-homogeneous' is relevant.\\
 \textbf{Biological interpretation} \\
 From a Neuroscience point of view, this behavior is consistent with the so-called background activity. Note that the coupling has dramatically decreased the number of spikes per neuron, in comparison with simulations done in the previous section with neurons stimulated only by Poissonian drives. This emphasizes the inhibitory effect of I-neurons in the network dynamics: for the parameters considered here, the recurrent inputs have a global inhibitory effect.

\subsubsection{Partial Synchronization}
As $S^{EE}$ is increased, the synchronization phenomenon emerges.  By synchronization, we refer here to a state of the network where all the neurons of the network will fire within a short interval of time, see rasterplot of figure 14, first row and last column. Between the random-homogeneous state and the state of synchronization, a partial synchronization is observable: i.e. a state in which only a portion of the population will fire during an identified event. 
In this paragraph, we describe this state which is typically observed, for the following values of parameters:
\[S^{EI}=S^{IE}=S^{II}=0.01 \mbox{ and }S^{EE}=0.017\]
The rasterplot, for these parameters is reported in figure  14, first row, second column. The observation of the rasterplot has to be combined with other figures. In figure 15, second row, second column, we report the number of $E$ and $I$ spikes occurring in the time interval $[425,450]$ which corresponds to an identified event. We observe that only around $190$ spikes from $E-$neurons have been recorded  during this interval. Around $190$ spikes from $I-$neurons have also been recorded during the considered interval (some $I-$neurons have therefore spiked several times).  The panel in the second row, second column of figure 14 shows moreover the time evolution of the $V-$ potential of a given E-neuron. It shows  2 spikes over the interval [400,500], even though 4 events are identified in the network dynamics.  The panel in the second row, second column of figure 15 shows a specific $I-$neuron  which spikes 6 times during the same interval. Finally, we refer also to figure 16 which provides a 3 dimensional visualization of the phenomenon. In this figure, neurons that spike during the time interval appear highlighted in comparison with those which do not spike.

\subsubsection{Synchronization}
When $S^{EE}$ is increased above, the synchronization occurs. We refer to columns 3 and 4 of figures  14 and 15 for observation of this state. These  columns correspond respectively to the following values of the parameters:
\[S^{EI}=S^{IE}=S^{II}=0.01 \mbox{ and }S^{EE}=0.02\]
and
\[S^{EI}=S^{IE}=S^{II}=0.01 \mbox{ and }S^{EE}=0.03\].
Rasterplots are the most illustrative representation for the aforementioned activity and are reported in the first line of figure 14. The first row of figure 15 illustrates that in this case a regime all the neurons will spike during a given event. See also  
the second row of figure 14 which illustrates a E-neuron which spikes  at each event. Synchronization in this context can be compared to excitation waves spreading over the whole network in short time intervals.
\subsection{$g_E$ and $g_I$ correlation}
The values of $g_E$ and $g_I$ appear to be clearly correlated. This can be observed in rows 3 in both figures 14 and 15.   Recall that, according to equations, for a given neuron,  $g_E$ results from the number of presynaptic $E$-spikes received, while $g_I$ results from the number of $I$-spikes received. Consequently, for a specific neuron, when an augmentation/decrease in both its presynaptic E and I-neurons spiking rate occur at the same time,  we  observe a correlation. This is typically the case during synchronization. It is also observed during partial synchronization, see figure 15, row 3, column 2. In our network, this correlation could be used to detect partial synchronization. This correlation found in networks of E and I neurons has been used in \cite{Amb-2020} to build a two dimensional able to describe typical wandering brain rhythms. Note that evidence of $g_E$ and $g_I$ correlation have intensevily been investigated in experiments, see for example \cite{Ata-2009,Oku-2008,Shu-2003,Tan-2004}.  
\subsection{Gamma frequency and neurons frequencies}
When partial synchronization and synchronization occur, events arise in the network at a rhythm (frequency) of $40 Hz$, see figure 14, first row, which is a typical oscillation in the gamma regime. Note that the frequency of the network is different from the frequency of each $E$ and $I$ neurons in the state of partial synchronization. Note also, that the variation of $S^{EE}$ has a strong effect on the frequency of individual $E$-neurons but limited effect on the frequency of I-neurons, see table 2 and third rows of figures 14 and  15. We must also recall here that each neuron if it was not connected would have a higher frequency: 60 for E-neurons and 84 for I-neurons. For the case considered here, the network activity pushes down the spiking activity of each neuron, and in some regimes allows a gamma rhythm oscillation for the network.  
\subsection{Waves of excitation}
We want to emphasize here, that each individual cell, if there was no connexions in the network, would be in a high frequency spiking state. The network activity drastically brings down this activity. In states of partial synchronization and synchronization, spikes  spreads trough the network thanks to the network connexions. It would be of dynamical interest to follow these paths of excitation through the network, for the specific topology considered here. For example, to compare with other topologies or continuous equations, see for example \cite{Amb-2009,Amb-2016}.

\subsection{I(t) and spikes}
As sketched in section 2,  the spiking activity of a specific neuron depends on the value of $I$. In the network model \eqref{eq:HH-Network}, the parameter $I$ corresponds to the currents induced by excitation and inhibition fluxes. This means that the dynamics of a specific neuron in the network are the same that the dynamics of a single neuron modeled by equation \eqref{eq:HH-single} with a corresponding $I(t)$ equal to the excitatory and inhibitory fluxes. According with that, we denote:
\[I(t)=g_E(t)(V_E-V)+g_I(t)(V_I-V).\]
This quantity is plotted in figures  14 and 15, row 5. Some qualitative properties of individual neurons are remarkable and may be seen as resulting for the corresponding $I(t)$.  For example, we remark that, individual neurons may exhibit a kind of mixed mode oscillations (see fig 14,15, rows 3 and 6). We also remark, that if $I$ is varied while the neuron has already started to spike, it has a little effect on the dynamics. We refer here for example to the column 3 of figure 14.  Just after the second spike of the $E$-neuron, the current $I$ is high, but this  does  not generate another spike. Note finally, that the shape of the  spike may be qualified as attractive and stable in the sense that it remains roughly the same at each spike. A study of the attractor of the non-homogeneous HH single equation would be of interest regarding this aspect. 
\subsection{Statistics of spikes}
Finally, an usual output to monitor the activity of such a network is the mean value of $E$ and $I$ spikes per second per neuron. We denote these outputs respectively by $Ess$ and $Iss$. We have reported those outputs in tables 2 to 5. We observe, that for the network and range of parameters  considered here:
\begin{itemize}
\item increasing $S^{EE}$ has a strong effect on $Ess$ but little effect on $Iss$.
\item increasing $S^{IE}$ has a strong effect on $Iss$ but little effect on $Ess$.
\item increasing $S^{EI}$ has a notable  effects both on $Iss$ and $Ess$.
\item increasing $S^{II}$ has a strong effect on $Iss$ but little effect on $Ess$.
\end{itemize}

\section{Conclusion}\label{Sec:concl-persp}
In this article, we have considered a network of inhibitory and excitatory neurons, where each cell was modeled by the HH equations. The topology chosen for the network was inspired by prior work on the visual cortex $V1$.  After reviewing and revisiting the dynamics of a single HH model, we have considered a single HH cell driven by an external Poissonian input. Our theoretical results and preliminary numerical calculations suggested a framework for studying the network by way of appropriate values for the parameters in the model. Continuing with this approach, we analyzed numerically the network by varying the coupling parameters $S^{ab}, a,b\in\{I,E\}$. From there, we identified $S^{EE}$ as the most effective parameter (in the range of parameters considered) for reaching synchronization, and illuminated the transition from random homogeneity towards partial synchronization and synchronization. We also illustrated emergent phenomena such as:  $g_E$ and $g_I$ correlation and gamma-oscillations. A striking point of our study is that rhythms emerge as a property of the network activity itself: for example, in the synchronization regime, the network is oscillating at a gamma rhythm of 40 hz, even though each individual cell features a natural oscillation frequency of 60 (for E-neurons) and 80 hz (for I) in isolation. Finally, we would like to discuss a few two-dimensional models of interest with respect to some aspects treated in the paper. In the homogeneous stochastic regime, every neuron within its class, plays the same role and any synchronized effect in time emerges. Therefore, the dynamics could be described by two units: one E neuron and one I neuron. To fit the network outputs, these two neurons should be fed with stochastic inputs corresponding to the one observed in the network and the outputs should match the inputs: this is a kind of fixed point problem. In the synchronized regime, the E and I conductances are highly correlated. In this case the idea would be to construct coupled equations which match inputs, outputs with strong interaction between the E and I conductances (in contrast with the previous case). We refer to \cite{Amb-2020}, for a two dimensional model of conductances (whithout membrane potential and ionic fluxes) related to this case. The partial synchronization regime would be modeled as the latter, with different input/outputs to match. 
\begin{table}
\begin{tabular}{|c|c|c|c|}
\hline
$S^{EE}$&Ess&Iss\\
\hline
0.001&10.35&48\\
\hline
0.01&11.4933&48.48\\
\hline
0.02&36.51&49.12\\
\hline
0.03&40.11&48.56\\
\hline
\end{tabular}
 \label{ta:SEE}
\caption{Variation of $S^{EE}$ and its effect on the mean value of $E$ and $I$-spikes per second per neuron.}
\end{table}

\begin{figure}
\begin{center}
 \label{fig:Nho-Mean}
\includegraphics[height=4cm,width=3.8cm]{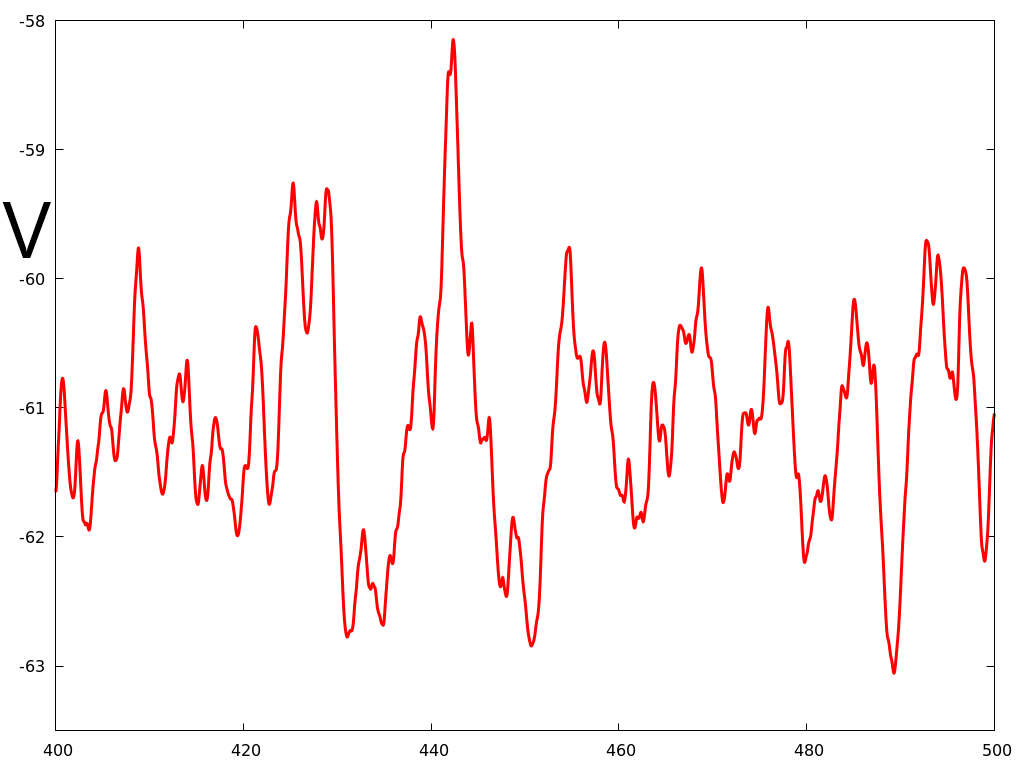}
\includegraphics[height=4cm,width=3.8cm]{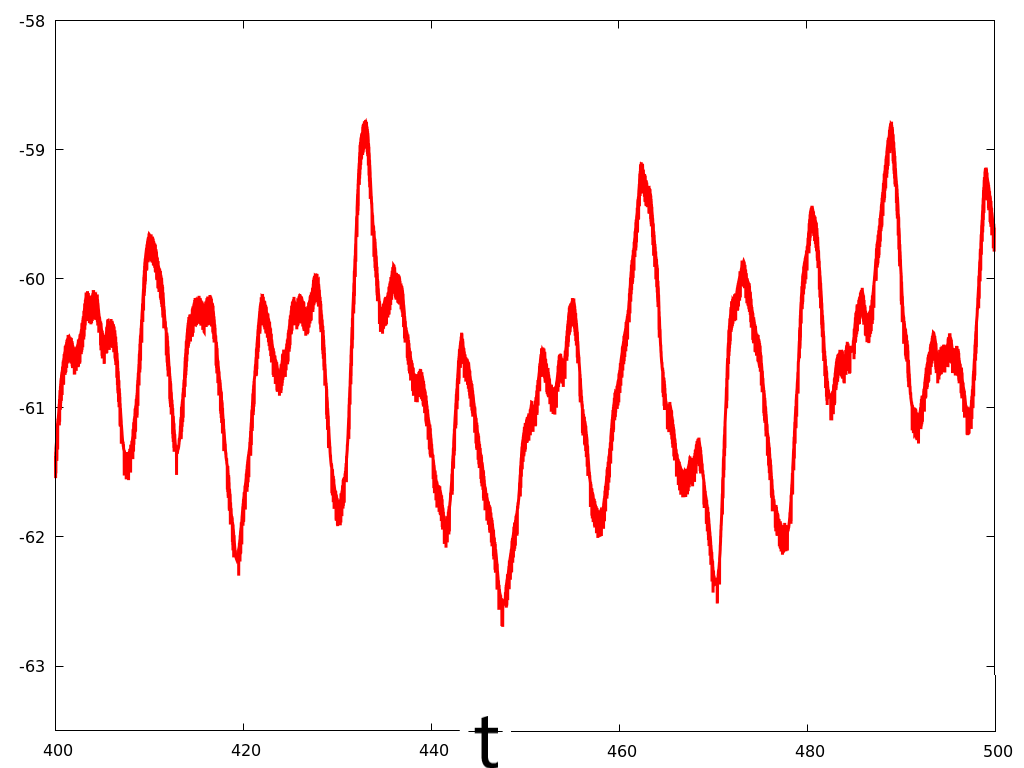}
\includegraphics[height=4cm,width=3.8cm]{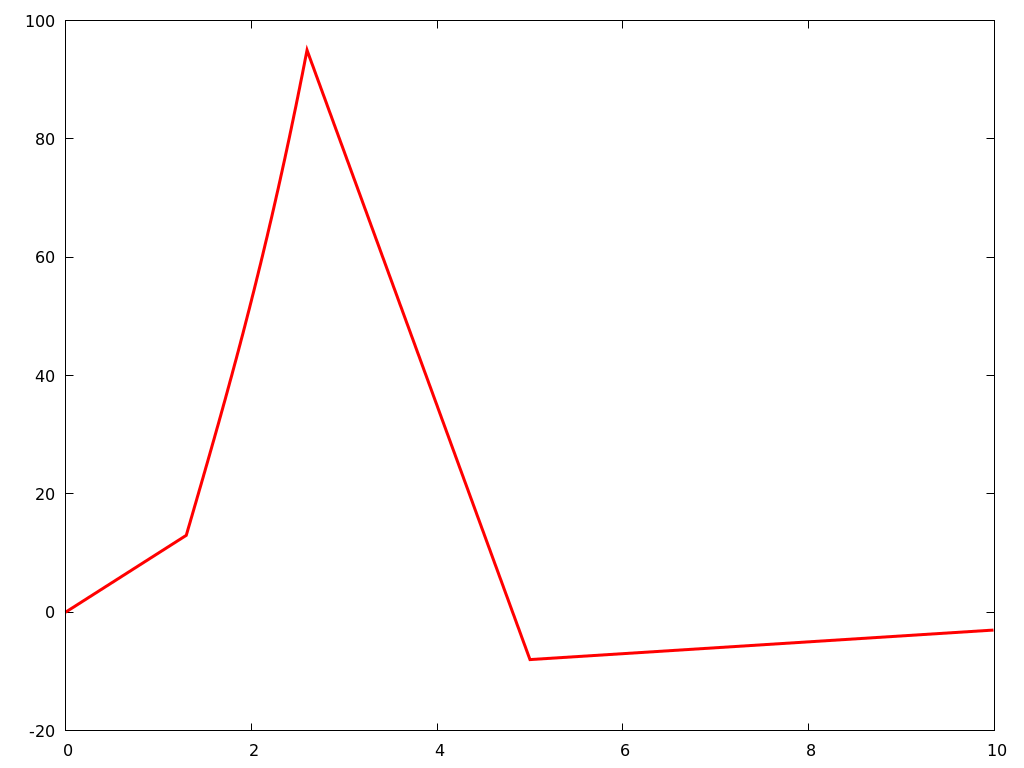}
\caption{Left panel: simulation of equation \eqref{eq:HH-Network} for $SII=SEI=SIE=SEE=0.01$. This plot shows the evolution of the mean value over $(V_i)_{i\in \{1,..,N\}}$ as a function of time. Right panel: for comparison, we have simulated a schematic temporal signal roughly mimicking the HH action potential over 10 ms. Middle panel: at each time, this signal was generated randomly with a mean of $10.375$ spikes per ms. This gives in mean, 10375 spikes over one second. This choice was made to approximate the $11.5\times 375+48.5\times 125=10375$ spikes occurring in the network, see table 2. Then, the signal was divided by 500 (to obtain a mean per neuron), and plotted.}
\end{center}
\end{figure}

\begin{figure}
\begin{center}
 \label{fig:NhotoSync1}
\includegraphics[height=2.5cm,width=2.9cm]{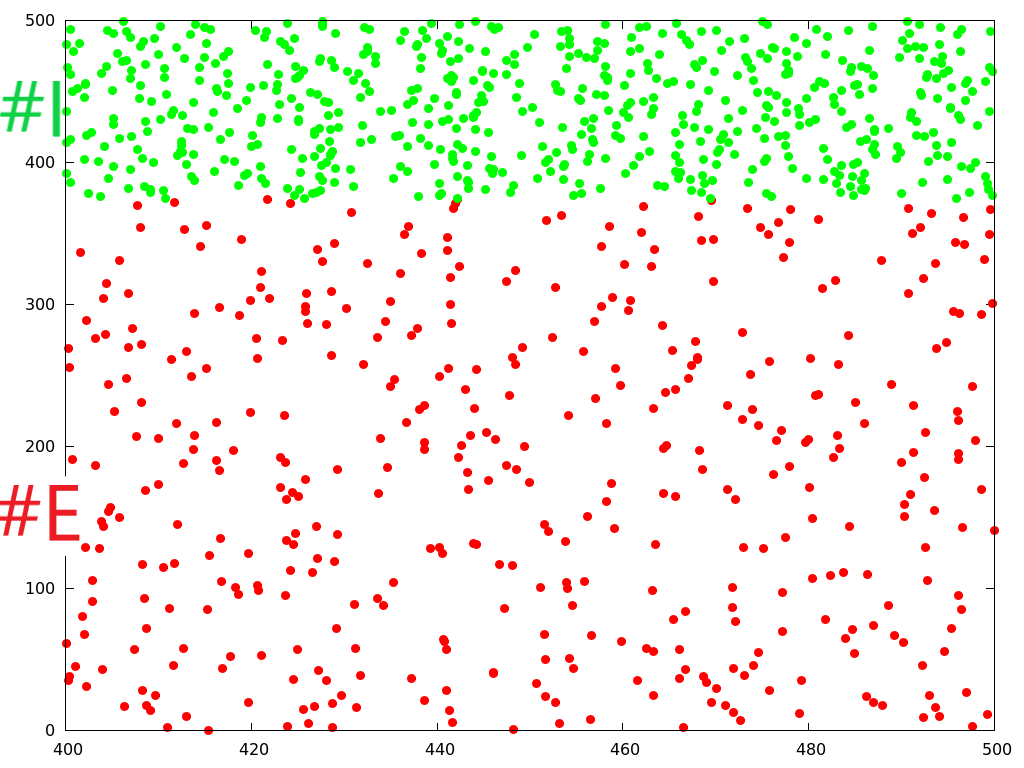}
\includegraphics[height=2.5cm,width=2.9cm]{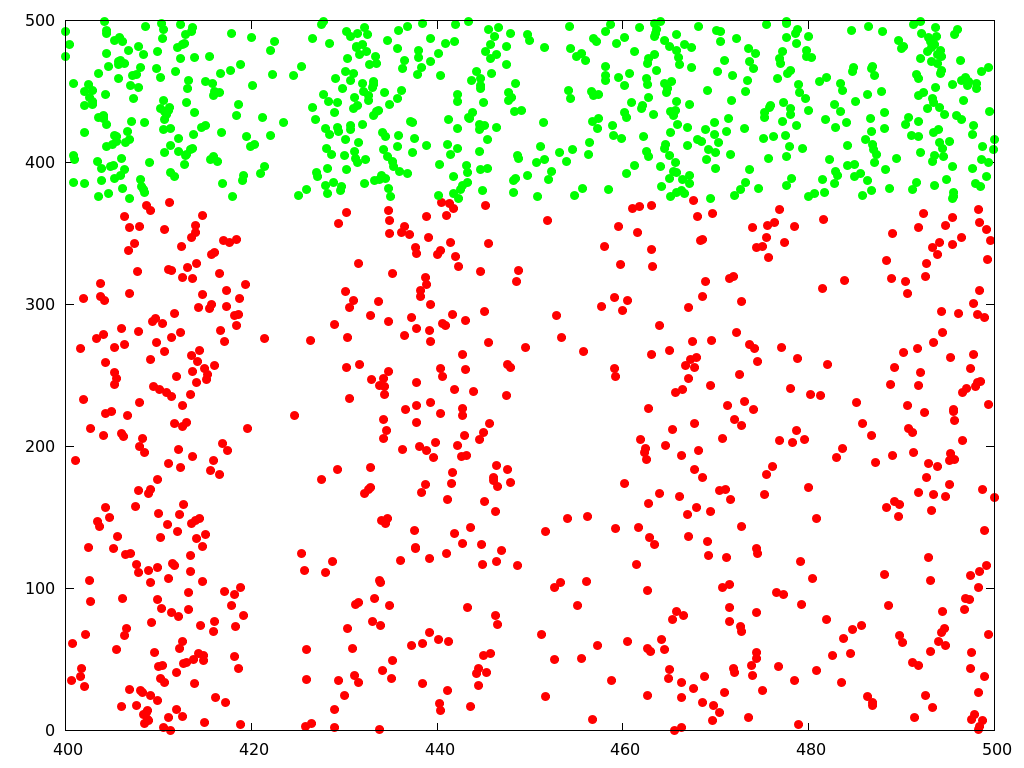}
\includegraphics[height=2.5cm,width=2.9cm]{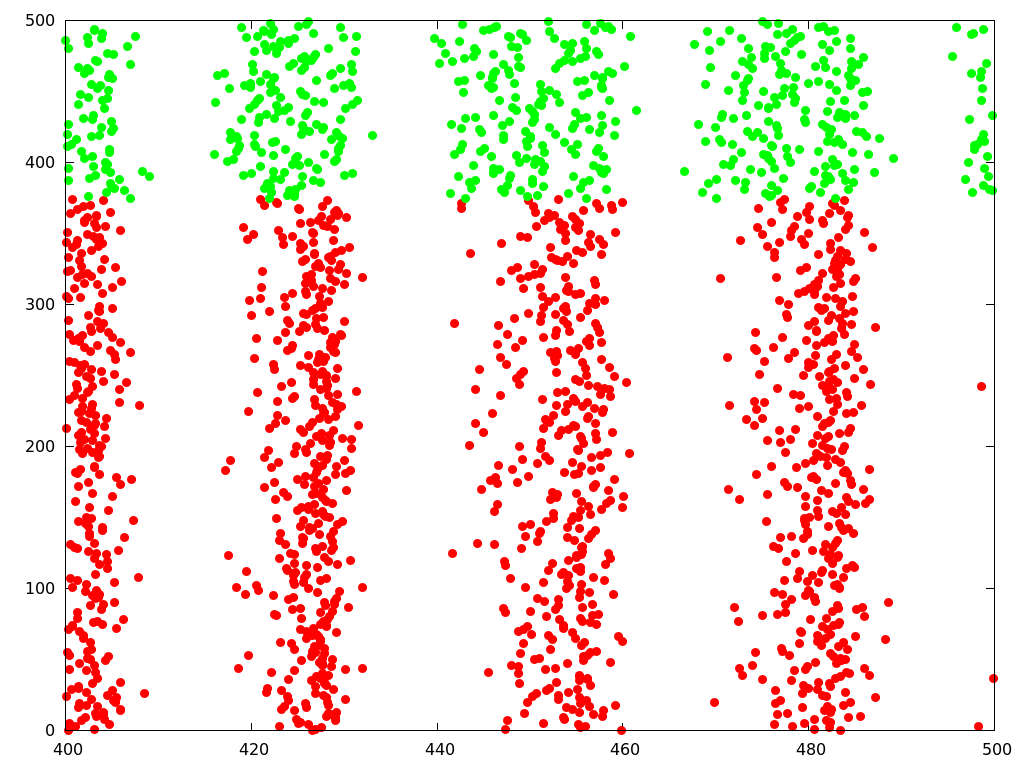}
\includegraphics[height=2.5cm,width=2.9cm]{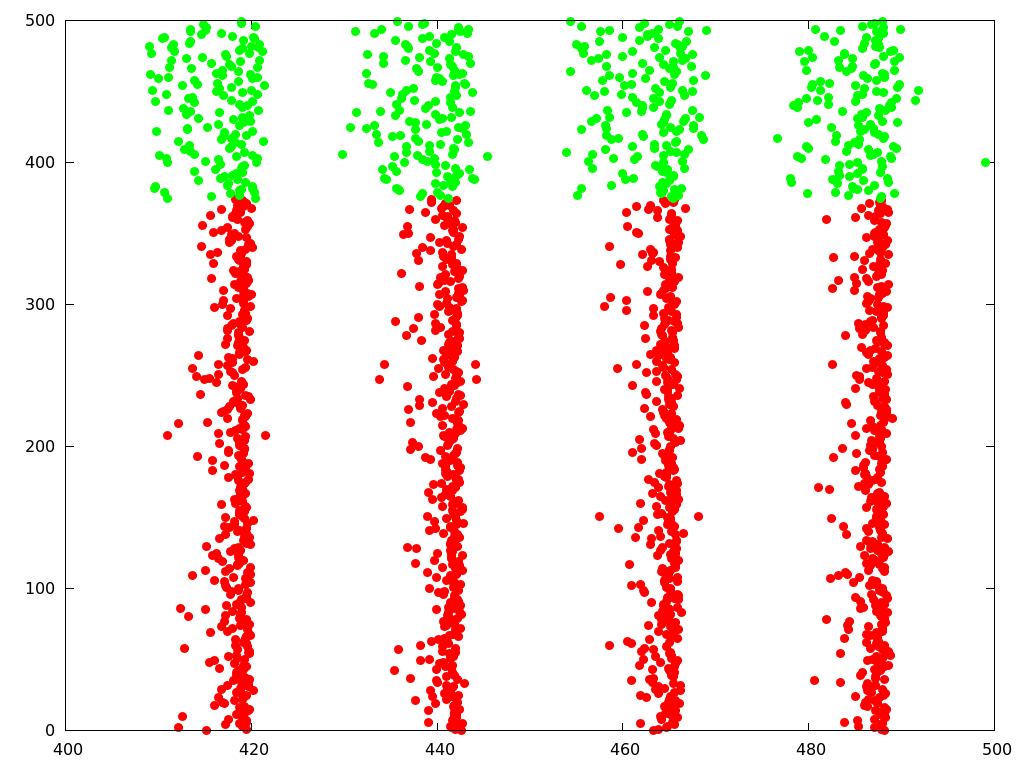}\\
\includegraphics[height=2.5cm,width=2.9cm]{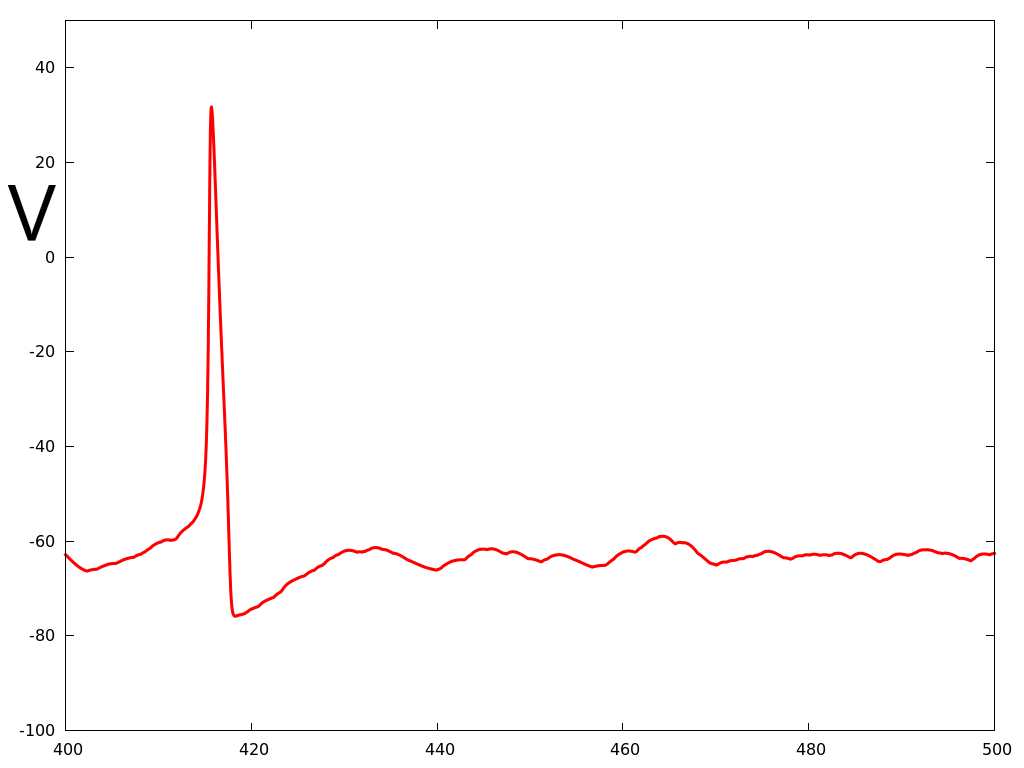}
\includegraphics[height=2.5cm,width=2.9cm]{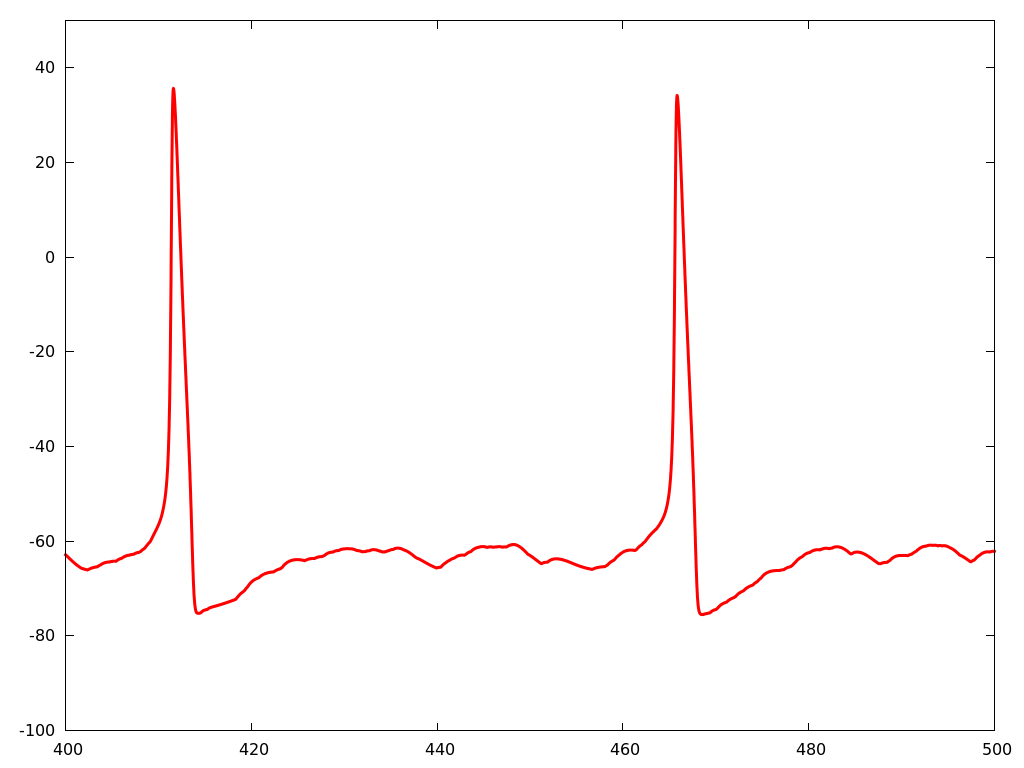}
\includegraphics[height=2.5cm,width=2.9cm]{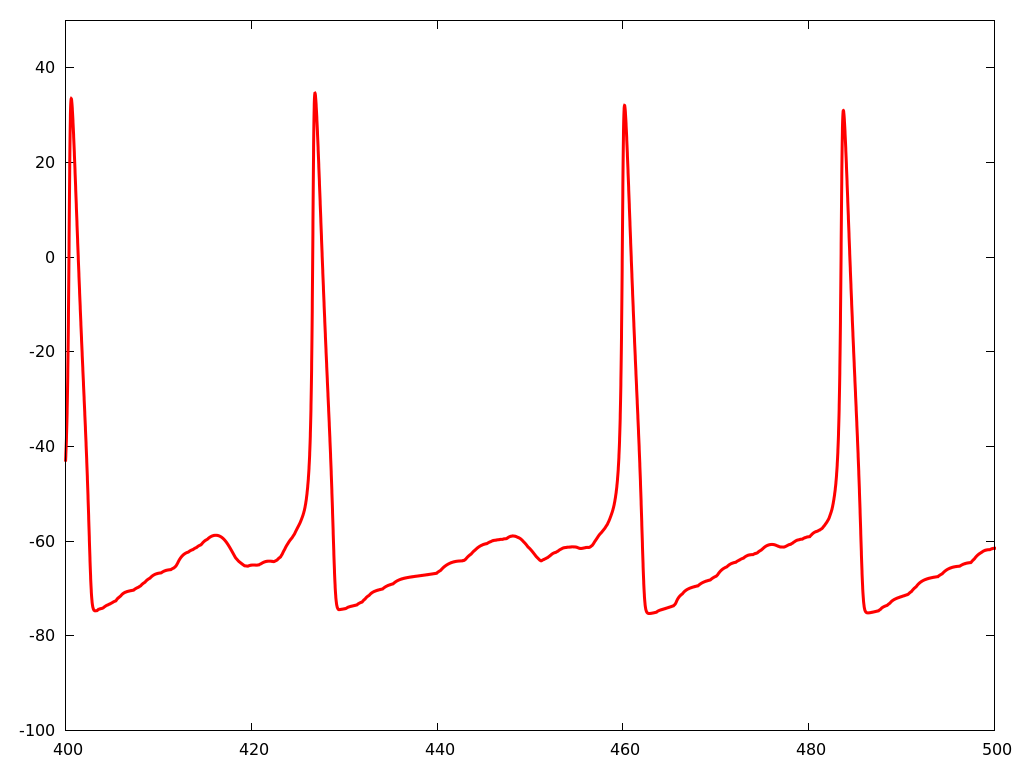}
\includegraphics[height=2.5cm,width=2.9cm]{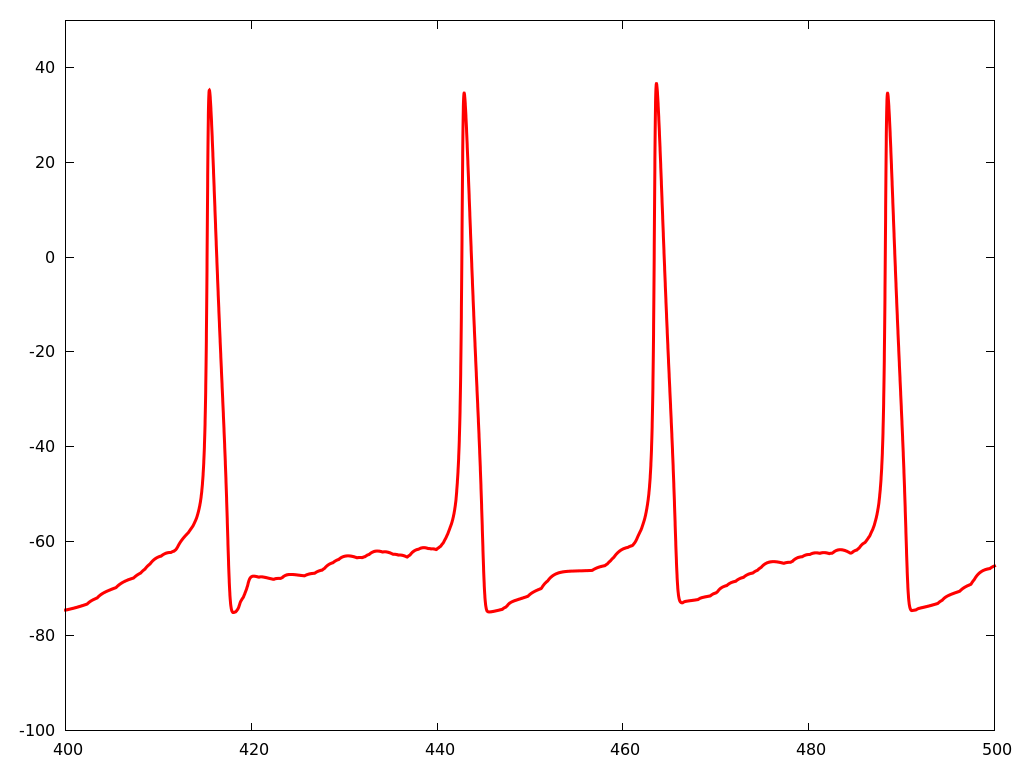}\\
\includegraphics[height=2.5cm,width=2.9cm]{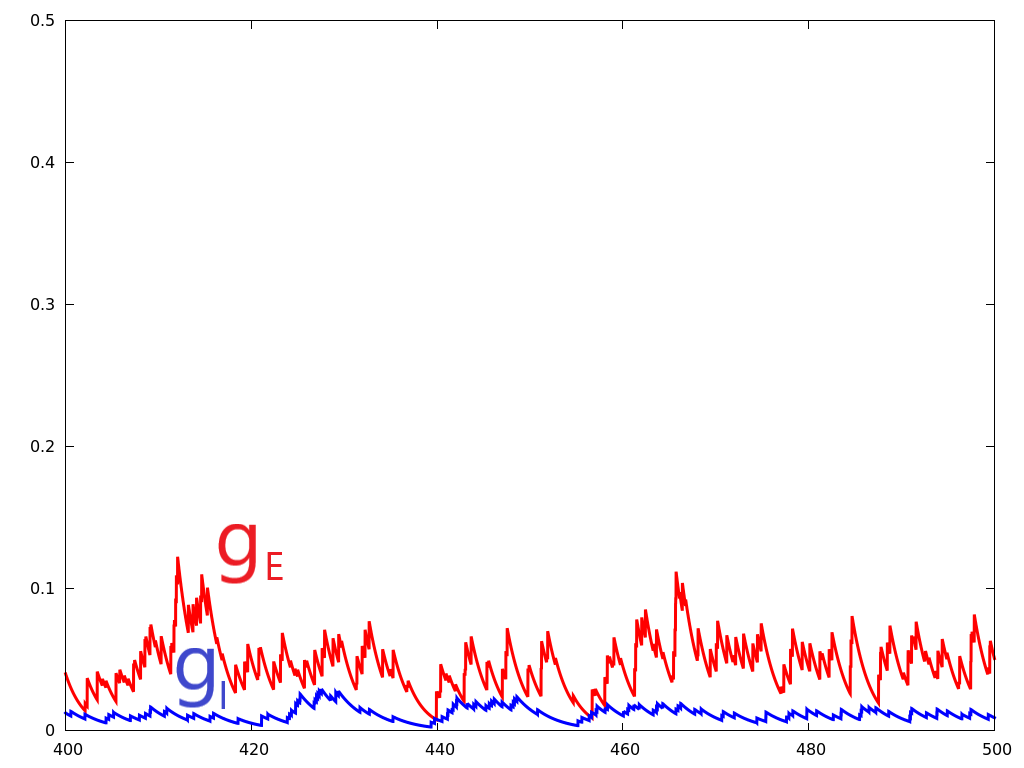}
\includegraphics[height=2.5cm,width=2.9cm]{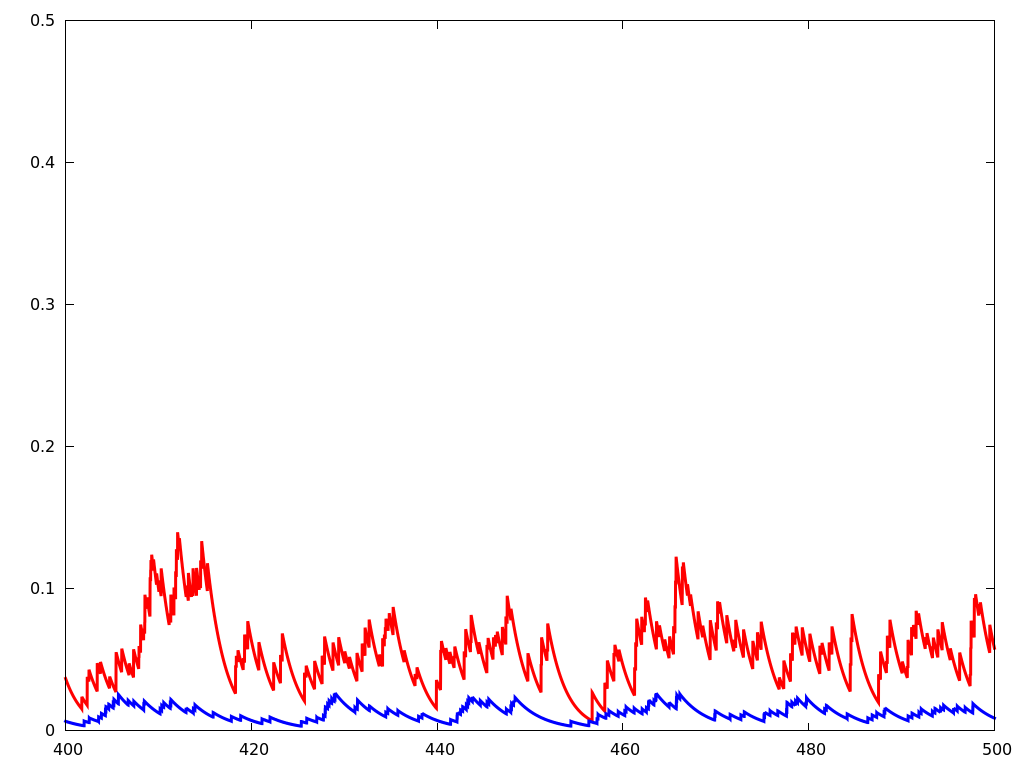}
\includegraphics[height=2.5cm,width=2.9cm]{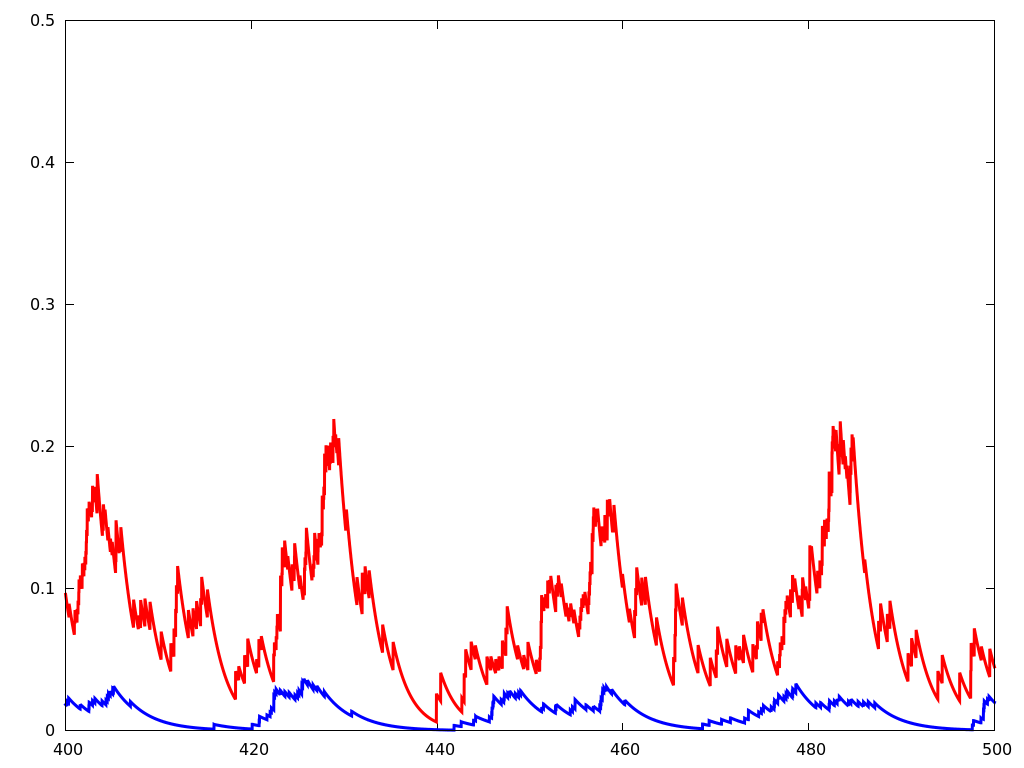}
\includegraphics[height=2.5cm,width=2.9cm]{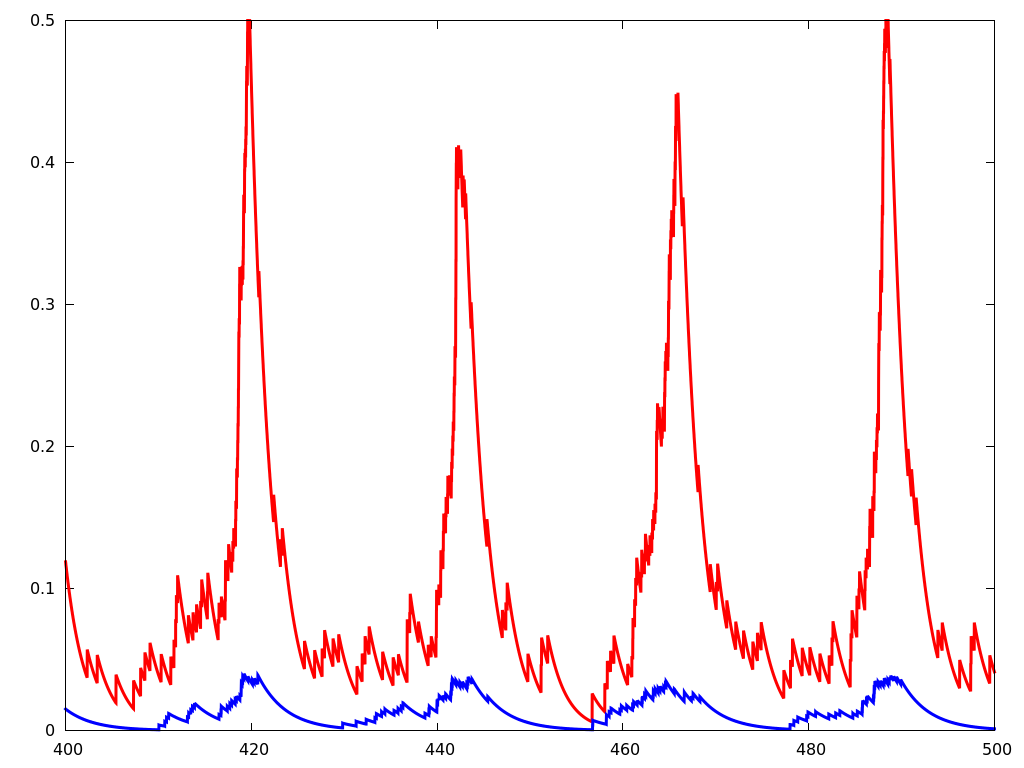}\\
\includegraphics[height=2.5cm,width=2.9cm]{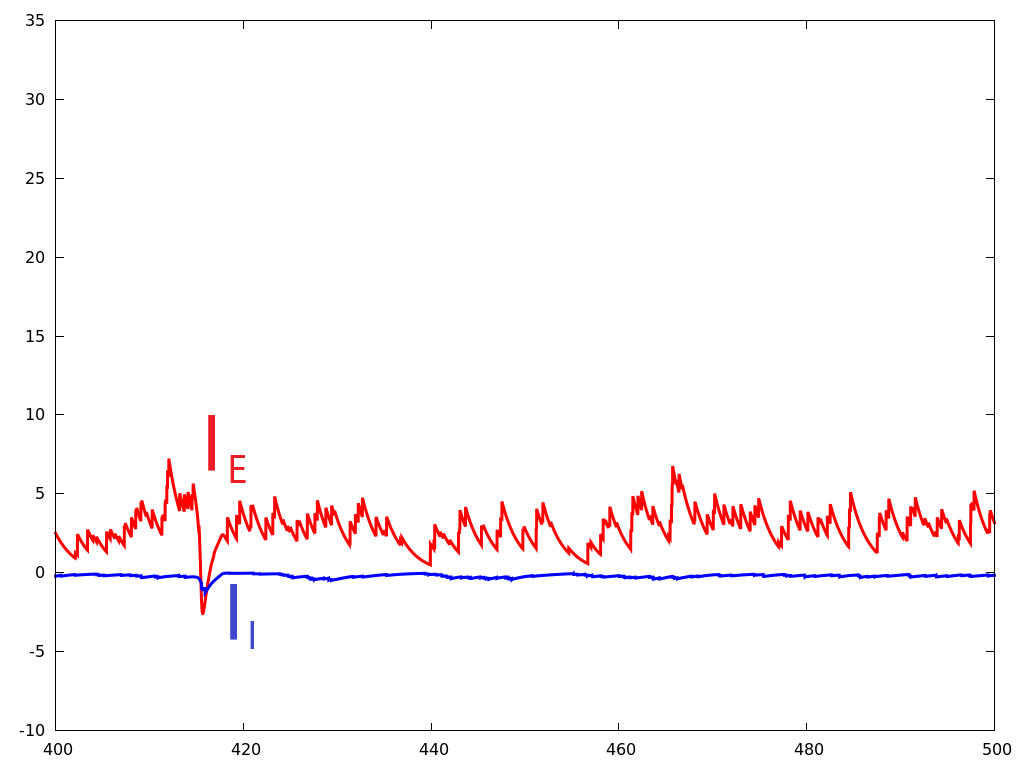}
\includegraphics[height=2.5cm,width=2.9cm]{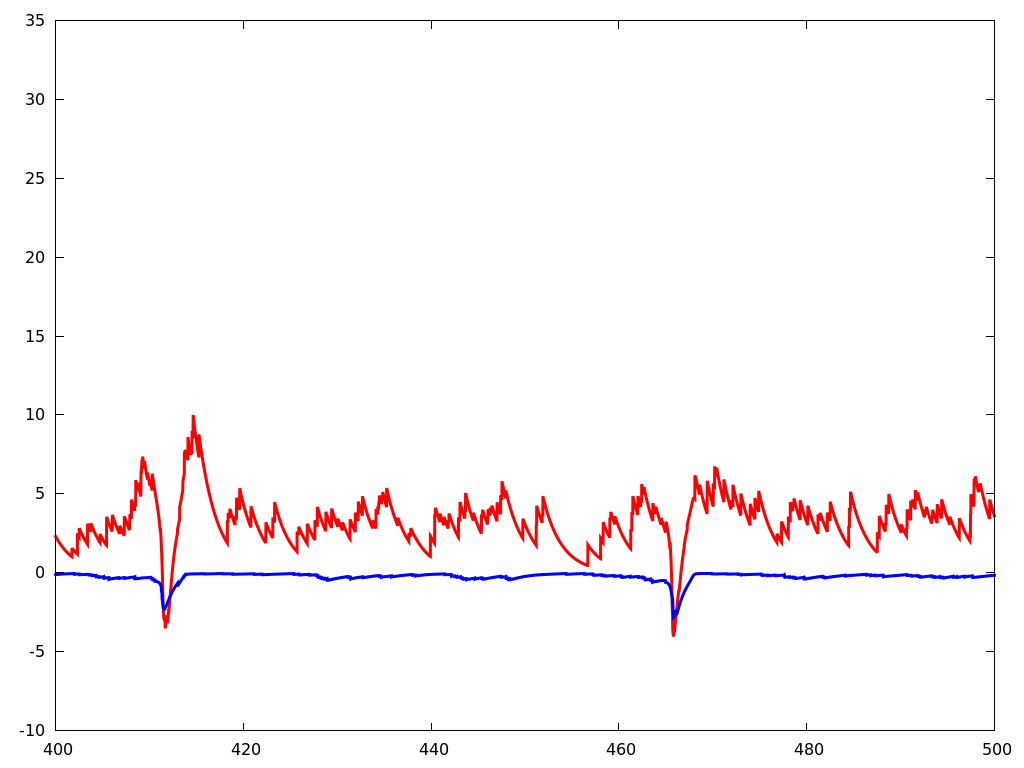}
\includegraphics[height=2.5cm,width=2.9cm]{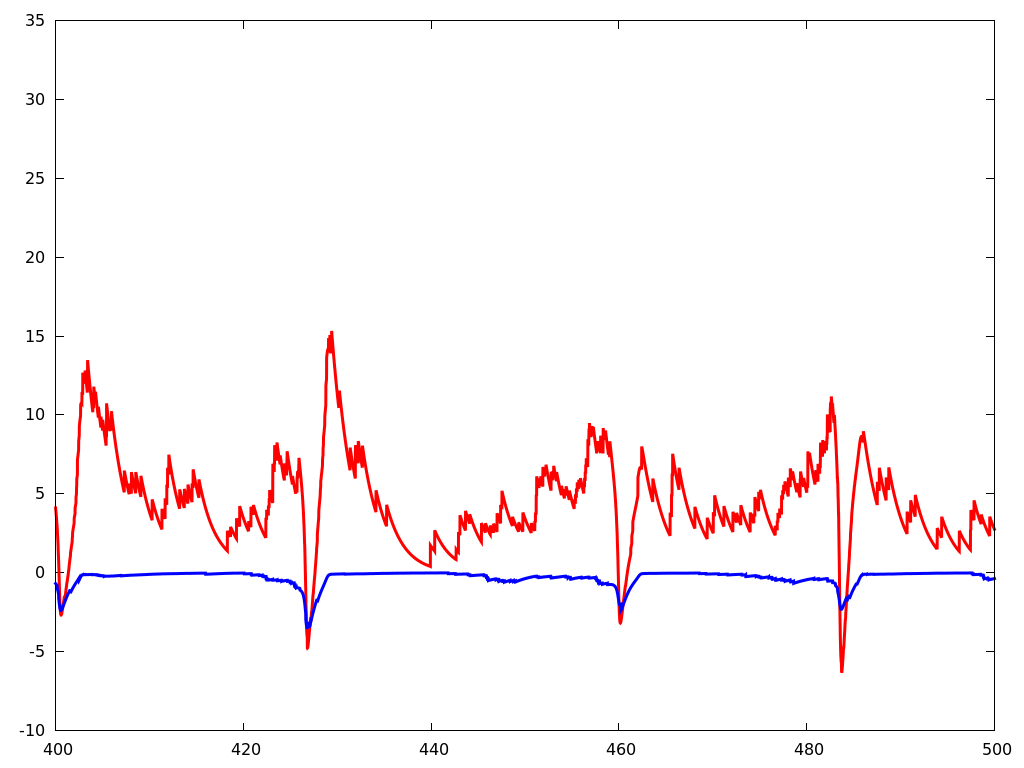}
\includegraphics[height=2.5cm,width=2.9cm]{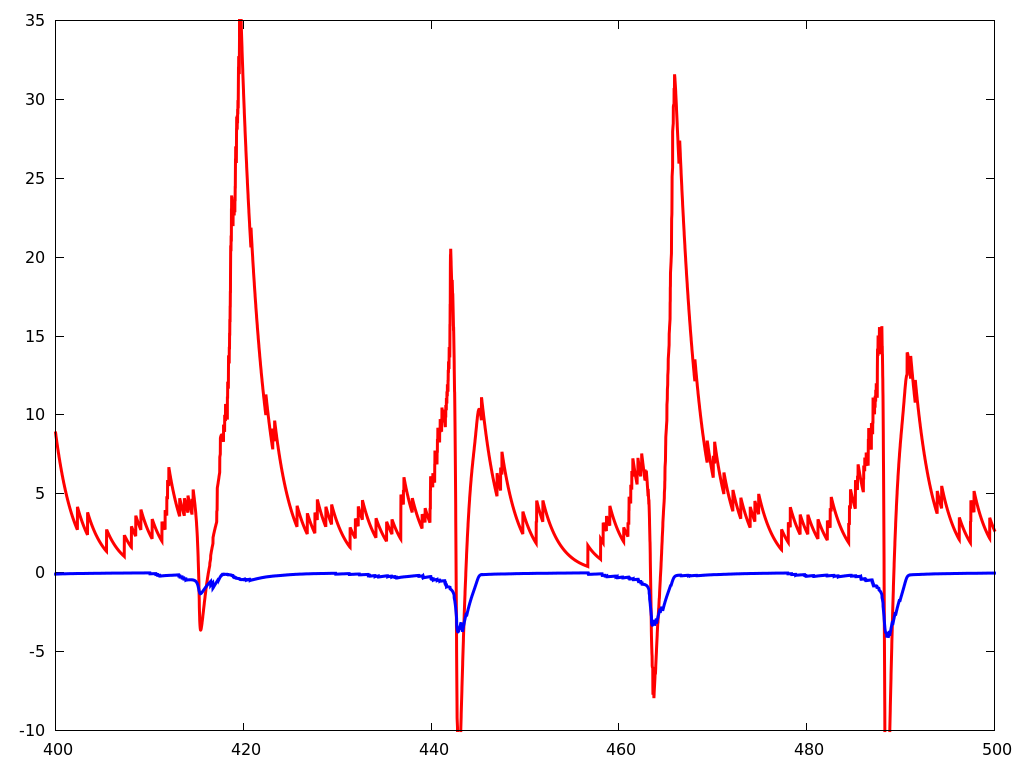}\\
\includegraphics[height=2.5cm,width=2.9cm]{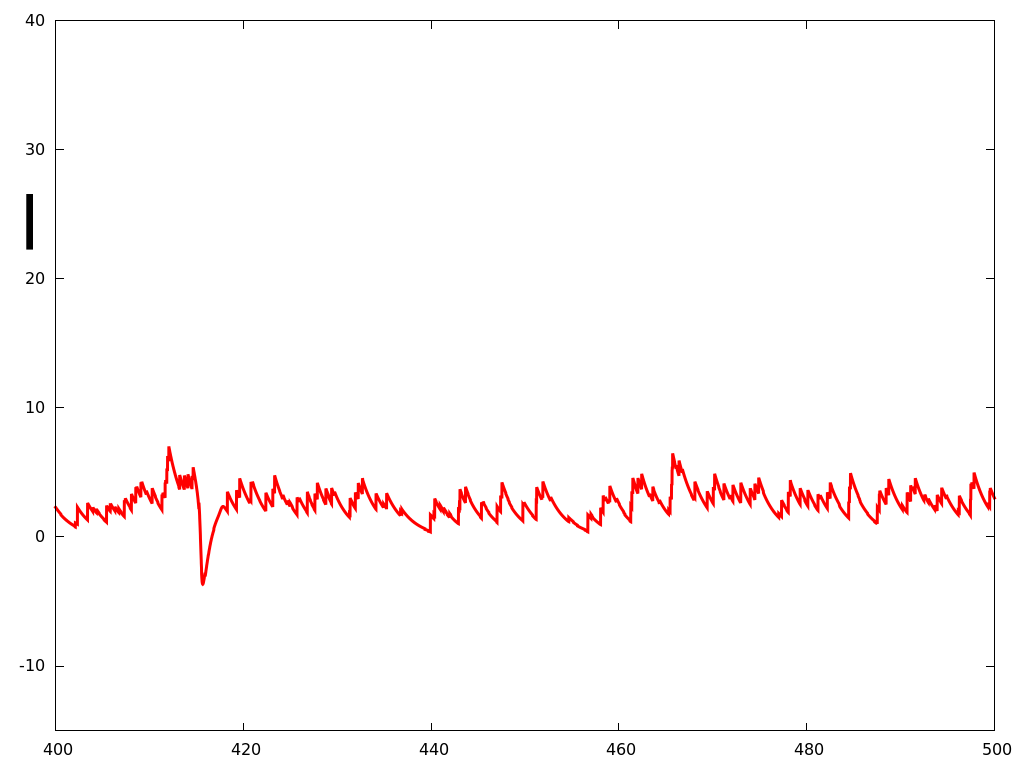}
\includegraphics[height=2.5cm,width=2.9cm]{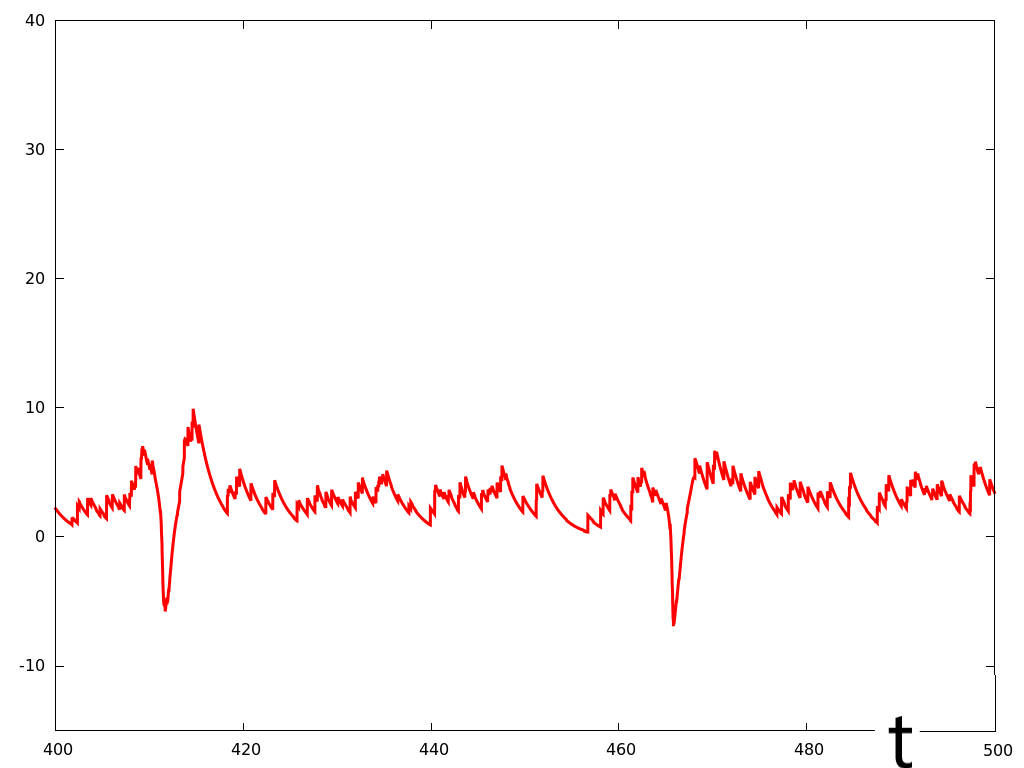}
\includegraphics[height=2.5cm,width=2.9cm]{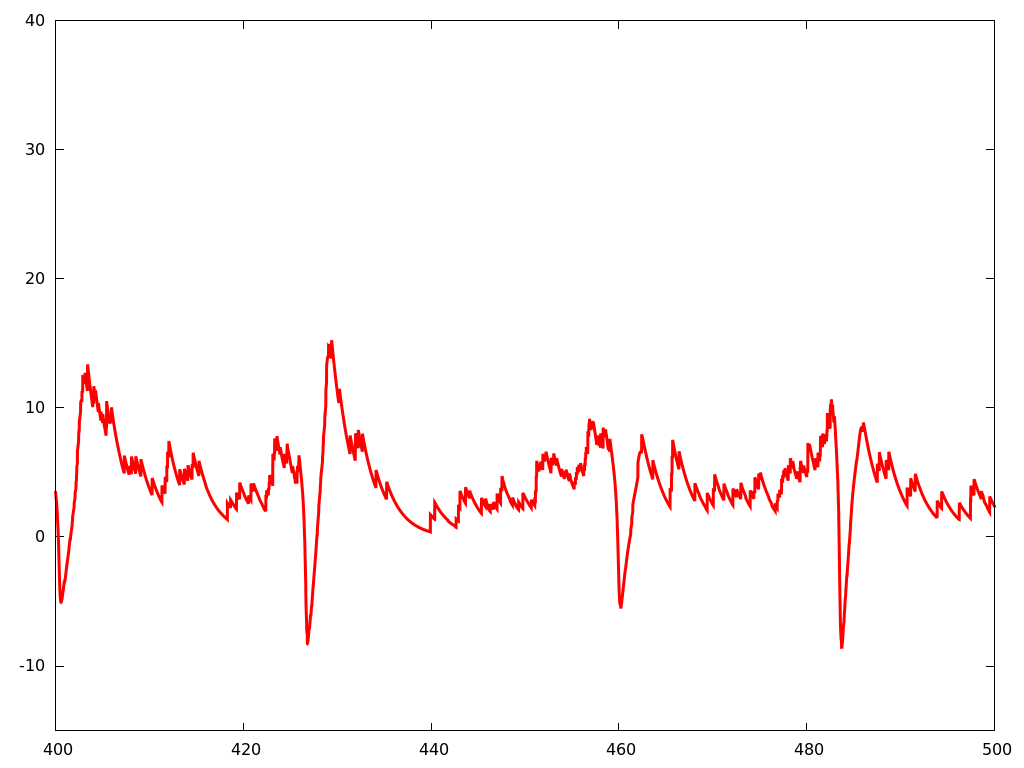}
\includegraphics[height=2.5cm,width=2.9cm]{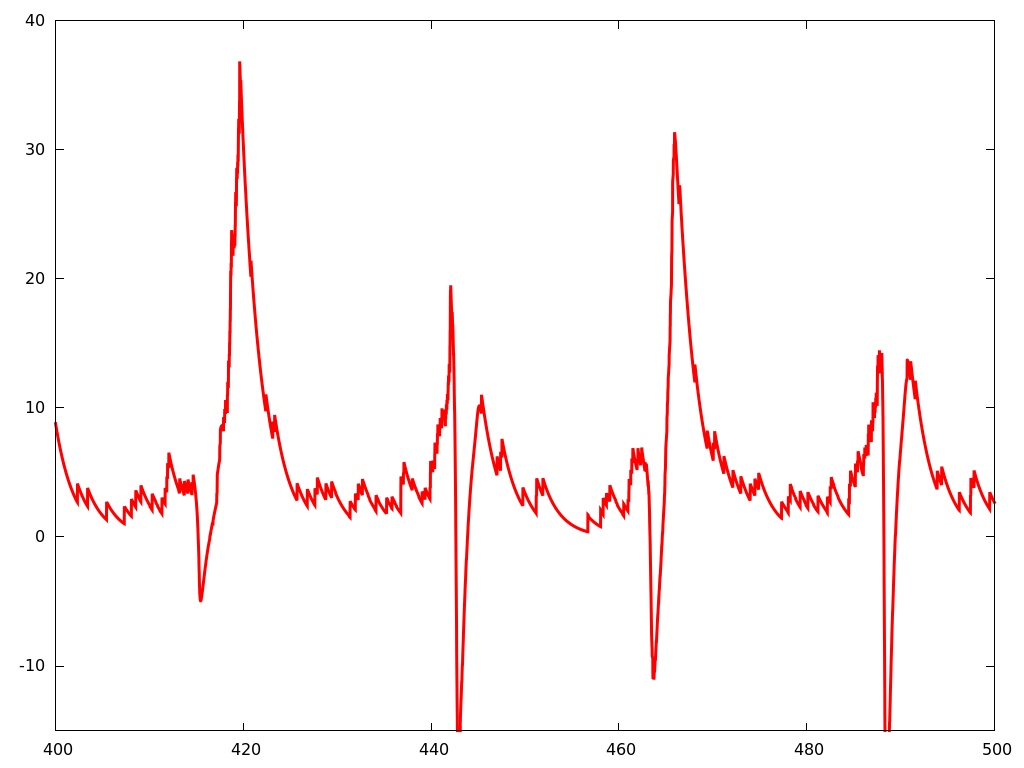}\\
\includegraphics[height=2.5cm,width=2.9cm]{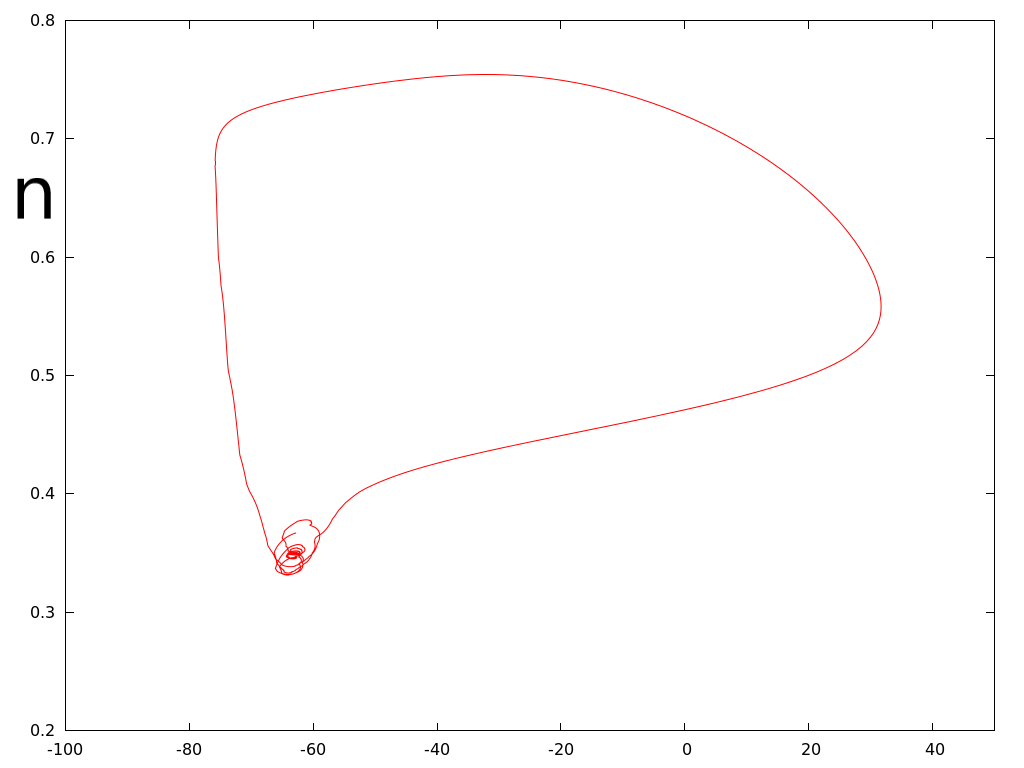}
\includegraphics[height=2.5cm,width=2.9cm]{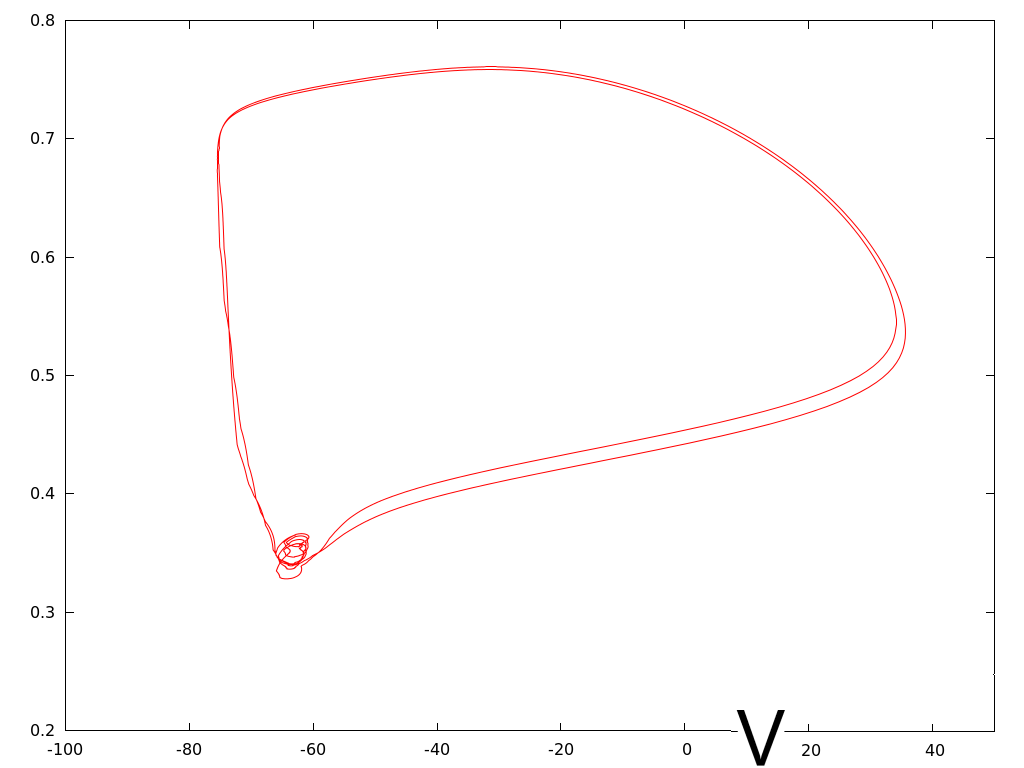}
\includegraphics[height=2.5cm,width=2.9cm]{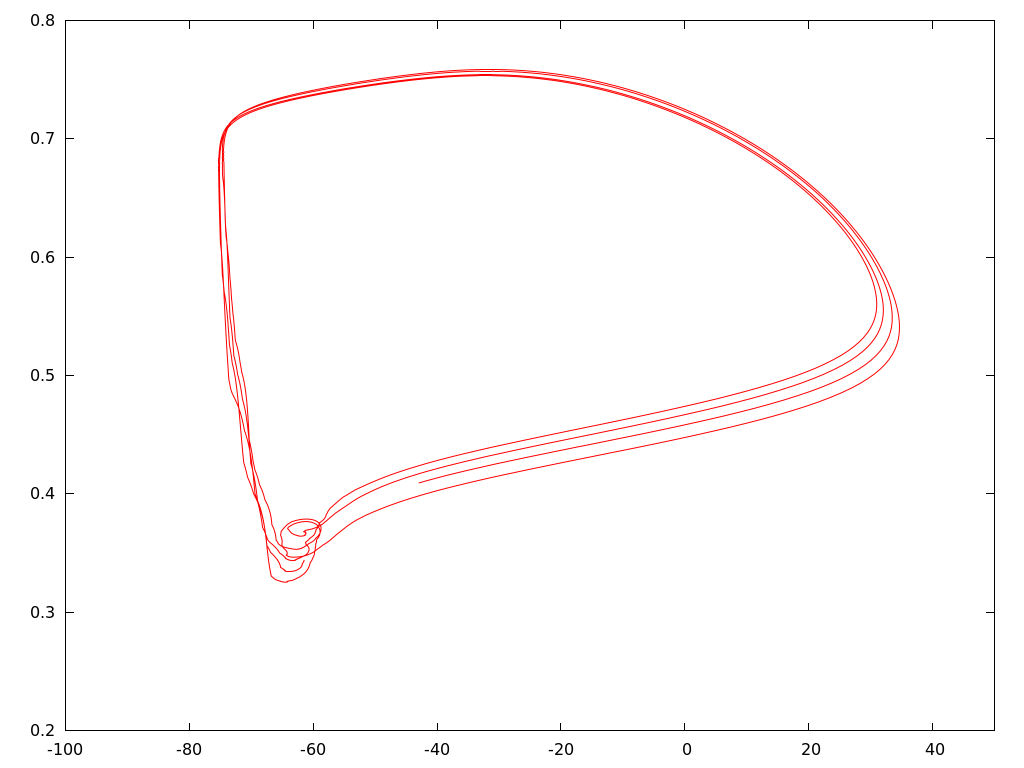}
\includegraphics[height=2.5cm,width=2.9cm]{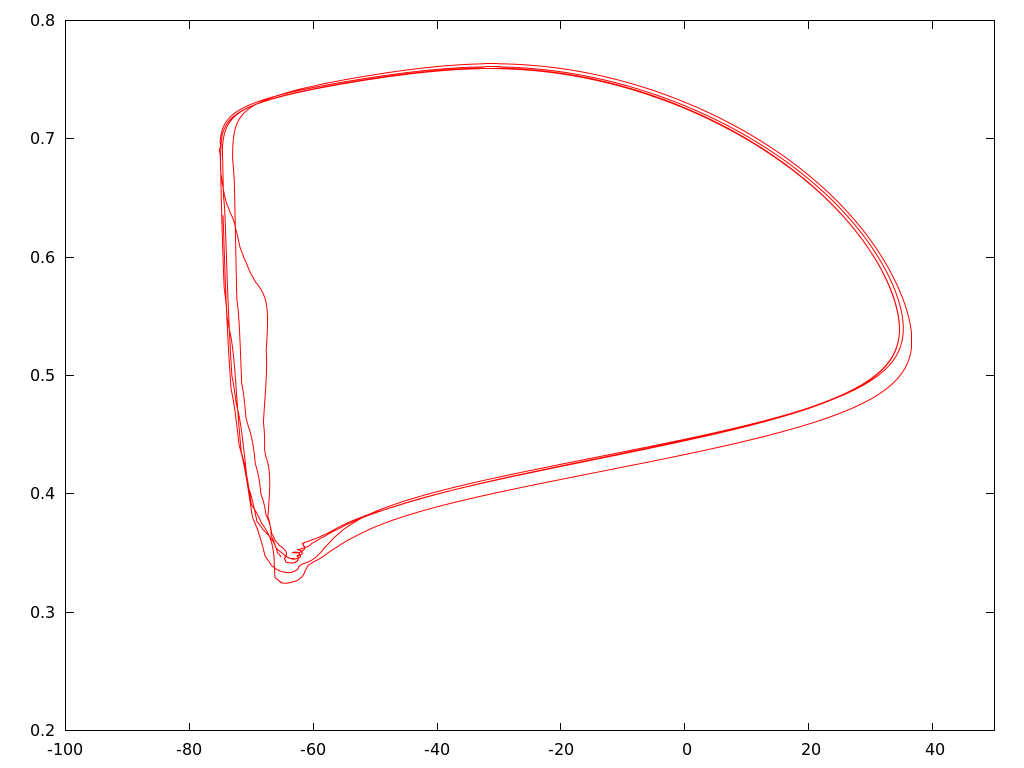}
\caption[width=.95\linewidth]{Simulation of system \eqref{eq:HH-Network}. This figure illustrates a path from random homogeneity to synchronization as the parameter $S^{EE}$ is increased. In this picture, the parameters $SII=SEI=SIE=0.01$ are set and each column from left to right corresponds to a specific value of $S^{EE}$. Respectively: $SEE=0.01, 0.017, 0.02$ and $0.03$.  The first row represent the rasterplot:  at each time the spiking neurons are represented by a point. On the top of the figure, in green, the $I$-neurons are plotted. $E$-neurons are plotted in red below. At left, the rasterplot illustrates a state where any event appears to be distinguishable. We call it random homogeneous activity. Increasing $S^{EE}$ induces synchronization. The second row represents the potential $V_1$ of neuron $\#1$ as a function of a time. The third row, the $E$-conductance $g_E$ in red and the $I$-conductance $g_I$ in blue for the same neuron.  The fourth row, the $E$-current denoted by $I_E$ in red and the $I$-current denoted by $I_I$ in blue. The fifth row, the sum of $E$ and $I$ currents which plays the role of $I(t)$ in a single HH equation.  The last row illustrates the projection of the trajectory of neuron $\#1$ in the $(V,n)$ phase space.}
\end{center}
\end{figure}

\begin{figure}
 \label{fig:NhotoSync2}
\begin{center}
\includegraphics[height=2.9cm,width=2.9cm]{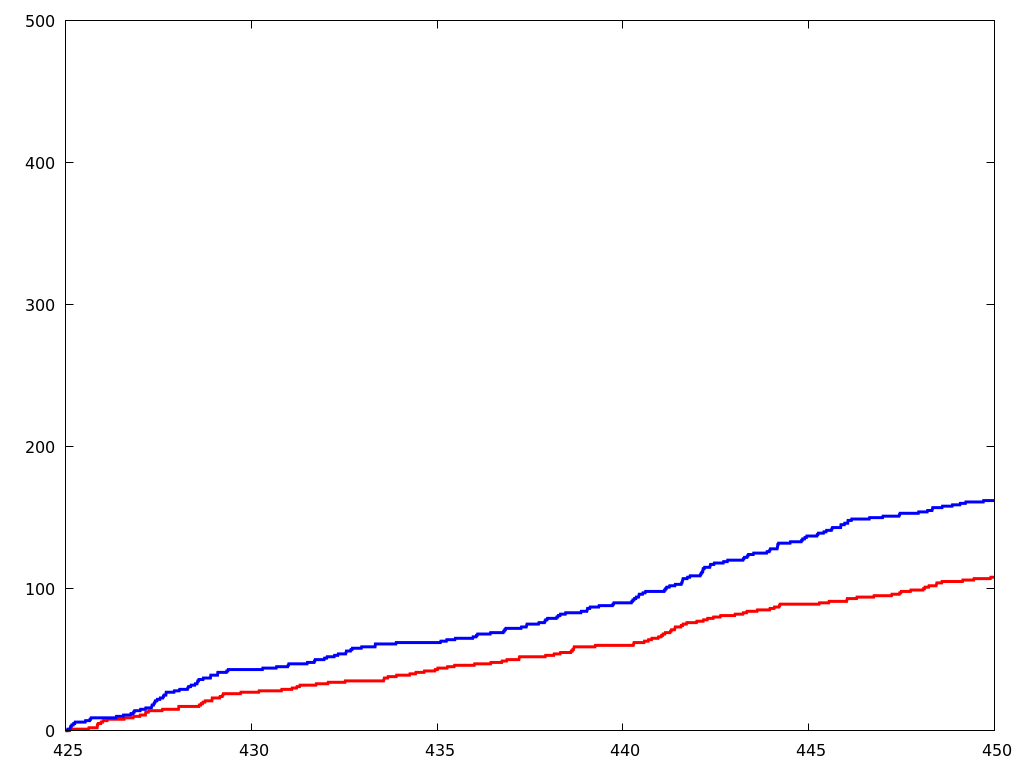}
\includegraphics[height=2.9cm,width=2.9cm]{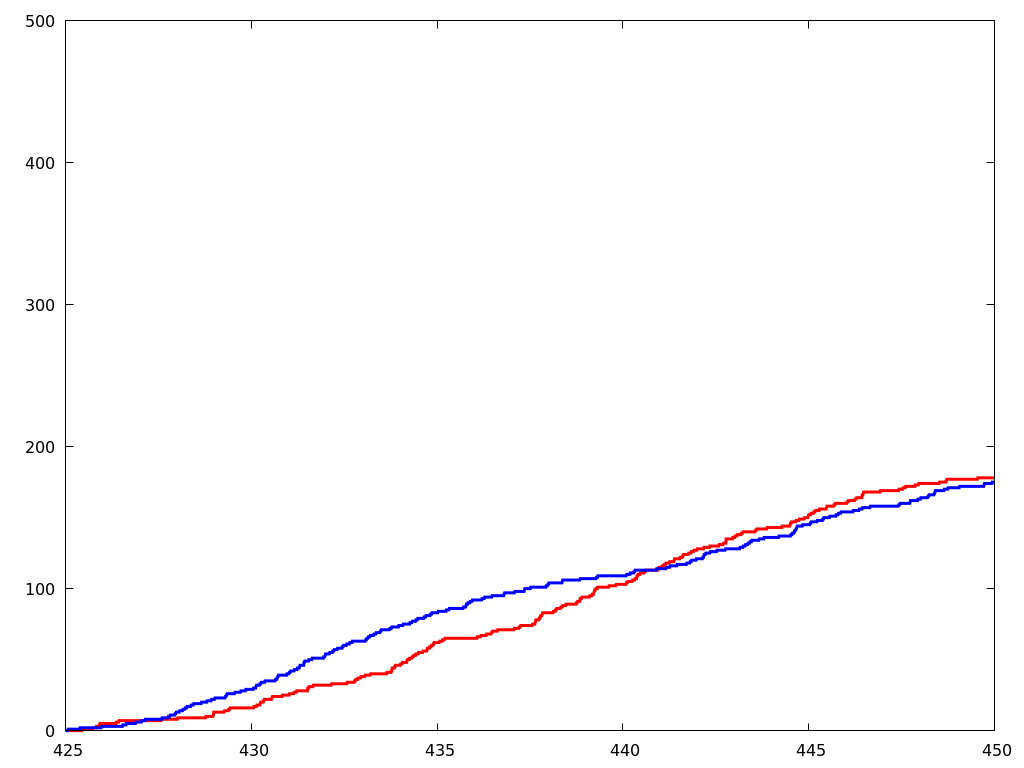}
\includegraphics[height=2.9cm,width=2.9cm]{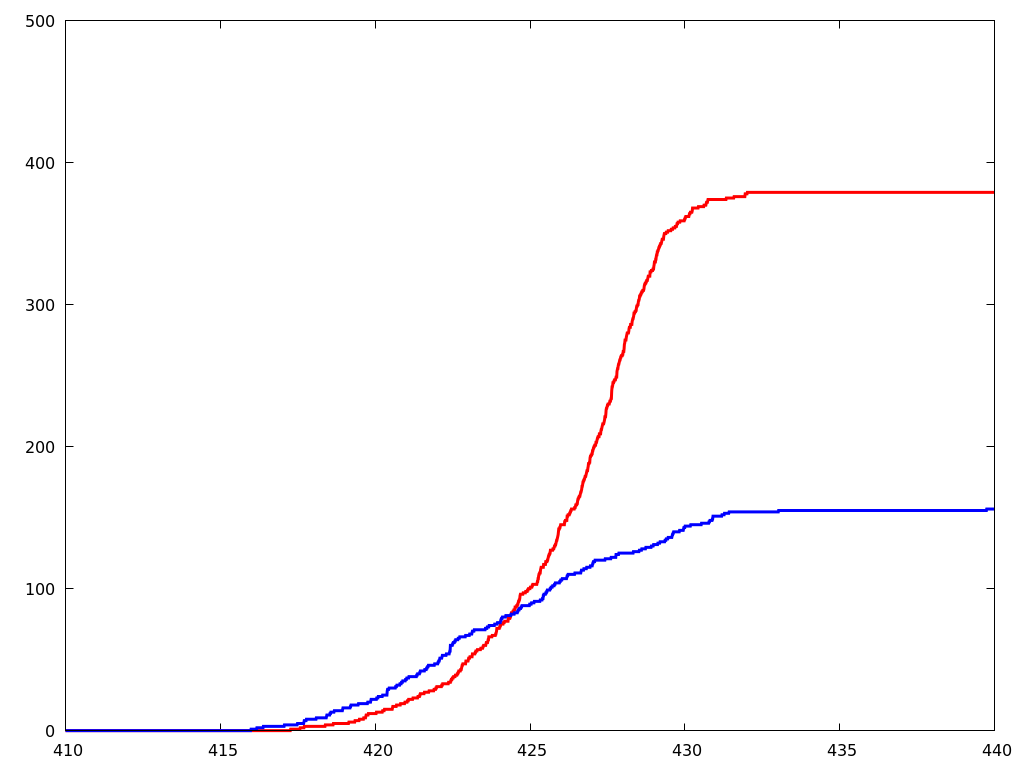}
\includegraphics[height=2.9cm,width=2.9cm]{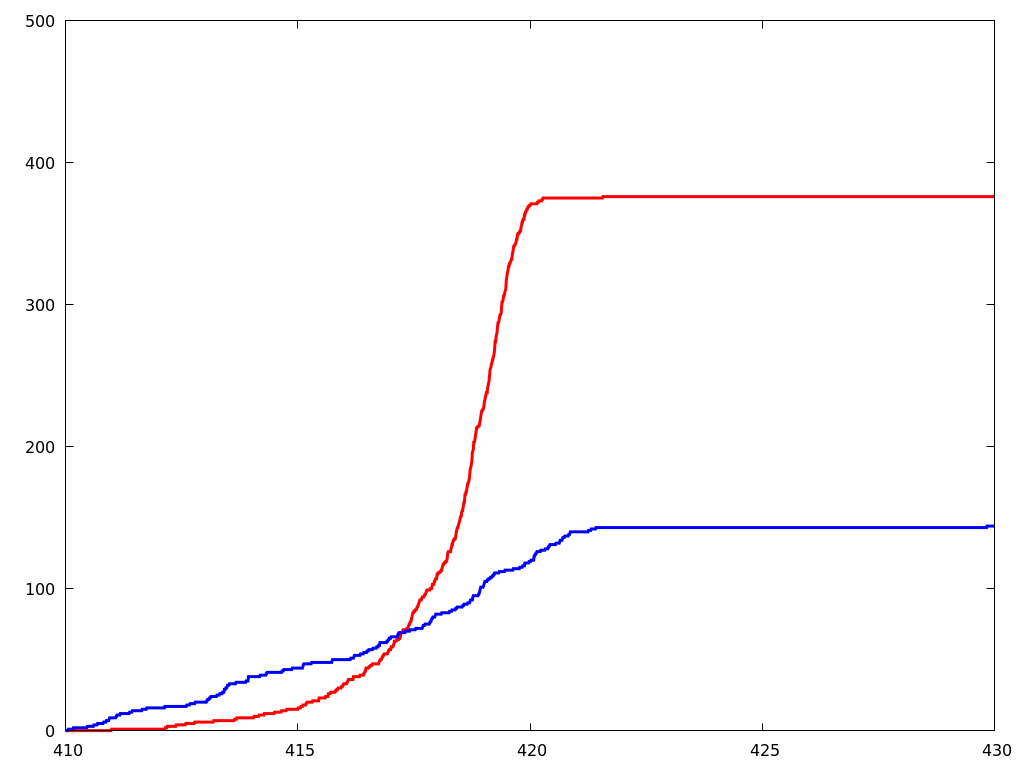}\\
\includegraphics[height=2.9cm,width=2.9cm]{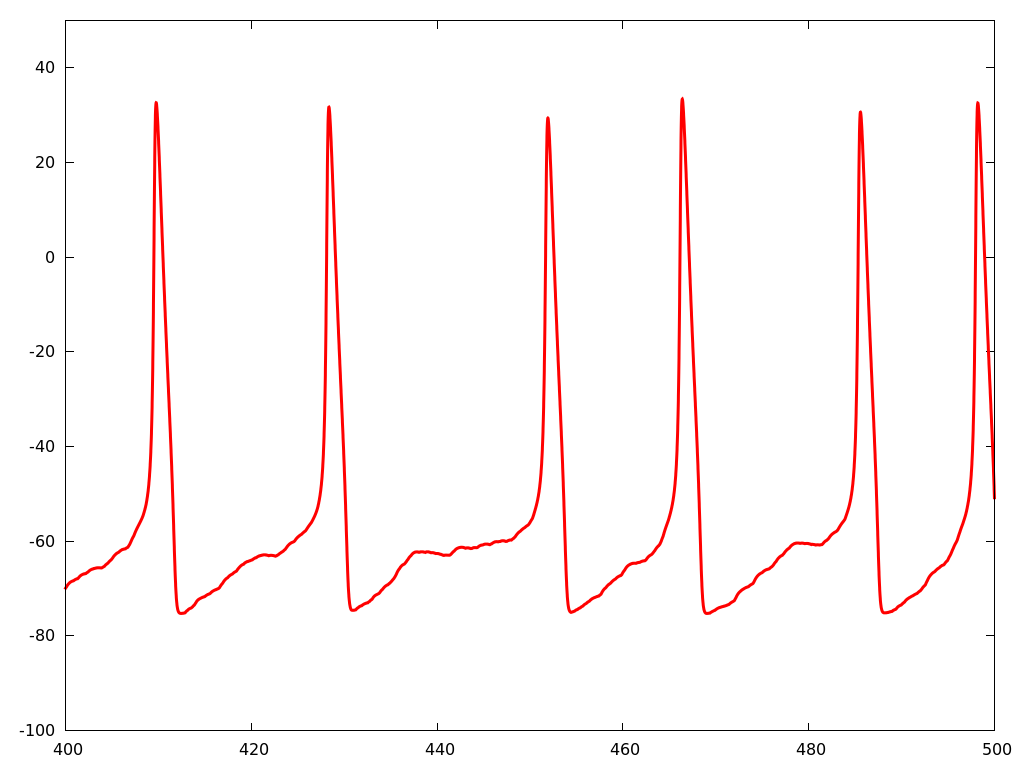}
\includegraphics[height=2.9cm,width=2.9cm]{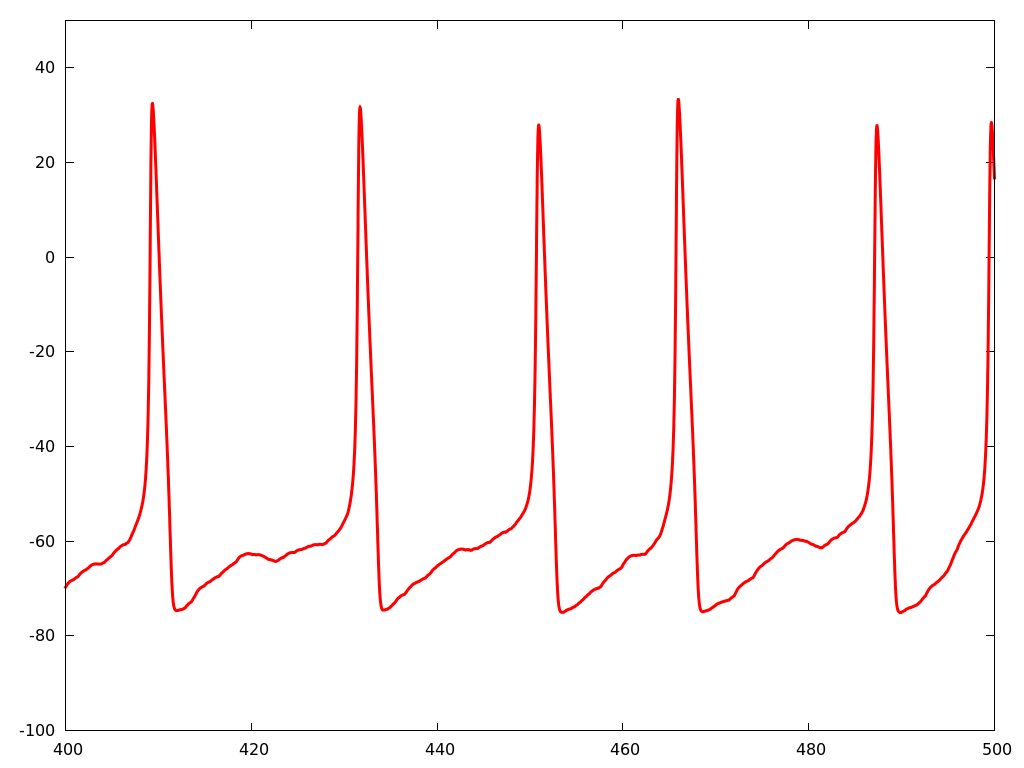}
\includegraphics[height=2.9cm,width=2.9cm]{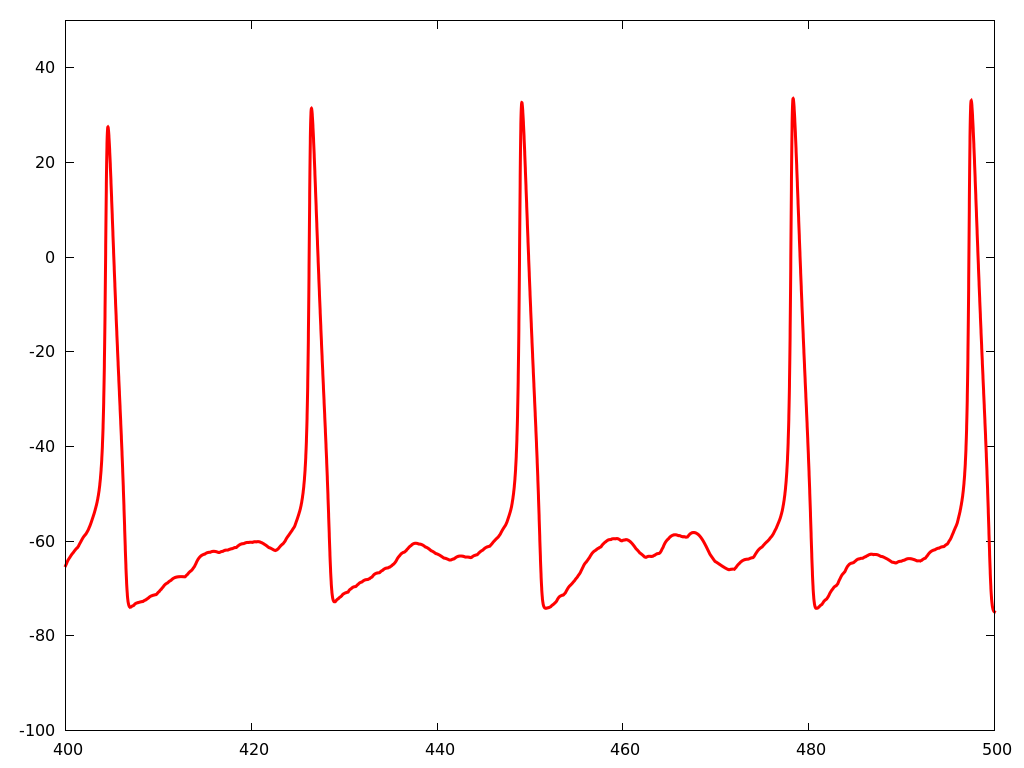}
\includegraphics[height=2.9cm,width=2.9cm]{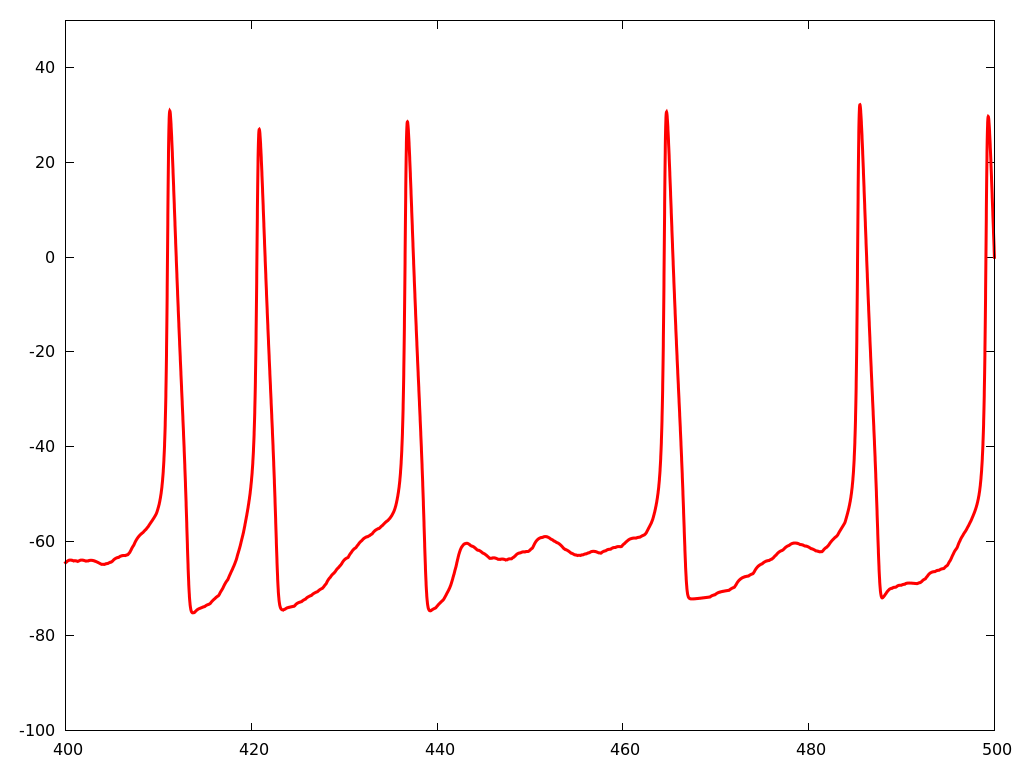}\\
\includegraphics[height=2.9cm,width=2.9cm]{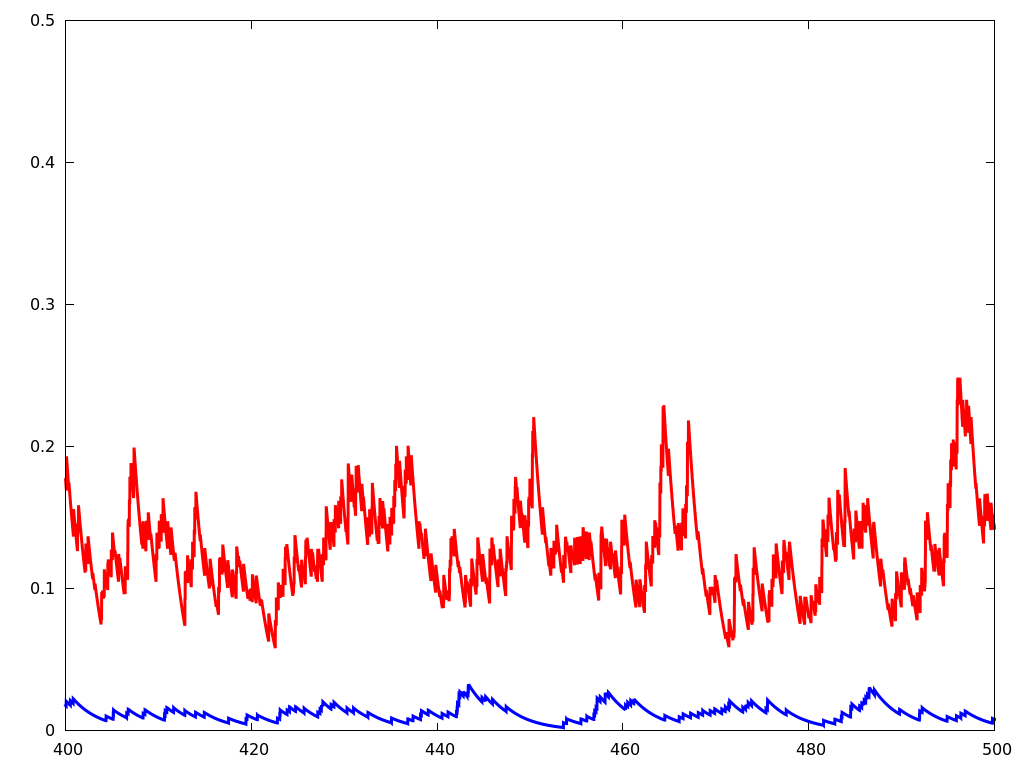}
\includegraphics[height=2.9cm,width=2.9cm]{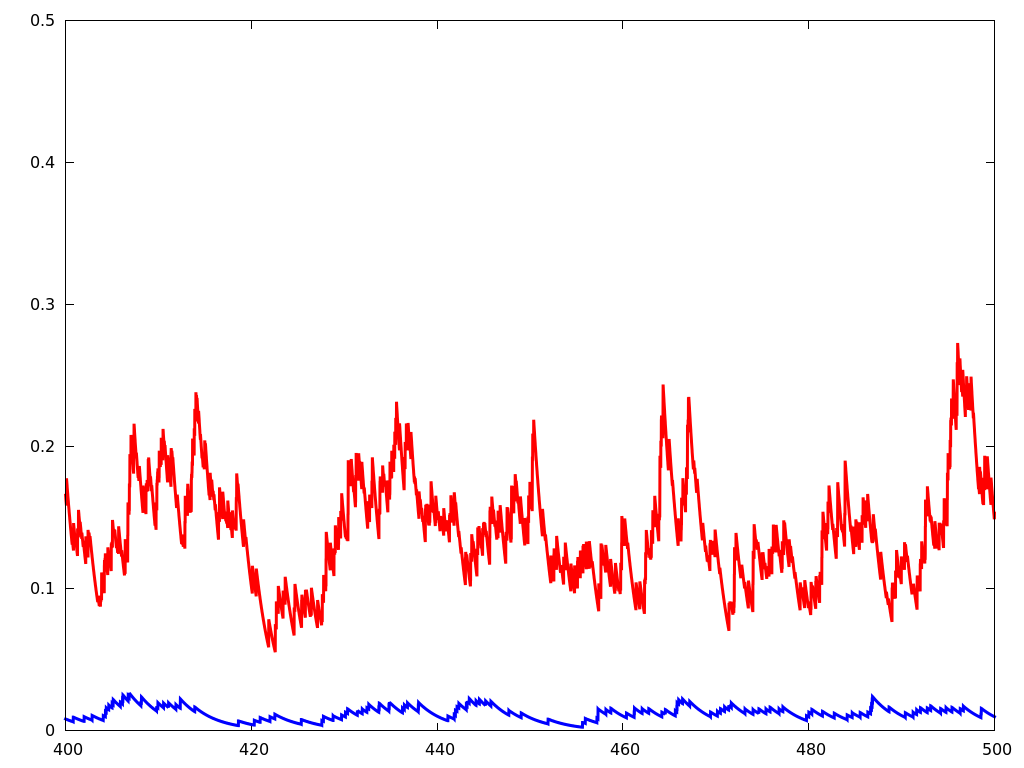}
\includegraphics[height=2.9cm,width=2.9cm]{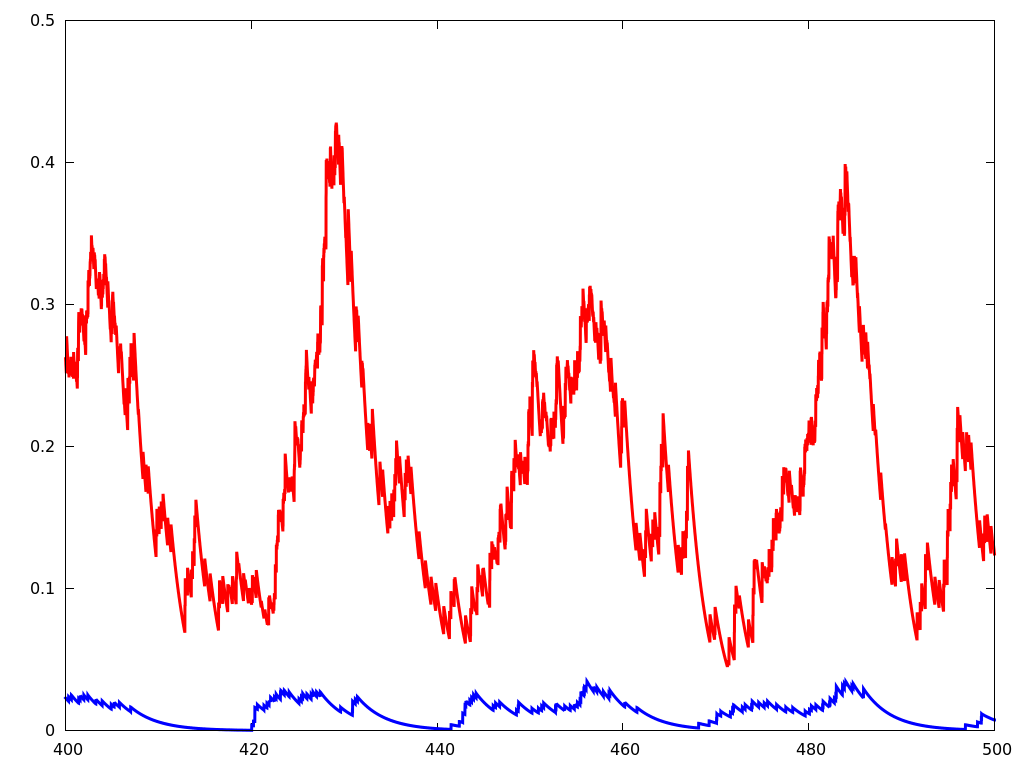}
\includegraphics[height=2.9cm,width=2.9cm]{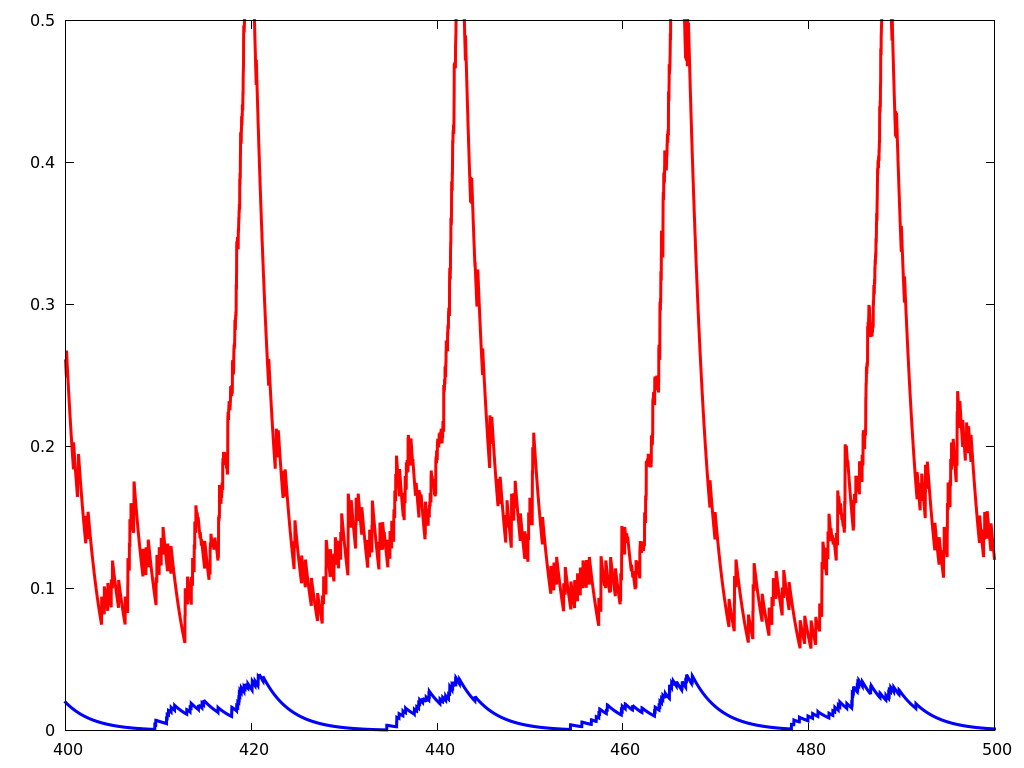}\\
\includegraphics[height=2.9cm,width=2.9cm]{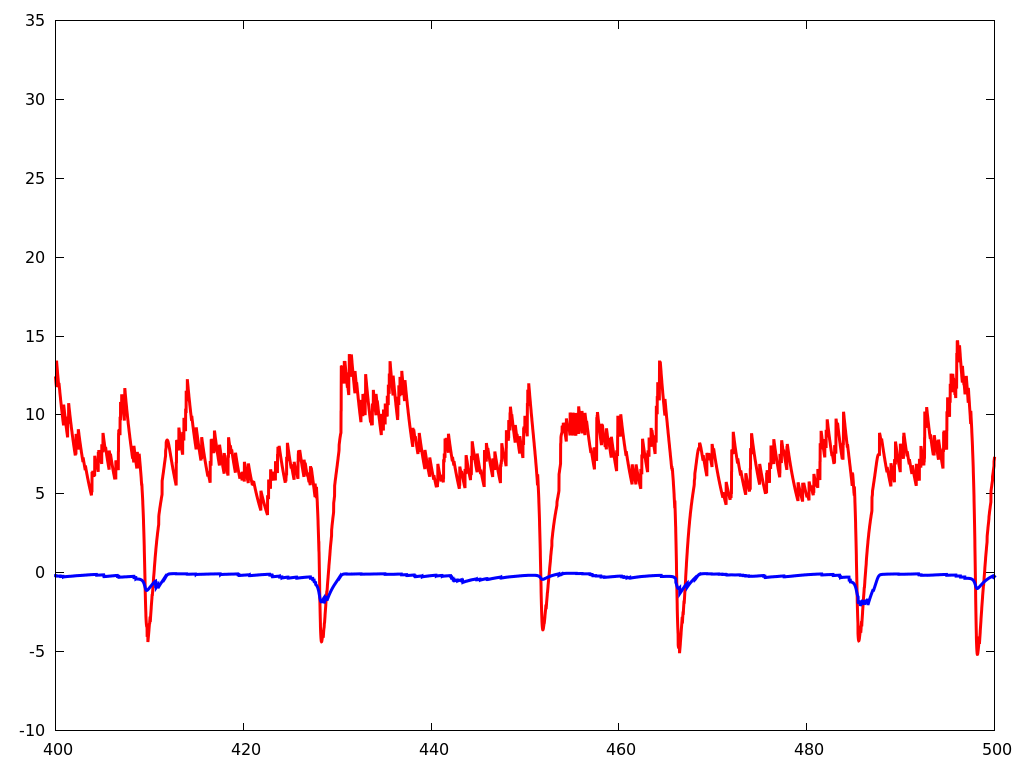}
\includegraphics[height=2.9cm,width=2.9cm]{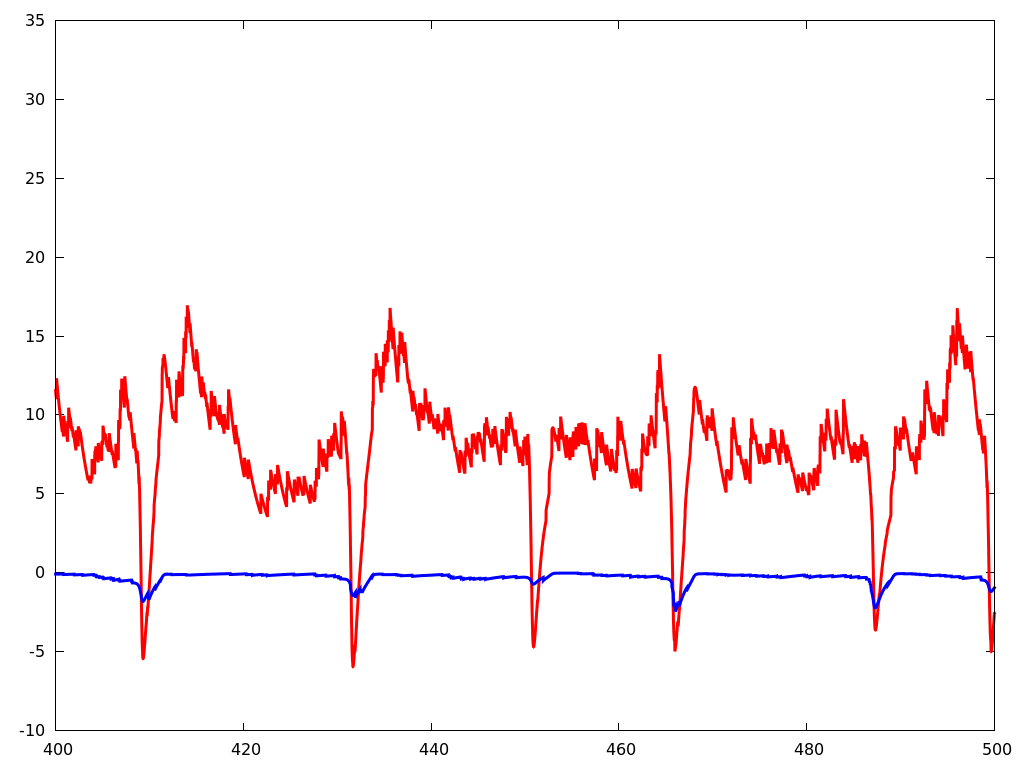}
\includegraphics[height=2.9cm,width=2.9cm]{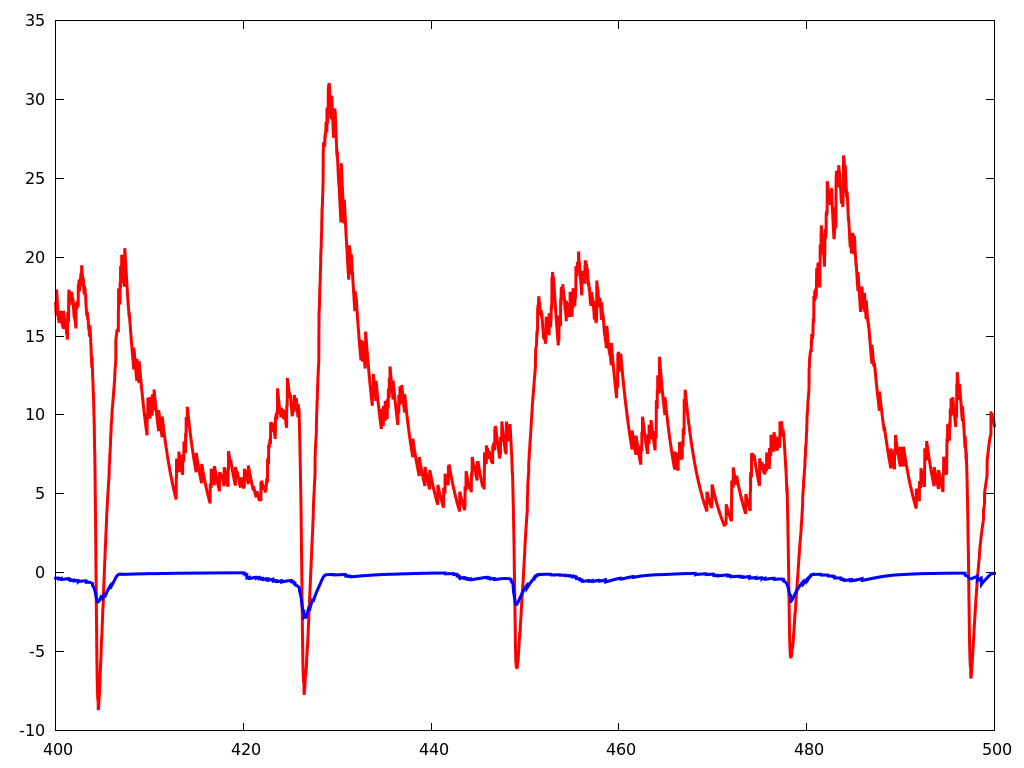}
\includegraphics[height=2.9cm,width=2.9cm]{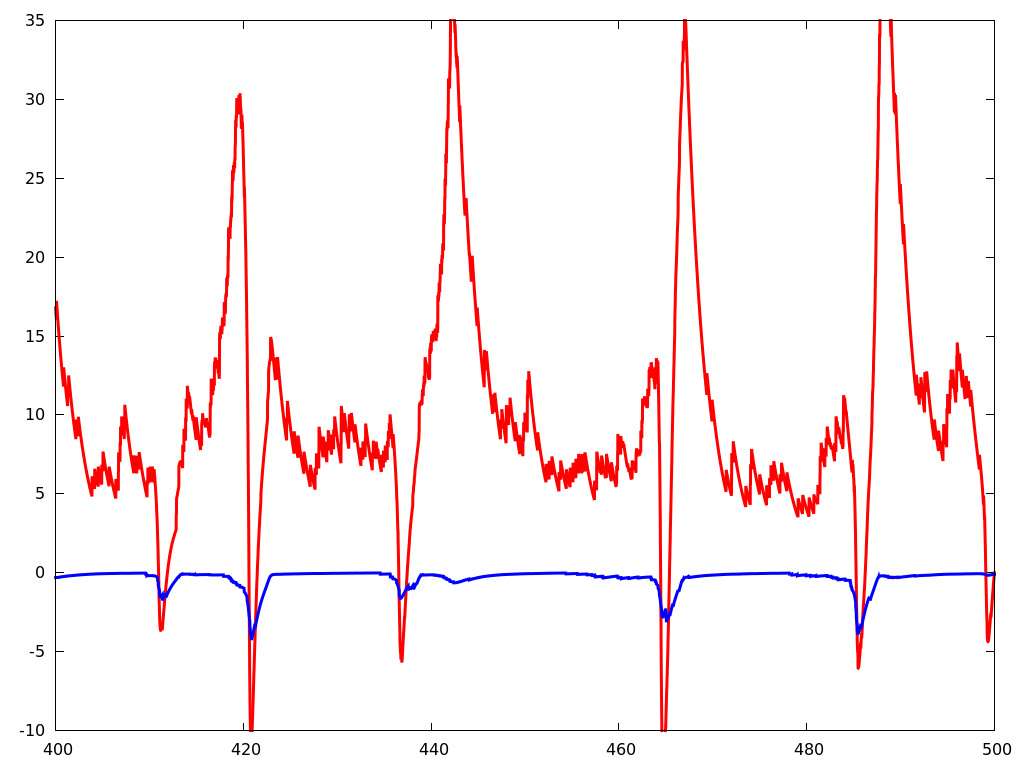}\\
\includegraphics[height=2.9cm,width=2.9cm]{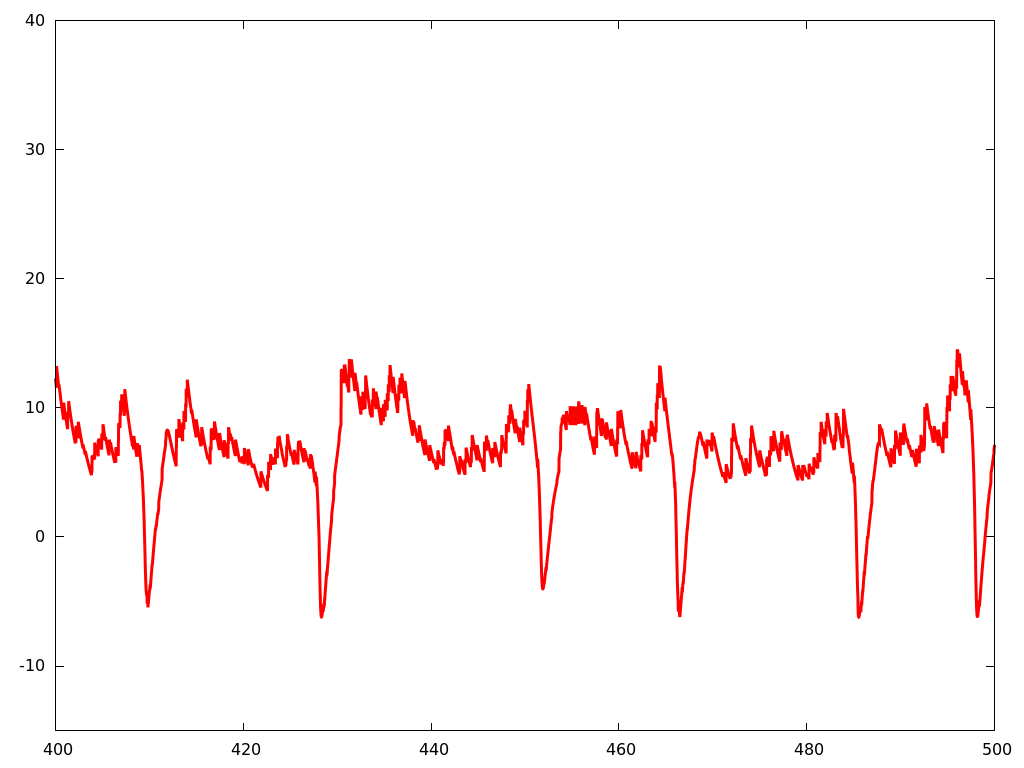}
\includegraphics[height=2.9cm,width=2.9cm]{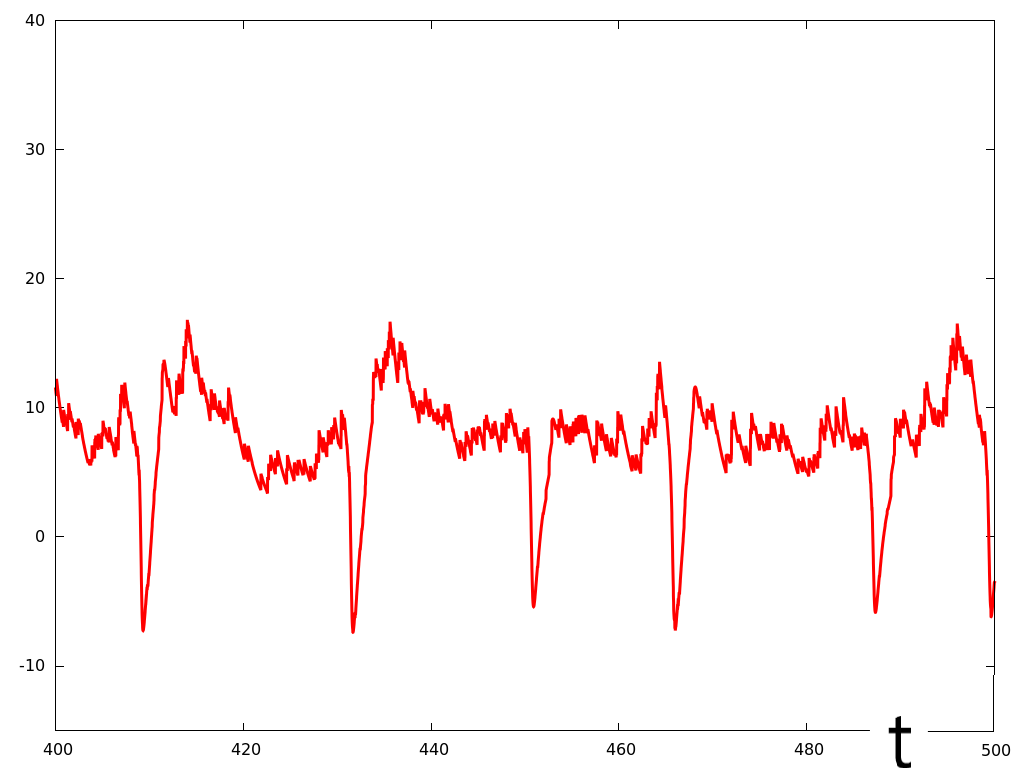}
\includegraphics[height=2.9cm,width=2.9cm]{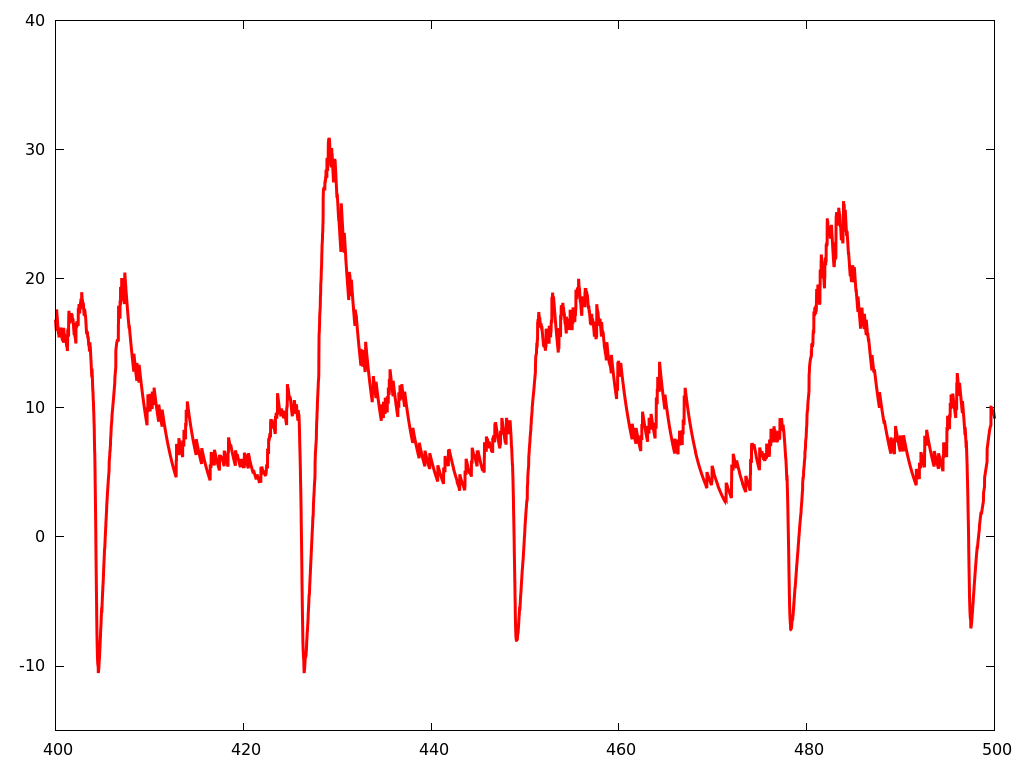}
\includegraphics[height=2.9cm,width=2.9cm]{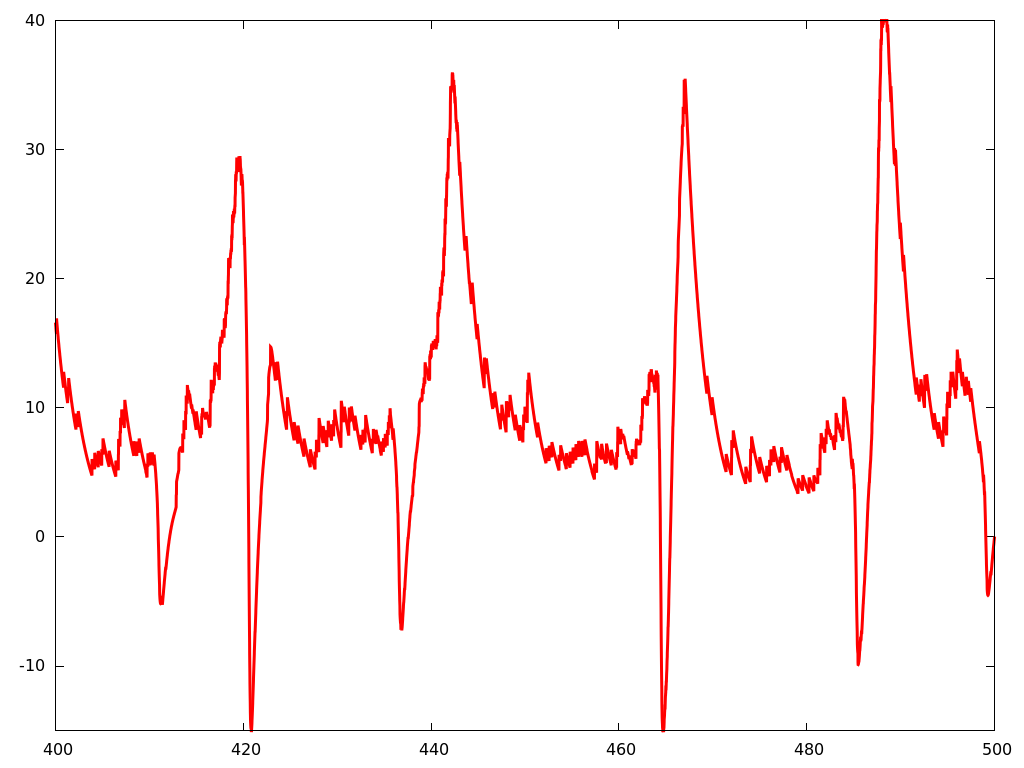}\\
\includegraphics[height=2.9cm,width=2.9cm]{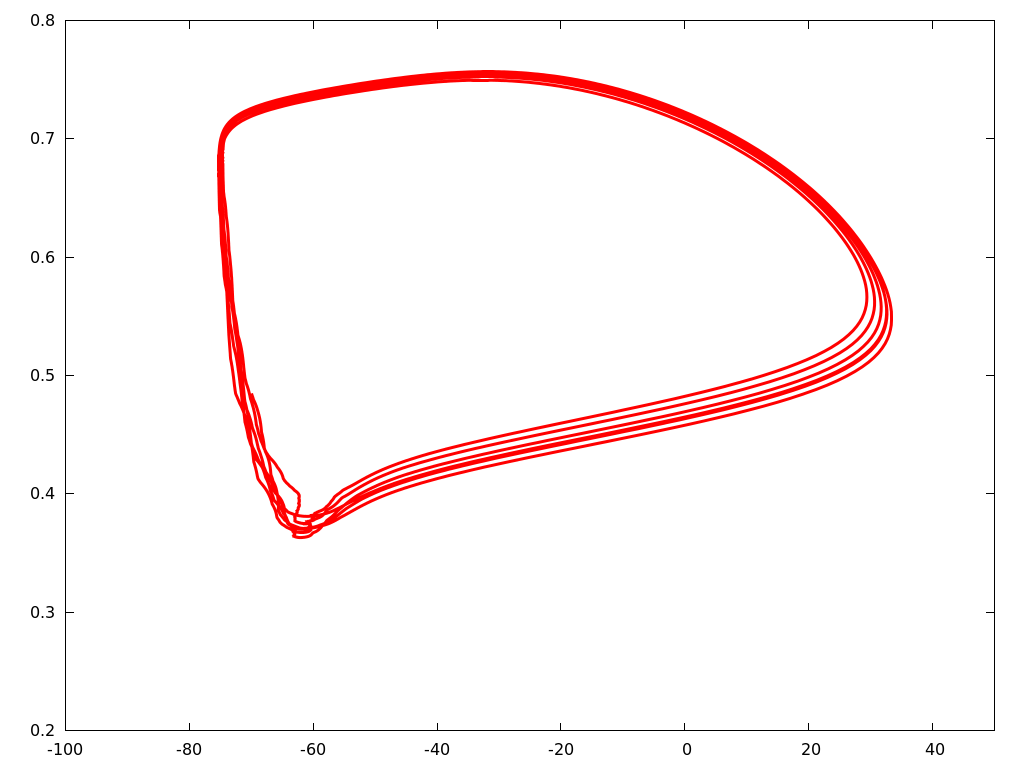}
\includegraphics[height=2.9cm,width=2.9cm]{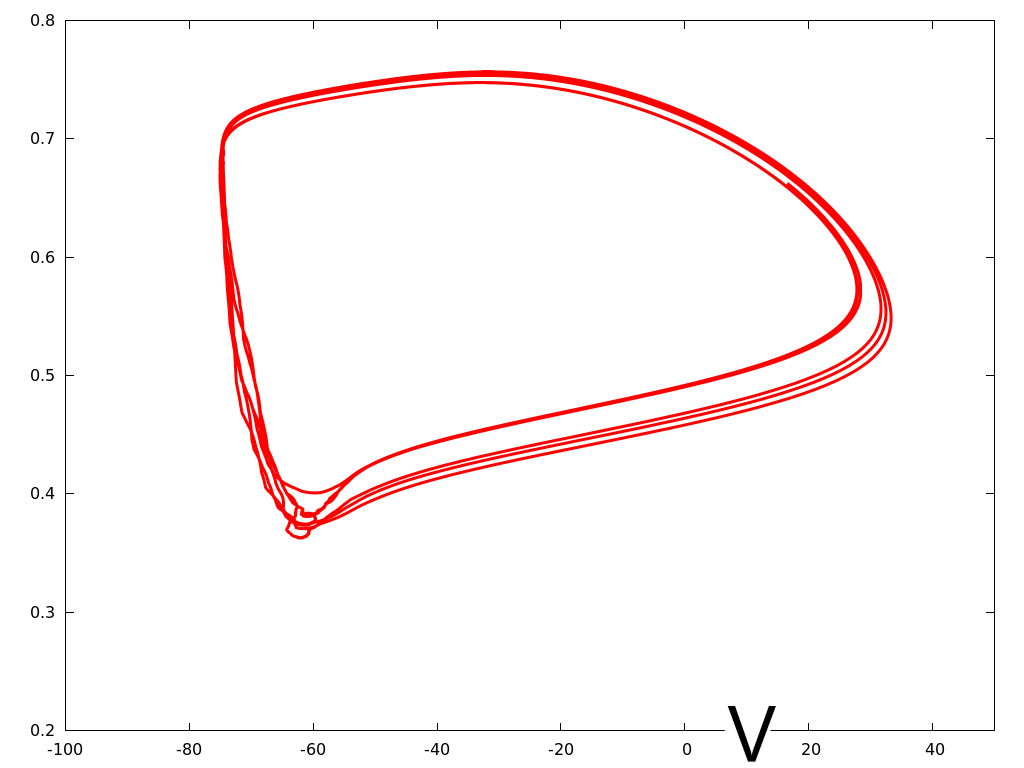}
\includegraphics[height=2.9cm,width=2.9cm]{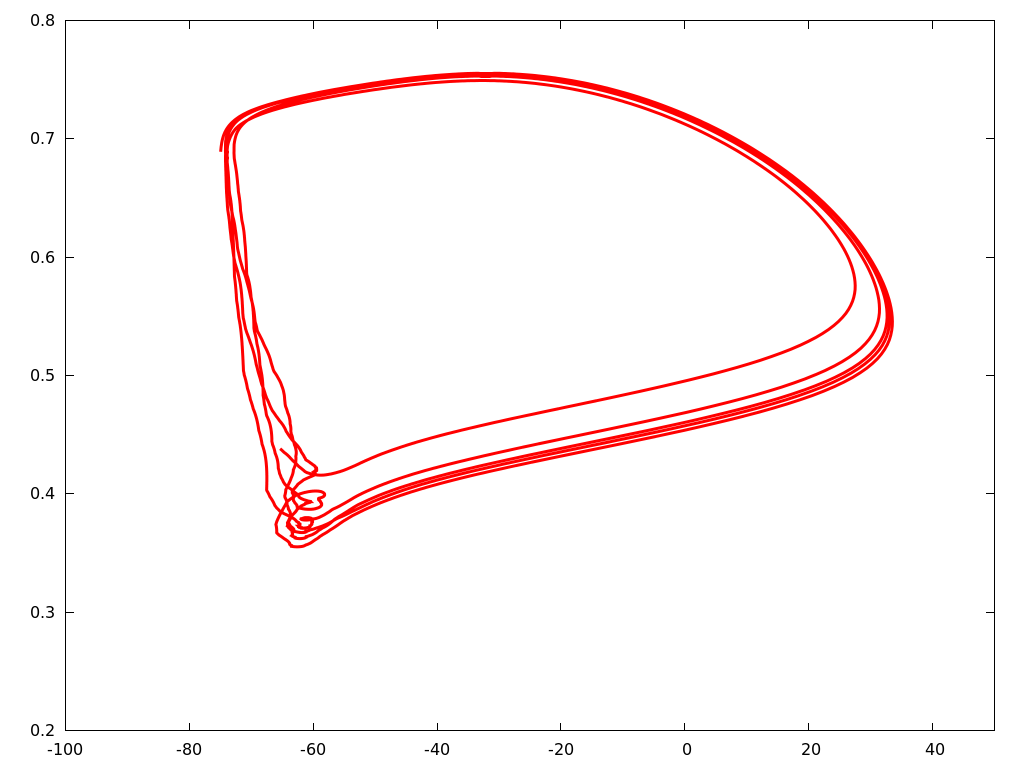}
\includegraphics[height=2.9cm,width=2.9cm]{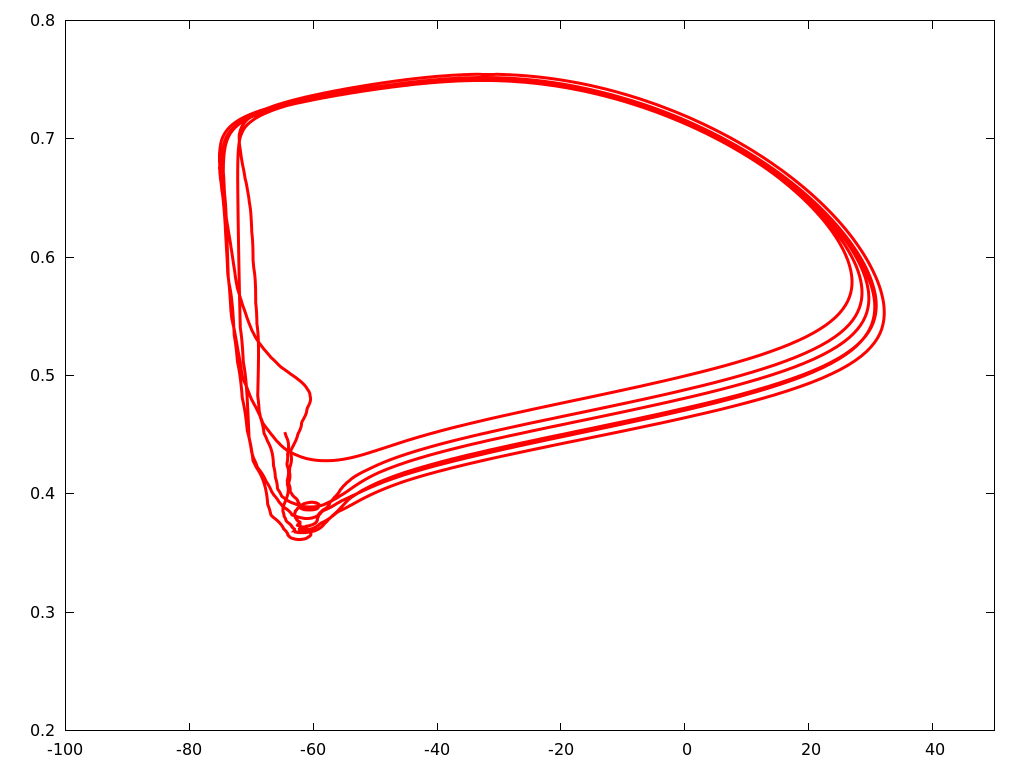}\\
\caption{Simulation of system \eqref{eq:HH-Network}. This figure illustrates a path from random homogeneity to synchronization as the parameter $S^{EE}$ is increased. In this picture, the parameters $SII=SEI=SIE=0.01$ are fixed and each column from left to right corresponds to a specific value of $S^{EE}$. Respectively: $SEE=0.01, 0.017, 0.02$ and $0.03$.  The first row represents the number of $E$ and $I$-spikes occurring during an identified event.  The rows 2 to 6 are analog to those of figure 14, but for a I-neuron.  }
\end{center}
\end{figure}

\begin{figure}
\begin{center}
 \label{fig:PartialSynchro}
\includegraphics[height=6cm,width=7cm]{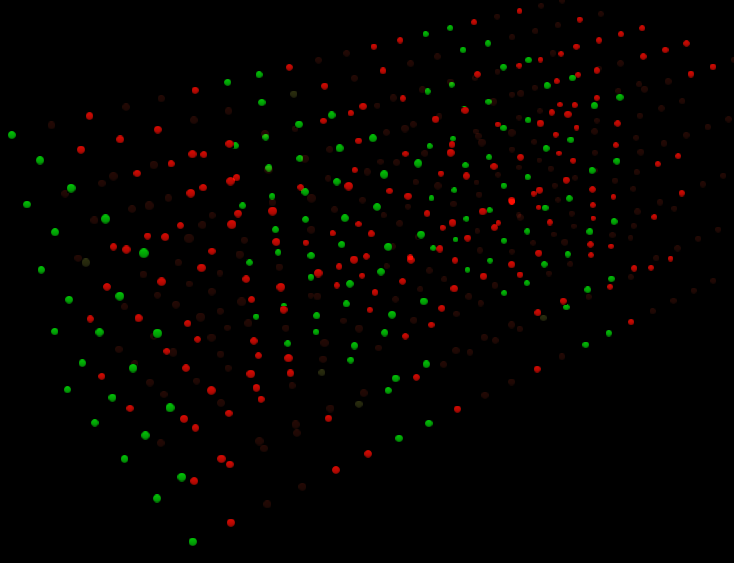}
\caption{Simulation of equation \eqref{eq:HH-Network} for $SII=SEI=SIE=0.01$ and $SEE =0.017$. This figure illustrates partial synchronization. Only some part of the neurons are spiking during the time interval $[425,450]$. The neurons which spike during this interval appear in highlighted color. Red for $E-$neurons and green for $I-$neurons.}
\end{center}
\end{figure}

\begin{table}
\label{ta:SIE}
\begin{tabular}{|c|c|c|c|}
\hline
$S^{IE}$&Ess&Iss\\
\hline
0.005&11.12&44.72\\
\hline
0.01&11.4933&48.48\\
\hline
0.02&11.7867&52.56\\
\hline
0.03&11.7333&60.88\\
\hline
\end{tabular}
\caption{Variation of $S^{IE}$}
\end{table}

\begin{table}
\label{ta:SEI}
\begin{tabular}{|c|c|c|c|}
\hline
$S^{EI}$&Ess&Iss\\
\hline
0.001&13.84&48.64\\
\hline
0.01&11.4933&48.48\\
\hline
0.02&10.2933&47.28\\
\hline
0.03&9.6&43.6\\
\hline
\end{tabular}
\caption{Variation of $S^{EI}$}
\end{table}
  
\begin{table}
\label{ta:SII}
\begin{tabular}{|c|c|c|c|}
\hline
$S^{II}$&Ess&Iss\\
\hline
0.005&11.7067&47.68\\
\hline
0.01&11.4933&48.48\\
\hline
0.02&11.9467&45.84\\
\hline
0.03&11.5733&44.16\\
\hline
\end{tabular}
\caption{Variation of $S^{II}$}
\end{table}

%
%


%
%



\end{document}